\title[]{On Frustration-Free Quantum Spin Models}
\author[D. P. Ojito]{Danilo Polo Ojito}
\address{Department of Physics and Department of Mathematical Sciences, Yeshiva University 
	\\New York, NY 10016, USA \\
	\href{mailto:danilo.poloojito@yu.edu}{danilo.poloojito@yu.edu}}
\author[E. Prodan]{Emil Prodan}
\address{Department of Physics and
	Department of Mathematical Sciences 
	\\Yeshiva University, New York, NY 10016, USA \href{mailto:prodan@yu.edu}{prodan@yu.edu}
}
\author[T. Stoiber]{Tom Stoiber}
\address{Department of Physics and Department of Mathematical Sciences, Yeshiva University 
	\\New York, NY 10016, USA \\
	\href{mailto:tom.stoiber@yu.edu}{tom.stoiber@yu.edu}}
\date{\today}
\newtheorem{theorem}{Theorem}[section]
\newtheorem{definition}[theorem]{Definition}
\newtheorem{proposition}[theorem]{Proposition}
\newtheorem{lemma}[theorem]{Lemma}
\newtheorem{corollary}[theorem]{Corollary}
\newtheorem{remark}[theorem]{Remark}
\newtheorem{example}[theorem]{Example}
\newtheorem{conjecture}[theorem]{Conjecture}
\newcommand{\BM}{{\mathbb B}}
\newcommand{\CM}{{\mathbb C}}
\newcommand{\NM}{{\mathbb N}}
\newcommand{\RM}{{\mathbb R}}
\newcommand{\ZM}{{\mathbb Z}}
\newcommand{\Aa}{{\mathcal A}}
\newcommand{\Pp}{{\mathcal P}}
\newcommand{\Bb}{{\mathcal B}}
\newcommand{\Dd}{{\mathcal D}}
\newcommand{\Ff}{{\mathcal F}}
\newcommand{\Ww}{{\mathcal W}}
\newcommand{\Vv}{{\mathcal V}}
\newcommand{\Ss}{{\mathcal S}}
\newcommand{\Tt}{{\mathcal T}}
\newcommand{\Nn}{{\mathcal N}}
\newcommand{\Cc}{{\mathcal C}}
\newcommand{\Ll}{{\mathcal L}}
\newcommand{\Hh}{{\mathcal H}}
\begin{document}
	
	\begin{abstract}
		The goal of our work is to characterize the landscape of the frustration-free quantum spin models over the Cayley graph of a finitely generated group $G$. This is achieved by establishing $G$-equivariant morphisms from the partially ordered space of frustration-free models to the partially ordered spaces 1) of hereditary $C^\ast$-algebras of the underlying UHF quasi-local algebra of observables, 2) of open projections in its double dual, and 3) of subsets of pure state space. Our main result consists of an intrinsic characterization of the images of these morphisms, which captures the essence of frustration-freeness and enables us to extend the concept to generic AF-$C^\ast$-algebras. Additionally, using well established facts about AF-$C^\ast$-algebras, we prove density theorems, provide intrinsic characterizations of frustration-free ground states, and propose a definition of a boundary algebra for models constrained to half-lattices, under the sole assumption of frustration-freeness.
	\end{abstract}
	
	\maketitle
	
	
	\section{Introduction} 
	
	Consider quantum spin degrees of freedom distributed over the Cayley graph of a finitely generated amenable group $G$. Then the quasi-local algebra of physical observables is the UHF algebra $\Ss^{\otimes G}$ with $\Ss$ a full matrix algebra. It accepts a natural $G$-action $\alpha$ and has $\Ss_\Lambda : = \Ss^{\otimes \Lambda}$ as a family of finite dimensional $C^\ast$-subalgebras, where $\Lambda$ samples ${\rm K}(G)$, the set of compact hence finite subsets of $G$. If $g\cdot g':=g' g^{-1}$ indicates the right action of $G$ on itself, and on ${\rm K}(G)$ as well, then $\alpha_g(\Ss_\Lambda)=\Ss_{g \cdot \Lambda}$. The left action of $G$ on ${\rm K}(G)$ will be indicated by $g\Lambda$ (hence no dot). The spaces of states and pure states over $\Ss^{\otimes G}$ will be denoted by ${\rm S}(\Ss^{\otimes G})$ and ${\rm PS}(\Ss^{\otimes G})$, respectively. They also have natural $G$-actions.
	
	There is much interest in time evolutions on $\Ss^{\otimes G}$ generated by $G$-invariant inner-limit derivations $\delta_{\bm h}$, $\bm h =\{h_\Lambda\}$, induced from nets of inner derivations (see \cite{BratteliBook1,BratteliBook2} and section~\ref{Sec:FFDer})
	\begin{equation}\label{Eq:FFPDer0}
		\Ss^{\otimes G} \ni a \mapsto  \delta_\Lambda(a)=h_\Lambda a -a h_\Lambda, \quad h_\Lambda = \sum_{g\cdot \Delta \subseteq \Lambda} \alpha_g(q) \in \Ss^{\otimes G}.
	\end{equation}
	Here, $\Delta$ is a fixed finite subset of $G$ called the range, $q \in \Ss_\Delta$ is called a finite-range interaction, and $\Lambda$ samples the countable set ${\rm K}(G)$. $\{h_\Lambda\}$ is said to be frustration-free if $q$ is a positive element, yet $0$ belongs to the spectrum ${\rm Spec}(h_\Lambda)$ for all $\Lambda$'s. The models with frustration-free interactions played and continue to play an important role in the process of understanding the dynamics of correlated quantum systems \cite{NachtergaeleLMP2024} and in the classification program of gapped quantum systems \cite{OgataICM2023}. Their characteristics make them amenable to specialized theoretical and numerical techniques and, as such, much insight has been gained about these models. For example, they are among the models whose ground states can be decided to be gapped by using numerically implementable algorithms \cite{NachtergaeleCMP1996}.

	This is a very special situation. Indeed, the family of supporting projections $p_\Lambda$ of such $h_\Lambda$'s display several remarkable properties: (1) $p_\Lambda$ is proper and localized in $\Ss_\Lambda$; (2) $p_{\Lambda'}\leq p_\Lambda $ ($\Leftrightarrow p_{\Lambda'}^\bot \geq p_\Lambda^\bot$), whenever $\Lambda' \subseteq \Lambda$; (3) $\alpha_g(p_\Lambda)=p_{g \cdot \Lambda}$. These mentioned properties make the family $\{p_\Lambda\}$ into what we call a frustration-free proper $G$-system of projections (see definition~\ref{Def:FFSys}). We denote their set by $\mathfrak F_G(\Ss^{\otimes G})$ and we also consider the larger set $\mathfrak F(\Ss^{\otimes G})$ of families of proper projections displaying only properties (1) and (2) listed above. Any $\{p_\Lambda\} \in \mathfrak F_G(\Ss^{\otimes G})$ generates a large family of frustration-free inner-limit $G$-derivations, such as
	\begin{equation}\label{Eq:FFPDer}
		h_\Lambda (p_\Delta)= \sum_{g \cdot \Delta \subseteq \Lambda} \alpha_g(p_\Delta)=\sum_{g \cdot \Delta \subseteq \Lambda} p_{g \cdot \Delta}, \quad\Lambda,\Delta\in {\rm K}(G).
	\end{equation}
	In fact, we can replace $p_\Delta$ by any $q =q^\ast \in \Ss_\Delta$ with $q \leq c \, p_\Delta$ for some $c\in \RM_+$, because then $\sum_{g \cdot \Delta \subseteq \Lambda} \alpha_g(q)\leq c|\Lambda|p_\Lambda$ and $0$ is necessarily in the spectrum of $\sum_{g \cdot \Delta \subseteq \Lambda} \alpha_g(q)$, because $p_\Lambda$'s are proper. Note that the second expression in \eqref{Eq:FFPDer} can be used in cases where the $G$-equivariance is absent. For these reasons, one can shift the focus from models to frustration-free systems of projections. As we shall see, $\mathfrak F(\Ss^{\otimes G})$ can be endowed with a natural partial order such that it becomes a $\wedge$-semilattice compatible with the $G$-action. 
	
	Related to frustration-free models, the class of frustration-free ground states were introduced in \cite{AffleckCMP1988}. Given a frustration-free model as in \eqref{Eq:FFPDer0}, $\omega \in {\rm S}(\Ss^{\otimes G})$ is a frustration-free ground state for $\delta_{\bm h}$ if $\omega (p_\Lambda)=0$ for all $\Lambda \in {\rm K}(G)$. One will like to identify the states that are frustration-free ground states for at least one model, which we will simply call frustration-free states. We answer this question by appealing to the space $\mathfrak H(\Ss^{\otimes G})$ of hereditary $C^\ast$-subalgebras of $\Ss^{\otimes G}$, briefly reviewed in section \ref{Sec:Background}. It is relevant here because, first, any state $\omega$ induces a canonical hereditary $C^\ast$-subalgebra $\Bb_\omega=\Nn_\omega^\ast \cap \Nn_\omega$, where $\Nn_\omega$ is the left-ideal appearing in the GNS construction, and, secondly, there is a bijective relation between $\mathfrak H(\Ss^{\otimes G})$ and the subsets of ${\rm PS}(\Ss^{\otimes G})$. We call the subset $\Omega_\omega \subset {\rm PS}(\Ss^{\otimes G})$ corresponding to $\Bb_\omega$ the support of $\omega$. Now, let $\mathfrak H^F(\Ss^{\otimes G})$ be the class of hereditary $C^\ast$-subalgebras displaying the property 
	\begin{equation}\label{Eq:PropF1}
		{\rm (F):} \ \  \Bb = \overline{\cup_{\Lambda \in {\rm K}(G)} (\Bb \cap \Ss_\Lambda)}.
	\end{equation}
	Then, if $\Bb_\omega$ has property (F), $\omega$ is a frustration-free ground state for at least one frustration-free derivation, which can be constructively derived from $\Bb_\omega$. In fact, a state is frustration-free if and only if its support is contained in the support of a state displaying property (F) (see section \ref{Sec:FFvsStates}). Therefore, property (F) completely and intrinsically characterizes the frustration-free states over $\Ss^{\otimes G}$. 
	
	Furthermore, we will construct a split epimorphism from $\Ff(\Ss^{\otimes G})$ to the poset $\mathfrak H^F(\Ss^{\otimes G})$ of hereditary $C^\ast$-subalgebras satisfying property (F) (see section \ref{Sec:FFfromB}), which can be used to answer another question, namely, how restrictive are the frustration-free models? For this, we exploit the bijective relation between $\mathfrak H(\Ss^{\otimes G})$ and the space $\Pp_o\big ((\Ss^{\otimes G})^{\ast \ast}\big)$ of open projections in the double dual $(\Ss^{\otimes G})^{\ast \ast}$. Specifically, any hereditary $C^\ast$-subalgebra comes in the form $p \Ss^{\otimes G} p \cap \Ss^{\otimes G}$ for an open projection $p$ from the double dual.\footnote{The standard is to use $p(\Ss^{\otimes G})^{\ast \ast} p \cap \Ss^{\otimes G}$ for the hereditary $C^\ast$-subalgebra, but our writing is equivalent (see Eq.~1.1 in \cite{BosaMPCPS2018}).} A ground state $\omega$ of a generic quantum spin model selects the unique open projection corresponding to $\Bb_\omega$, and property (F) supplies an intrinsic characterization of the space $\Pp_o^F\big ((\Ss^{\otimes G})^{\ast \ast}\big)$ of those open projections generated as weak$^\ast$-limits of families of frustration-free projections. Now, the double dual is naturally a $W^\ast$-algebra that can be concretely realized as the weak closure of the universal representation $\pi_u = \bigoplus_{\eta \in {\rm S}(\Ss^{\otimes G}) } \pi_\eta$ of $\Ss^{\otimes G}$.\footnote{Throughout, for $\eta$ a state, $\pi_\eta$ denotes its GNS representation.} We will show that $\Pp_o^F\big ((\Ss^{\otimes G})^{\ast \ast}\big)$ is dense in $\Pp_o\big ((\Ss^{\otimes G})^{\ast \ast}\big)$ in the norm topology of $\BM(\Hh_u)$ (see section \ref{Sec:FFOpenProj}). This can be seen as a sharpening and generalization of the known results for the case $G=\ZM$ \cite{FannesCMP1992}.
	
	In fact, the connection to the open projections in the double dual supplies new tools to investigate the frustration-free ground states. In the literature, one can find a popular condition on a frustration-free system of projections, known as the local topological quantum order (LTQO) condition (see \ref{Prop:LTQO}), that assures that its set of frustration-free ground states consists of one point. Using the tools we just mentioned, we identify in section \ref{Sec:OptLTQO} the optimal version of LTQO that gives the sufficient and necessary condition for the set of the frustration-free ground states to consists of one point.
	
	Furthermore, for a class of frustration-free models over $G=\ZM^d$ satisfying the so called local topological order (LTO) conditions, reference \cite{JonesArxiv2023} constructed a physically relevant boundary $C^\ast$-subalgebra that naturally emerges when the models are restricted to the half-lattice $\ZM^d_+:=\NM \times \ZM^{d-1}$. Using the connection to open projections, we propose here a definition of a boundary algebra under the sole assumption of the frustration-freeness. It can be described as follows (see section \ref{Sec:FFOpenProj}): For $\{p_\Lambda\}_{\Lambda \in {\rm K}(\ZM_+^{d})}$ a frustration-free system of projections, let $\Bb$ be the associated hereditary $C^\ast$-subalgebra of $\Ss^{\otimes \ZM_+^{d}}$, given by the morphism mentioned above. Then, our proposed boundary algebra is the commutant of $\Bb$ relative to $\Ss^{\otimes \ZM_+^{d}}$. Under a strengthening of the LTO conditions, we establish a precise connection between this and the boundary algebra introduced  in \cite{JonesArxiv2023}.
	
	Above, we mentioned several problems related to frustration-free models whose solutions follow from the relations summarized in the following diagram:
	\begin{equation}\label{Eq:MasterDiag}
		\begin{tikzcd}
			\ & \Pp_o^F\big ((\Ss^{\otimes G})^{\ast \ast}\big) & \ \\
			\mathfrak F(\Ss^{\otimes G})   & \  & \mathfrak H^F(\Ss^{\otimes G}) \\
			\ & \mathfrak P^F\big( {\rm PS}(\Ss^{\otimes G})\big )  & \ 
			\arrow[from=2-1, to=1-2]
			\arrow[from=1-2, to=2-3]
			\arrow[from=2-3, to=1-2]
			\arrow[from=2-1, to=2-3]
			\arrow[from=2-3, to=2-1]
			\arrow[from=3-2, to=2-3] 
			\arrow[from=2-3, to=3-2]
		\end{tikzcd}
	\end{equation}
	The main goal of our paper is to describe the structure of its entries, establish the seen relations, and demonstrate a number of direct consequences. As we shall see, all the entries are semi-lattices with partial orders that are compatible with the $G$-actions, and all the maps seen in \eqref{Eq:MasterDiag} are $G$-equivariant morphisms of semi-lattices. As such, all our statements can be specialized for $G$-invariant models, states, etc.. In our opinion, property (F) and the commuting diagram \eqref{Eq:MasterDiag} encode the essence of frustration-freeness, because now the concept can be defined for any AF-algebra, once a preferential finite-dimensional filtration is chosen.

	\medskip
	
	\noindent
	{\bf Acknowledgements:}  This work was supported by the U.S. National Science Foundation through the grant CMMI-2131760, and by U.S. Army Research Office through contract W911NF-23-1-0127. The authors acknowledge fruitful discussions during the workshop "Quantum Field Theory and Topological Phases via Homotopy Theory and Operator Algebras" organized by CMSA at Harvard University, especially with Corey Jones, David Penneys, and Xiao-Gang Wen.

	\section{Frustration-free inner-limit derivations and systems of projections}\label{Sec:FFDer}
	
	This section recalls the definition of frustration-free derivations and introduces the related concept of frustration-free systems of projections. Basic properties of the latter are established and examples are supplied.
	
	Our main results are specialized for the spin algebras over Cayley graphs, which we now reintroduce in more detail. Let $G$ be an amenable finitely generated infinite group and consider the site algebra $\Ss = M_d(\CM)$, $d \in \NM^\times$. Note that $\Ss=\BM(\Hh)$, the algebra of linear maps over the Hilbert space $\Hh = \CM^d$. Then, by standard procedures, $\Ss^{\otimes G}$ is defined as an AF $C^\ast$-algebra. Sometimes, it is useful to invoke a particular presentation of this $C^\ast$-algebra, which we now describe. Let $\Cc=\{1,\ldots,d\}^G$ be the set of ``spin" configurations and, for two spin configurations $j,j \in \Cc$, declare that $j\sim j'$ if $j_g = j'_g$ for $g$ outside a compact neighborhood of the neutral element $e$ of $G$. This is a relation on $\Cc$, hence one can consider the groupoid for this relation. If this groupoid is endowed with the topology generated by cylinder sets, then its groupoid $C^\ast$-algebra is isomorphic to $\Ss^{\otimes G}$ \cite{RenaultBook}. In this groupoid presentation, the action $\alpha$ of $G$ on $\Ss^{\otimes G}$ is natural and stems from its action on the configurations $\beta_g(j)_{g'}: = j_{g \cdot g'}$. Furthermore, for each $\Lambda\in {\rm K}(G)$, we can repeat the construction and define the $C^\ast$-algebra $\Ss_\Lambda \simeq \Ss^{\otimes \Lambda}$. Each of these algebras can be unitally and canonically embedded in $\Ss^{\otimes G}$. If ${\rm K}(G)$ and the set of finite dimensional $C^\ast$-subalgebras are partially ordered by inclusion, then $\Lambda \mapsto \Ss_\Lambda$ is an injective morphism of posets, and in fact of $\wedge$-semilattices.
	
	An element $a\in \Ss^{\otimes G}$ is called a local observable if $a\in \Ss_\Lambda$ for some $\Lambda\in {\rm K}(G).$ It is standard to denote by ${\rm supp}(a)$ as the smallest possible $\Lambda$ such that $a\in \Ss_\Lambda,$ and refer to it as the support of $a.$ 
	
	We introduce now the class of derivations of interest to us. Let $q$ be a self-adjoint element from one of the finite dimensional subalgebras $\Ss_\Delta$, and consider the net $\bm h =\{h_\Lambda\}_{\Lambda \in {\rm K}(G)}$ of elements
	\begin{equation}\label{Eq:HLambda}
		h_\Lambda(q):= \sum_{g \cdot \Delta \subseteq \Lambda}\alpha_g(q-\epsilon_\Lambda (q)\, 1) \in \Ss_\Lambda \subset \Ss^{\otimes G},
	\end{equation}
	where $\epsilon_\Lambda(q)$'s are real parameters, entirely determined by $q$ and $\Lambda$, enforcing the spectral condition\footnote{This convention removes the freedom of adding multiples of identity.}
	\begin{equation}\label{Eq:SpecCond}
		\min \, {\rm Spec} (h_\Lambda) =0.
	\end{equation} 
	Let $\Dd_{\bm h}:= \bigcup_{\Lambda \in {\rm K}(G)} \, \Ss_{\Lambda}$, which is a dense and $G$-invariant $\ast$-subalgebra of $\Ss^{\otimes G}$. Then
	\begin{equation}
		{\rm K}(G) \ni \Lambda \mapsto [h_\Lambda,a] := h_\Lambda a - a h_\Lambda \in \Ss^{\otimes G}
	\end{equation} 
	is a Cauchy net for all $a \in \Dd_{\bm h}$, and we can define a derivation on $\Dd_{\bm h}$ 
	\begin{equation}
		\delta_{\bm h}(a) : = \lim [h_\Lambda,a].
	\end{equation}
	We call such $\delta_{\bm h}: \Dd_{\bm h} \to \Ss^{\otimes G}$ inner-limit $G$-derivations,\footnote{We prefer the label inner-limit introduced in \cite{BratteliCMP1975} over the label almost-inner.} because their domains and actions are $G$-invariant. Additionally, $\delta_{\bm h}(a^\ast)=\delta_{\bm h}(a)^\ast$ for all $a \in \Dd_{\bm h}$, hence they are also $\ast$-derivations. It is known that such inner-limit derivations are closable and that the closures generate dynamics on $\Ss^{\otimes G}$ \cite[Example~3.2.25]{BratteliBook2}.

	\begin{definition}[\cite{NachtergaeleLMP2024}]\label{Def:FFDer}
		An inner-limit $G$-derivation of the type~\eqref{Eq:HLambda} is said to be frustration-free if 
		\begin{equation}
			\epsilon_\Lambda(q) = {\rm min}\, {\rm Spec}(q)
		\end{equation}
		for all $\Lambda$'s.
	\end{definition}
	
	The spectra of $h_\Lambda$'s from~\ref{Eq:HLambda} are finite, hence we can speak of supporting projections $p_\Lambda \in \Ss_\Lambda$, {\it i.e.} of the spectral projections onto ${\rm Spec}\, h_\Lambda \setminus \{0\}$. These $p_\Lambda$'s display frustration-freeness \cite{NachtergaeleLMP2024}
	\begin{equation}\label{Eq:FF1}
		p_{\Lambda_1}^\bot \, p_{\Lambda_2}^\bot =  p_{\Lambda_2}^\bot \, p_{\Lambda_1}^\bot = p_{\Lambda_2}^\bot 
	\end{equation}
	or, equivalently,
	\begin{equation}\label{Eq:FF2}
		p_{\Lambda_1}\, p_{\Lambda_2} =  p_{\Lambda_2}\, p_{\Lambda_1}= p_{\Lambda_1}
	\end{equation}
	for any pair $\Lambda_1 \subseteq \Lambda_2 \in {\rm K}(G)$. This prompts the following definition:
	
	\begin{definition}\label{Def:FFSys}
		A family of projections
		\begin{equation}
			\big \{p_\Lambda \in \Ss_\Lambda \subset \Ss^{\otimes G} \big \}_{\Lambda \in {\rm K}(G)}
		\end{equation}
		is called a frustration-free system if the equivalent statements~\eqref{Eq:FF1} and \eqref{Eq:FF2} apply. If in addition 
		\begin{equation}\label{Eq:PGEq}
			\alpha_g(p_\Lambda)=p_{g \cdot \Lambda}, \quad \forall \ \Lambda \in {\rm K}(G),
		\end{equation}
		we call $\{p_\Lambda\}$ a $G$-system. The system will be called proper if $p_\Lambda < 1_\Lambda$ for all $\Lambda \in {\rm K}(G)$.
	\end{definition}
	
	\begin{remark}
		{\rm The set ${\mathfrak F}(\Ss^{\otimes G})$ of frustration-free proper systems of projections accepts the following natural $G$-action:
			\begin{equation}
				g \cdot \{p_\Lambda \in \Ss_\Lambda\}_{\Lambda \in {\rm K}(G)} := \{\alpha_g (p_{g^{-1} \cdot \Lambda}) \in \Ss_\Lambda\}_{\Lambda \in {\rm K}(G)}.
			\end{equation}
			Then~\eqref{Eq:PGEq} can be restated as $g\cdot \{p_\Lambda\}=\{p_\Lambda\}$, and the set ${\mathfrak F}_G(\Ss^{\otimes G})$ of proper $G$-systems is just the subset of ${\mathfrak F}(\Ss^{\otimes G})$ of elements fixed by this action. The same observations apply to the set $\hat{\mathfrak F}(\Ss^{\otimes G})$ of not necessarily proper frustration-free systems of projections.}$\Diamond$
	\end{remark}
	
	Every $C^\ast$-algebra comes equipped with a standard partial ordering. Thus, the set $\Pp(\Ss^{\otimes G})$ of projections is a poset and $p \leq q$ if and only if $p q=q p = p$ \cite[Lemma~14.2.2.]{StrungBook}. However, $\Pp(\Ss^{\otimes G})$ is not a lattice. Indeed, one can read from the Bratteli diagram of $\Ss^{\otimes G}$ that this $C^\ast$-algebra is not postliminal \cite{LazarTAMS1980} and, for $\Pp(\Ss^{\otimes G})$ to be a lattice, this is a necessary condition \cite{LazarMS1982}. In fact, the set of projections of AF algebras rarely form a lattice under the standard order \cite{LazarTAMS1983}. However, since $\Ss_\Lambda \simeq \BM(\Hh_\Lambda)$, $\Hh_\Lambda:=\Hh^{\otimes \Lambda}$, then the projections of $\Ss_\Lambda$ do form a lattice and:\footnote{This detail makes $\Ss^{\otimes G}$ special and, as such, many stated properties will not be applicable to general AF algebras of physical observables.}
	
	\begin{proposition}\label{Prop:FLattice}
		$\hat{\mathfrak F}(\Ss^{\otimes G})$ is a lattice when gifted with the partial order 
		\begin{equation}\label{Eq:FFOrder}
			\{p_\Lambda^1\} \leq \{p_\Lambda^2\} \ \Leftrightarrow \ p_\Lambda^1 \leq p_\Lambda^2 \ \ \forall \ \Lambda \in {\rm K}(G).
		\end{equation}
		In that case,
		\begin{equation}\label{Eq:FWedge}
			\{p_\Lambda^1\} \wedge \{p_\Lambda^2\} = \{p_\Lambda^1 \wedge p_\Lambda^2\}
		\end{equation}   
		and 
		\begin{equation}\label{Eq:FVee}
			\{p_\Lambda^1\} \vee \{p_\Lambda^2\} = \{p_\Lambda^1 \vee p_\Lambda^2\}.
		\end{equation}
	\end{proposition}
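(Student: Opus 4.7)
The plan is to bootstrap the lattice structure on $\hat{\mathfrak F}(\Ss^{\otimes G})$ from the lattice structure on the projections of each finite-dimensional factor $\Ss_\Lambda \simeq \BM(\Hh_\Lambda)$. Reflexivity, antisymmetry, and transitivity of the relation \eqref{Eq:FFOrder} are immediate consequences of the analogous properties of $\leq$ on projections, applied componentwise, so no difficulty arises there.

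The substantive content is that componentwise meet and join of two frustration-free systems are again frustration-free. I would first record a preliminary consistency check: the lattice operations on projections are stable under the unital inclusions $\Ss_\Lambda \subset \Ss_{\Lambda'}$ for $\Lambda \subseteq \Lambda'$. Under the canonical factorization $\Hh_{\Lambda'} \simeq \Hh_\Lambda \otimes \Hh_{\Lambda'\setminus\Lambda}$, a projection $p \in \Ss_\Lambda$ identifies with $p \otimes 1$, and the basic identities $(V \otimes W) \cap (V' \otimes W) = (V \cap V') \otimes W$ and $(V \otimes W) + (V' \otimes W) = (V+V') \otimes W$ show that the meet and join of two projections from $\Ss_\Lambda$ take the same value whether computed in $\Ss_\Lambda$ or in the larger $\Ss_{\Lambda'}$. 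This removes any ambient-algebra ambiguity in comparing the $p_\Lambda^i$'s at different scales.

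Given this consistency, the main claim follows formally. For $\Lambda_1 \subseteq \Lambda_2$, the hypothesis gives $p^i_{\Lambda_1} \leq p^i_{\Lambda_2}$ as projections in $\Ss_{\Lambda_2}$ for $i=1,2$, and monotonicity of meet and join in the orthomodular projection lattice of $\Ss_{\Lambda_2}$ yields both $p^1_{\Lambda_1} \wedge p^2_{\Lambda_1} \leq p^1_{\Lambda_2} \wedge p^2_{\Lambda_2}$ and $p^1_{\Lambda_1} \vee p^2_{\Lambda_1} \leq p^1_{\Lambda_2} \vee p^2_{\Lambda_2}$. Hence the componentwise families belong to $\hat{\mathfrak F}(\Ss^{\otimes G})$. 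That they are the infimum and supremum under \eqref{Eq:FFOrder} is then essentially by inspection: any upper (resp.\ lower) bound in $\hat{\mathfrak F}(\Ss^{\otimes G})$ must dominate (resp.\ be dominated by) each $p^1_\Lambda \wedge p^2_\Lambda$ (resp.\ $p^1_\Lambda \vee p^2_\Lambda$) in the local projection lattice.

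The only step that genuinely requires structural input is the compatibility of lattice operations with the embeddings $\Ss_\Lambda \hookrightarrow \Ss_{\Lambda'}$, which leans crucially on each $\Ss_\Lambda$ being a full matrix algebra. This is exactly the point flagged in the footnote preceding the proposition: for a general AF algebra whose local subalgebras are not full matrix algebras, this consistency can fail and the analogous lattice structure on $\hat{\mathfrak F}$ need not exist.
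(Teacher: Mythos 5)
Your proof is correct and follows essentially the same route as the paper's: both reduce everything to the projection lattice of $\BM(\Hh_\Lambda)$, and your tensor-factorization consistency check merely makes explicit what the paper's range-inclusion argument uses implicitly, while your appeal to monotonicity of $\wedge$ and $\vee$ replaces the paper's direct inclusions of intersections and sums of ranges. One cosmetic slip: in your final sentence the ``resp.'' pairing is inverted --- an upper bound must dominate each $p^1_\Lambda \vee p^2_\Lambda$ and a lower bound must be dominated by each $p^1_\Lambda \wedge p^2_\Lambda$ --- but the intended (and correct) argument is clear.
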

	
	\begin{proof}
		Clearly, \eqref{Eq:FFOrder} is a partial order. To see if it admits the suprema and infima as stated, we first need to verify that the right sides of~\eqref{Eq:FWedge} and \eqref{Eq:FVee} are frustration-free systems. Seeing $\Ss_\Lambda$ as $\BM(\Hh_\Lambda)$, the frustration-freeness of $\{p_\Lambda^i\}$, for $i=1,2$, assures us that ${\rm Ran} \, p_\Lambda^i  \subseteq {\rm Ran} \, p_{\Lambda'}^i \subseteq \Hh_{\Lambda'}$, whenever $\Lambda \subseteq \Lambda'$. Then
		\begin{equation}
			{\rm Ran} \, p_\Lambda^1 \cap {\rm Ran}\, p_\Lambda^2 \subseteq {\rm Ran} \, p_{\Lambda'}^1 \cap {\rm Ran}\, p_{\Lambda'}^2
		\end{equation}
		and 
		\begin{equation}
			{\rm Ran} \, p_\Lambda^1 + {\rm Ran}\, p_\Lambda^2 \subseteq {\rm Ran} \, p_{\Lambda'}^1 + {\rm Ran}\, p_{\Lambda'}^2,
		\end{equation}
		which confirm that $ \{p_\Lambda^1 \wedge p_\Lambda^2\}$ and $ \{p_\Lambda^1 \vee p_\Lambda^2\}$ are indeed frustration-free systems. Now, if $\{p_\Lambda\} \leq \{p_\Lambda^i\}$, $i=1,2$, then necessarily $p_\Lambda \leq p_\Lambda^1 \wedge p_\Lambda^2$ for all $\Lambda \in {\rm K}(G)$ and, as such, 
		\begin{equation}
			\{p_\Lambda\} \leq \{p_\Lambda^1 \wedge p_\Lambda^2\}.
		\end{equation}
		Similarly, if $\{p_\Lambda\} \geq \{p_\Lambda^i\}$, $i=1,2$, then necessarily $p_\Lambda \geq p_\Lambda^1 \vee p_\Lambda^2$ for all $\Lambda \in {\rm K}(G)$ and, as such, 
		\begin{equation}
			\{p_\Lambda\} \geq \{p_\Lambda^1 \vee p_\Lambda^2\}.
		\end{equation}
		These show that infima and suprema exist and are given by \eqref{Eq:FWedge} and \eqref{Eq:FVee}, respectively.
	\end{proof}
	
	\begin{remark}
		{\rm The lattice structure identified above is compatible with the $G$-action. As a consequence, $\hat{\mathfrak F}_G(\Ss^{\otimes G})$ is a sub-lattice. Of course, the subsets ${\mathfrak F}(\Ss^{\otimes G})$ and ${\mathfrak F}_G(\Ss^{\otimes G})$ of proper systems are only $\wedge$-semilattices.} $\Diamond$
	\end{remark}
	
	The following example assures us that the set of frustration-free $G$-systems is not void for any finitely generated torsion-free discrete group.
	
	\begin{figure}
		\centering
		\includegraphics[width=0.5\linewidth]{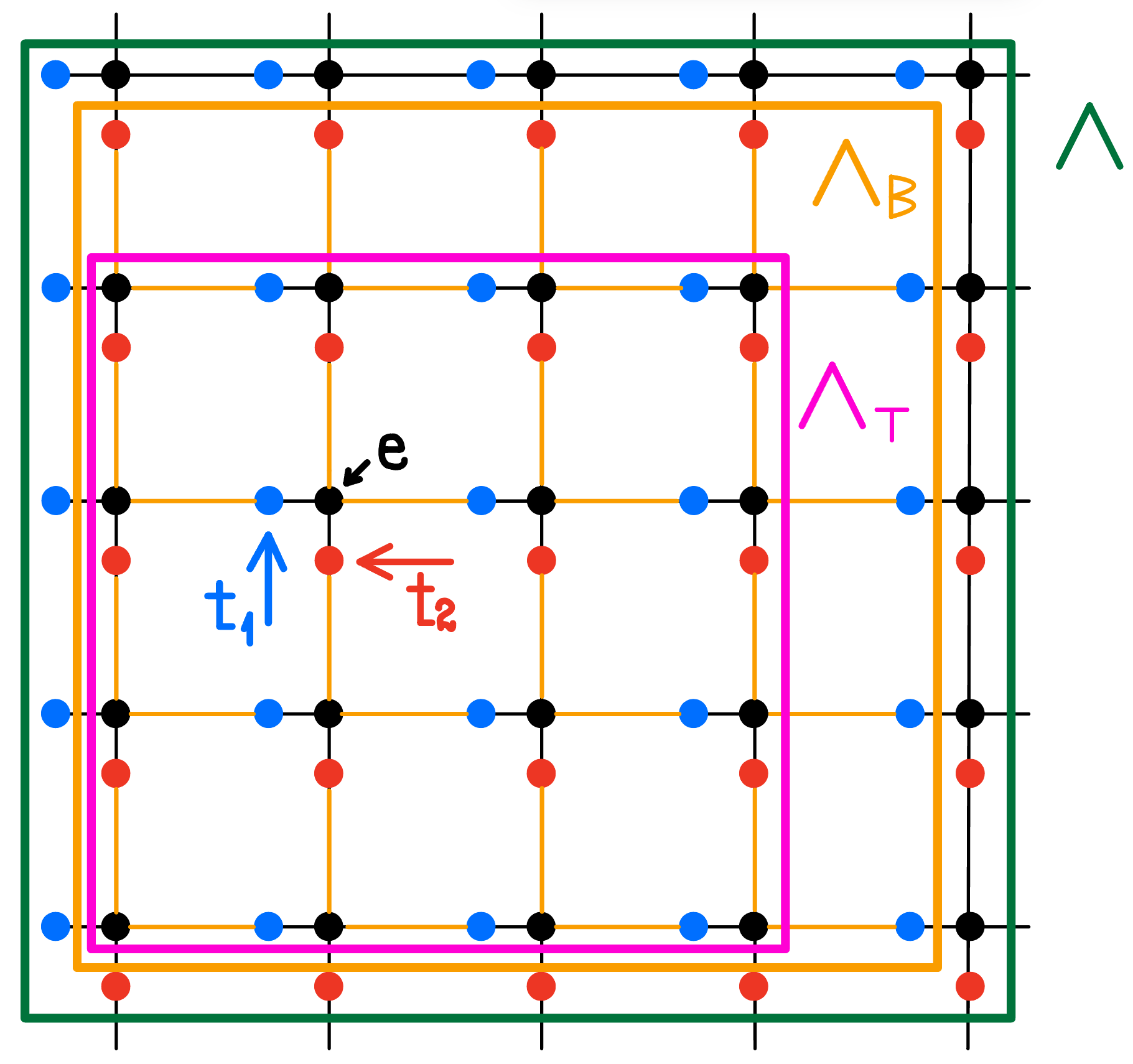}
		\caption{Decorated lattice for the valence-bond construction in the case $G=\ZM^2$. The sites connected by orange bonds support the indices called by $\Gamma$'s in Eq.~\eqref{Eq:Weights}.}
		\label{Fig:VB1}
	\end{figure}
	
	\begin{example}\label{Ex:VBS}
		{\rm The construction is a variant of valence bond states \cite{AffleckCMP1988,NachtergaeleATMP2021} or more general PEP-states \cite{JahromiPRB2019}. Consider a torsion-free finitely generated group $G$ and its tensor $C^\ast$-algebra $\Ss^{\otimes G}$ with $\Ss = M_d(\CM)$. Let $T=\{e, t_1,\ldots, t_k\}$ be the set containing the unit and the generators of $G$. For each element $t \in T$, we consider a finite set of indices $J(t)$ and a map 
			\begin{equation}     
				\prod_{t \in T} J(t) \ni x \mapsto \psi(x) \in \Hh := \CM^d.
			\end{equation}
			We also fix a set of coefficients
			\begin{equation}    
				\prod_{t \in T} J(t) \ni x \mapsto \Gamma(x) \in \CM, \quad \sum |\Gamma(x)|>0.
			\end{equation} 
			We now assume that each vertex of the Cayley graph of $G$ carries such families of indices, denoted by $J_g(t)$ in the following. The visual picture of the construction is shown in Fig.~\ref{Fig:VB1} in the form of a decorated Cayley graph $G \times T$. A configuration of this decorated Cayley graph equates with a choice 
			\begin{equation}
				\{x_g(t)\}_{g \in G}^{t \in T}=\{x_g\}_{g \in G}
			\end{equation}
			of indices carried by the decorated lattice. By restriction, we can define local configurations, and we denote by $\Cc(\Xi)$ the set of configurations over $\Xi \in {\rm K}(G \times T)$. Note that if $\Xi$ and $\Xi'$ are disjoint finite subsets of $G \times T$, then $X_\Xi =\{x_g(t)\}_{(g,t) \in \Xi}$ and $X'_{\Xi'} =\{x'_g(t)\}_{(g,t) \in \Xi'}$ can be joined into a configuration $X_\Xi \curlyvee X'_{\Xi'}$ over $\Xi \cup \Xi'$ and we have $\Cc(\Xi) \curlyvee \Cc(\Xi') = \Cc(\Xi \cup \Xi')$. Furthermore, if we consider the canonical maps
			\begin{equation}
				X_{\Lambda \times T} = \{x_g\}_{g \in \Lambda} \mapsto \Psi(X_{\Lambda \times T}): = \bigotimes_{g \in \Lambda} \, \psi(x_g) \in \Hh_\Lambda
			\end{equation}
			for each $\Lambda \in {\rm K}(G)$, then 
			\begin{equation}
				X_{\Lambda \times T} \curlyvee X'_{\Lambda' \times T} \mapsto \Psi(X_{\Lambda \times T}) \otimes \Psi(X'_{\Lambda' \times T})
			\end{equation}
			if $\Lambda$ and $\Lambda'$ are disjoint. Now, for each configuration $X_{\Lambda \times T}$, we define an amplitude in the form
			\begin{equation}\label{Eq:Weights}
				C_\Lambda(X_{\Lambda \times T}) : = \prod_{g\in \Lambda_T} \Gamma(T \triangleright x_g),
			\end{equation}
			where $\Lambda_T : = \bigcap_{t \in T}(t^{-1} \Lambda) \in {\rm K}(G)$ and 
			\begin{equation}  
				T \triangleright x_g = T \triangleright \{x_g(t)\}:= \{x_{t g}(t)\}\in \prod_{t \in T} J(t).
			\end{equation}
			It is useful to define yet another set, namely, the set $\Lambda_B \subset \Lambda \times T$ containing all the sites of the decorated lattice that support the indices $T \triangleright x_g$ for $g \in \Lambda_T$. Intuitively, these are the sites of the decorated lattice connected or bonded by $\Gamma$'s in Eq.~\eqref{Eq:Weights} (see Fig.~\ref{Fig:VB1}). Note that $\Lambda_B$ is a subset of the decorated lattice  and not of the lattice itself, that $\Lambda_T \subset \Lambda_B$, and that $C_\Lambda$ in \eqref{Eq:Weights} depends only on the restriction of $X_{\Lambda \times T}$ to $\Lambda_B$. Lastly, for each $\Lambda \in {\rm K}(G)$ with $\Lambda_T \neq \emptyset$, we define the linear subspace
			\begin{equation}
				\Vv_\Lambda : = {\rm Span}\Big \{\sum_{ X\in \Cc(\Lambda_B)} C_\Lambda(X) \, \Psi(X \curlyvee Y), \ Y \in \Cc(\Lambda \setminus \Lambda_B) \Big \} \subset \Hh_\Lambda.
			\end{equation}
			Note the strict inclusion, which is due to ${\rm dim}\, \Vv_\Lambda \leq |\Lambda \setminus \Lambda_B| < {\rm dim}\, \Hh_\Lambda$. Also, $\Vv_\Lambda \neq \emptyset$ because at least one of the amplitudes $C_\Lambda$ is nonzero.
			
			\begin{proposition}
				For a linear subspace $\Ww \in \Hh_\Lambda$, let $\langle \Ww 
				\rangle\in \BM(\Hh_\Lambda) \simeq \Ss_\Lambda$ be the associate orthogonal projection. Then $\{p_\Lambda : =\langle \Vv_\Lambda^\bot \rangle\}$ is a frustration-free system.
			\end{proposition}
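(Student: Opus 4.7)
The plan is to reduce frustration-freeness to the nested-subspace inclusion
\begin{equation}
\Vv_{\Lambda_2} \subseteq \Vv_{\Lambda_1} \otimes \Hh_{\Lambda_2 \setminus \Lambda_1} \quad \text{for every pair } \Lambda_1 \subseteq \Lambda_2 \in {\rm K}(G).
\end{equation}
Under the canonical embedding $\Ss_{\Lambda_1} \hookrightarrow \Ss_{\Lambda_2}$, $a \mapsto a \otimes 1_{\Lambda_2 \setminus \Lambda_1}$, $p_{\Lambda_1}$ becomes the projector onto $\Vv_{\Lambda_1}^\bot \otimes \Hh_{\Lambda_2 \setminus \Lambda_1}$ while $p_{\Lambda_2}$ projects onto $\Vv_{\Lambda_2}^\bot$; the displayed inclusion is therefore equivalent to $p_{\Lambda_1} \leq p_{\Lambda_2}$, which is precisely~\eqref{Eq:FF2}.

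To establish the inclusion, I would decompose the geometry of $\Lambda_1 \subset \Lambda_2$ by partitioning the bond set $(\Lambda_2)_T = L \sqcup R \sqcup M$ into interior bonds $L = (\Lambda_1)_T$ (with $Tg \subseteq \Lambda_1$), exterior bonds $R = (\Lambda_2 \setminus \Lambda_1)_T$ (with $Tg \subseteq \Lambda_2 \setminus \Lambda_1$), and boundary bonds $M$ that straddle both regions. Correspondingly $(\Lambda_2)_B = A \sqcup B$ with $A = (\Lambda_2)_B \cap (\Lambda_1 \times T)$ and $B = (\Lambda_2)_B \cap ((\Lambda_2 \setminus \Lambda_1) \times T)$, and a direct check gives $A = (\Lambda_1)_B \sqcup I_M^A$ and $B = (\Lambda_2 \setminus \Lambda_1)_B \sqcup I_M^B$, where $I_M^A := \{(tg, t) : g \in M,\, tg \in \Lambda_1\}$ and analogously for $I_M^B$. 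Since each $\Gamma$-factor of $C_{\Lambda_2}$ is attached to a single bond $g$ and depends only on the indices at the sites $(tg, t)_{t \in T}$, the amplitude factors as
\begin{equation}
C_{\Lambda_2}(X) = C_{\Lambda_1}\big(X|_{(\Lambda_1)_B}\big) \cdot C_R\big(X|_{(\Lambda_2 \setminus \Lambda_1)_B}\big) \cdot C_M\big(X|_{I_M^A \cup I_M^B}\big),
\end{equation}
while $\Psi(X \curlyvee Y)$ factors tautologically across $\Hh_{\Lambda_1} \otimes \Hh_{\Lambda_2 \setminus \Lambda_1}$ by its definition as a site-wise tensor product.

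Given a spanning vector $v_Y = \sum_X C_{\Lambda_2}(X)\,\Psi(X \curlyvee Y)$ of $\Vv_{\Lambda_2}$, I would condition first on the shared boundary values $X|_{I_M^A \cup I_M^B}$ and then execute the remaining bulk sums. With those boundary values fixed, the inner sum over $X|_{(\Lambda_1)_B}$ collapses exactly into a spanning vector of $\Vv_{\Lambda_1}$ of the form $\sum_{X'} C_{\Lambda_1}(X')\,\Psi_{\Lambda_1}(X' \curlyvee Y^{\Lambda_1})$, with $Y^{\Lambda_1} := X|_{I_M^A} \curlyvee Y|_{(\Lambda_1 \times T) \setminus A}$, while the other inner sum produces some vector of $\Hh_{\Lambda_2 \setminus \Lambda_1}$. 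Weighting the resulting elementary tensors $w \otimes u$ by the outer sum over $X|_{I_M^A \cup I_M^B}$ together with the factor $C_M$ exhibits $v_Y$ as an element of $\Vv_{\Lambda_1} \otimes \Hh_{\Lambda_2 \setminus \Lambda_1}$. The one genuine difficulty is the mixing caused by the boundary bonds in $M$: because $C_{\Lambda_2}$ does not split as a product depending separately on $X|_A$ and $X|_B$, a naive tensor factorization fails, and the conditioning step is precisely what decouples the two sides so that the $\Lambda_1$-half reassembles into an honest $\Vv_{\Lambda_1}$-spanning vector.
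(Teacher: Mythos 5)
Your proof is correct and follows essentially the same route as the paper: both reduce frustration-freeness to the inclusion $\Vv_{\Lambda_2} \subseteq \Vv_{\Lambda_1} \otimes \Hh_{\Lambda_2 \setminus \Lambda_1}$ and obtain it from the factorization of $C_{\Lambda_2}$ into $C_{\Lambda_1}$ times a factor not involving the indices supported on $(\Lambda_1)_B$, followed by regrouping the sums so that the $\Lambda_1$-part reassembles into generating vectors of $\Vv_{\Lambda_1}$. Your finer split of the bonds into $L \sqcup R \sqcup M$ with explicit conditioning on the straddling bonds is just more detailed bookkeeping of the paper's two-factor decomposition $C_\Xi = F \cdot C_\Lambda$.
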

			\begin{proof}
				Take $\Xi \supseteq \Lambda \in {\rm K}(G)$. We will show that $\Vv_\Xi \subseteq \Hh_{\Xi \setminus \Lambda} \otimes \Vv_\Lambda$, which is equivalent to $p_\Xi^\bot \leq p_\Lambda^\bot$. Indeed, the generating vectors of $\Vv_\Xi$ take the form
				\begin{equation}\label{Eq:VXi}
					\sum_{X \in \Cc(\Xi_B)} C_\Xi(X) \Psi(X \curlyvee Y), \quad Y \in \Cc(\Xi\setminus \Xi_B), 
				\end{equation}
				and the key property of the construction is that
				\begin{equation}
					C_\Xi(X) = \prod_{g\in \Xi_T} \Gamma(T \triangleright x_g) = \Big [\prod_{g\in \Xi_T\setminus \Lambda_T} \Gamma(T \triangleright x_g) \Big ] \Big[ \prod_{g\in \Lambda_T} \Gamma(T \triangleright x_g) \Big ]
				\end{equation}
				and the first factor of the right side does not involve any of the indices supported by $\Lambda_B$. In other words, we have the factorization
				\begin{equation}
					C_\Xi(X_{\Xi_B \setminus \Lambda_B} \curlyvee X_{\Lambda_B})= F(X_{\Xi_B \setminus \Lambda_B}) \, C_\Lambda (X_{\Lambda_B}).
				\end{equation}
				Using this principle and the obvious fact that $\Xi_B \setminus \Lambda$ and $\Lambda \setminus \Lambda_B$ are disjoint and $(\Xi_B \setminus \Lambda) \cup (\Lambda \setminus \Lambda_B)= \Xi_B \setminus \Lambda_B$, we can factorize the configurations and the weights in \eqref{Eq:VXi} as
				\begin{equation}
					\begin{aligned}
						\sum_{V \in \Cc(\Lambda \setminus \Lambda_B)} \, \sum_{Z \in \Cc(\Xi_B \setminus \Lambda)} & F(Z \curlyvee V) \Psi(Z  \curlyvee Y)  \\
						& \qquad \otimes \sum_{W\in \Cc( \Lambda_B)} C_\Lambda(W) \Psi(W \curlyvee V).
					\end{aligned}
				\end{equation}
				This can be processed to
				\begin{equation}
					\sum_{V \in \Cc(\Lambda \setminus \Lambda_B)} \Phi(V) \otimes \sum_{W \in \Cc( \Lambda_B)} C_\Lambda(W) \Psi(W \curlyvee V),
				\end{equation}
				where all $\Phi(V)$ reside in $\Hh_{\Xi\setminus \Lambda}$.
			\end{proof}
			
			The construction uses exclusively the left action of $G$ and, because the latter commutes with the right action used in the definition of the $G$-action on the systems of projections, $\{p_\Lambda\}$ is actually a proper $G$-system.
		}$\Diamond$
	\end{example}
	
	If $\{p_\Lambda\}$ is a frustration-free system of proper projections, then, for any pair $\Lambda_1,\Lambda_2 \in {\rm K}(G)$, we have $ p_{\Lambda_1 \cup \Lambda_2} \geq p_{\Lambda_1} \vee p_{\Lambda_2}$ (or $ p_{\Lambda_1\cup \Lambda_2}^\bot \leq p_{\Lambda_1}^\bot \wedge p_{\Lambda_2}^\bot$). There are examples where equality actually takes place, at least for a subnet of $\{p_\Lambda\}$. Below we exhibit such example.
	
	\begin{example}[\cite{FannesCMP1992}] 
		{\rm This example will also serve as introduction to yet another technique to produce frustration-free systems of projections. Take $G=\ZM$ and consider $\Ss = M_d(\CM) \simeq \BM(\CM^d)$. Let $\rho$ be an invertible $k\times k$ matrix and $\{v_\mu\}_{\mu=1}^d$ be a collection of $k \times k$ matrices such that 
			\begin{equation}\label{Eq:VV}
				\sum_\mu v_\mu v_\mu^\ast = 1_k, \quad \sum_\mu v_\mu^\ast \rho v_\mu = \rho
			\end{equation}
			and such that the linear map 
			\begin{equation}
				M_k(\CM) \ni B \mapsto \sum_\mu v_\mu B v_\mu^\ast \in M_k(\CM)
			\end{equation}
			has 1 as a nondegenerate eigenvalue. Then, if $\{\psi_\mu\}$ is an orthonormal basis of $\CM^d$,
			\begin{equation}
				B \mapsto  \Gamma_n(B):= \sum_\mu \psi_{\mu_1} \otimes \ldots \otimes \psi_{\mu_n} \, {\rm Tr}\{ v_{\mu_n} \cdots v_{\mu_1}B\}
			\end{equation}
			is an injective map from $M_k(\CM)$ to $(\CM^d)^{\otimes n}$ for $n$ larger or equal than a threshold value $\ell \in \NM^\times$ \cite{FannesCMP1992}. Let $\Lambda_n := \{1, \ldots, n\}$ and define $p_{\Lambda_n}^\bot := \langle {\rm Ran} \, \Gamma_n \rangle \in \Ss_{\Lambda_n} \subset \Ss^{\otimes \ZM}$, as well as $p_{t\cdot \Lambda_n}:= \alpha_t(p_{\Lambda_n})$ for all $t\in \ZM$. This supplies a $G$-system of projections. The following is in essence Lemma~5.5 from \cite{FannesCMP1992}:
			
			\begin{proposition} Let $\Hh = \CM^d$. Then, for $m,n\leq p \in \NM^\times$,
				\begin{equation}\label{Eq:TT1}
					{\rm Ran} \, \Gamma_m \otimes \Hh^{\otimes (p-m)} \cap \Hh^{\otimes (p-n)} \otimes {\rm Ran}\, \Gamma_n = {\rm Ran}\, \Gamma_p,
				\end{equation}
				whenever $p \leq m+n-\ell$. In other words, 
				\begin{equation}\label{Eq:MP1}
					p_{\Lambda_m}^\bot \wedge p_{t\cdot \Lambda_n}^\bot = p_{\Lambda_m \cup t\cdot \Lambda_n}^\bot, \quad t=p-n. 
				\end{equation}
			\end{proposition}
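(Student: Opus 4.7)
The plan is to prove the two inclusions in \eqref{Eq:TT1} separately. The inclusion ${\rm Ran}\,\Gamma_p \subseteq {\rm Ran}\,\Gamma_m \otimes \Hh^{\otimes(p-m)} \cap \Hh^{\otimes(p-n)} \otimes {\rm Ran}\,\Gamma_n$ is a direct consequence of cyclicity of the trace, which allows one to rewrite
\begin{equation*}
\Gamma_p(B) \,=\, \sum_{\mu_{m+1}, \ldots, \mu_p} \Gamma_m\bigl(B\, v_{\mu_p} \cdots v_{\mu_{m+1}}\bigr) \otimes \psi_{\mu_{m+1}} \otimes \cdots \otimes \psi_{\mu_p},
\end{equation*}
placing $\Gamma_p(B)$ into the first factor, with a symmetric identity handling the other.

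For the reverse inclusion, let $\phi$ lie in the intersection, and split every multi-index $\mu = (\mu_1, \ldots, \mu_p)$ into three blocks $\mu = (a, b, c)$ of lengths $p - n$, $s := m + n - p$, and $p - m$; note that $s \geq \ell$ by hypothesis. I would choose expansions $\phi = \sum_c \Gamma_m(F^c) \otimes \psi_c = \sum_a \psi_a \otimes \Gamma_n(G^a)$ with $F^c, G^a \in M_k(\CM)$, and introduce the shorthand $V^x_j := v_{x_j} \cdots v_{x_1}$, so that $V^{ab}_m = V^b_s V^a_{p-n}$ and $V^{bc}_n = V^c_{p-m} V^b_s$. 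Matching the coefficients of $\psi_a \otimes \psi_b \otimes \psi_c$ in the two expansions of $\phi$ and cycling $V^b_s$ to the right yields
\begin{equation*}
{\rm Tr}\bigl((V^a_{p-n} F^c - G^a\, V^c_{p-m})\, V^b_s\bigr) \,=\, 0, \qquad \forall\, a, b, c.
\end{equation*}
Varying $b$ and invoking the hypothesis that $\Gamma_s$ is injective, equivalently that $\{V^b_s\}_b$ spans $M_k(\CM)$, promotes this to the matrix identity $V^a_{p-n} F^c = G^a V^c_{p-m}$ for all $a, c$.

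The decisive step is to extract a single $B \in M_k(\CM)$ with $F^c = B\, V^c_{p-m}$ and $G^a = V^a_{p-n} B$, since cycling ${\rm Tr}(V^{ab}_m F^c)$ will then immediately identify $\phi$ with $\Gamma_p(B)$. I would set $B := \sum_c F^c\,(V^c_{p-m})^\ast$ and obtain $G^a = V^a_{p-n} B$ by applying the telescoping identity $\sum_c V^c_{p-m}(V^c_{p-m})^\ast = 1_k$, which follows inductively from $\sum_\mu v_\mu v_\mu^\ast = 1_k$. Substituting back leaves $V^a_{p-n}(F^c - B\,V^c_{p-m}) = 0$ for every $a$, and here is the main obstacle: $p - n$ can be strictly smaller than $\ell$, so $\{V^a_{p-n}\}_a$ need not span $M_k(\CM)$ on the left and the injectivity of $\Gamma$ is not available a second time. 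To circumvent this, I would iterate the second algebraic constraint $\sum_\mu v_\mu^\ast \rho\, v_\mu = \rho$ to deduce $\sum_a (V^a_{p-n})^\ast \rho\, V^a_{p-n} = \rho$, multiply the vanishing identity on the left by $(V^a_{p-n})^\ast \rho$, sum over $a$, and invoke invertibility of $\rho$ to conclude $F^c = B\, V^c_{p-m}$, finishing the proof.
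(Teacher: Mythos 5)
Your proposal is correct, and up to the key matrix identity it follows the same route as the paper: expand $\phi$ both ways, match coefficients of the product basis, cycle the trace so that the overlap block of length $s=m+n-p\geq \ell$ stands alone, and use injectivity of $\Gamma_s$ (your dual reformulation, that $\{V^b_s\}_b$ spans $M_k(\CM)$, is equivalent by nondegeneracy of the trace pairing) to obtain $V^a_{p-n}F^c=G^a V^c_{p-m}$; the definition $B=\sum_c F^c\,(V^c_{p-m})^\ast$ together with the telescoped identity $\sum_c V^c_{p-m}(V^c_{p-m})^\ast=1_k$, giving $G^a=V^a_{p-n}B$, is also exactly the paper's step. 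You diverge only at the very end: the paper stops at $G^a=V^a_{p-n}B$, because substituting this into the expansion $\phi=\sum_a \psi_a\otimes\Gamma_n(G^a)$ and cycling the trace already yields $\phi=\Gamma_p(B)$, so the relation $F^c=B\,V^c_{p-m}$ is never needed. Your ``main obstacle'' (that $p-n$ may be below the injectivity threshold $\ell$) is thus self-imposed by insisting on concluding through the $\Gamma_m$-side expansion. Your way around it -- iterating $\sum_\mu v_\mu^\ast\rho\, v_\mu=\rho$ to get $\sum_a (V^a_{p-n})^\ast\rho\, V^a_{p-n}=\rho$ and invoking invertibility of $\rho$ -- is correct, and it buys a slightly stronger, symmetric description of the intersection (both $F^c=B\,V^c_{p-m}$ and $G^a=V^a_{p-n}B$), at the cost of using the second constraint in \eqref{Eq:VV}, which the paper's argument does not need. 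Your explicit verification of the easy inclusion ${\rm Ran}\,\Gamma_p\subseteq {\rm Ran}\,\Gamma_m\otimes\Hh^{\otimes(p-m)}\cap \Hh^{\otimes(p-n)}\otimes{\rm Ran}\,\Gamma_n$ is a welcome addition, since the paper leaves it implicit.
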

			
			\begin{proof}
				Let us point out that \eqref{Eq:MP1} implies that $\{p_\Lambda\}$ is frustration-free. Now, let $\psi$ be from the space seen on the left side of~\eqref{Eq:TT1}. Then
				\begin{equation}\label{Eq:Psi}
					\begin{aligned}
						\psi & =\sum_\mu \psi_{\mu_1} \otimes \cdots \otimes \psi_{\mu_p} {\rm Tr}\big[ v_{\mu_{p}} \cdots v_{\mu_{p-n+1}}C(\mu_{p-n},\ldots,\mu_1)] \\
						& = \sum_\mu \psi_{\mu_1} \otimes \cdots \otimes \psi_{\mu_p} {\rm Tr}\big[D(\mu_p, \ldots,\mu_{m+1})v_{\mu_m} \cdots v_{\mu_1} \big],
					\end{aligned}
				\end{equation}
				for families of $C$ and $D$ matrices from $M_k(\CM)$, indexed by the coefficients seen between the parentheses. As such, we must have
				\begin{equation}\label{Eq:XX1}
					\begin{aligned}
						& \sum_\mu \psi_{\mu_1} \otimes \cdots \otimes \psi_{\mu_p} {\rm Tr}\big[v_{\mu_m} \cdots v_{\mu_{p-n+1}} \\
						&\qquad \qquad \big(C(\mu_{p-n},\ldots,\mu_1)v_{\mu_p} \cdots v_{\mu_{m+1}} \\
						& \qquad \qquad  \qquad \qquad  -v_{\mu_{p-n}} \cdots v_{\mu_1} D(\mu_p, \ldots,\mu_{m+1}) \big )\big]=0.
					\end{aligned}
				\end{equation}
				Denoting the $k\times k$ matrices between the round parenthesis by $\tilde B$ and noticing that we have one such $\tilde B$ for each tuple $\{\mu_1,\ldots,\mu_{p-n},\mu_{m+1},\ldots,\mu_p\}$, the left side of Eq.~\eqref{Eq:XX1} becomes
				\begin{equation}
					\begin{aligned}
						\sum_\mu \psi_{\mu_1} \otimes \cdots \otimes \psi_{\mu_{n-p}} \otimes & \Gamma_r [ \tilde B(\mu_1,\ldots,\mu_{p-n}, \\
						& \qquad \qquad \mu_{m+1},\ldots,\mu_p) ] \otimes \psi_{\mu_{m+1}} \otimes \cdots \otimes \psi_{\mu_p },
					\end{aligned}
				\end{equation}
				with $r=m-p+n$. In the stated conditions, the map $\Gamma_r$ is injective, hence Eq.~\eqref{Eq:XX1} holds if and only if all $\tilde B$'s are zero, or,
				\begin{equation}
					C(\mu_{p-n},\ldots,\mu_1)v_{\mu_p} \cdots v_{\mu_{m+1}} = v_{\mu_{p-n}} \cdots v_{\mu_1} D(\mu_p, \ldots,\mu_{m+1})
				\end{equation}
				for all possible values of the seen indices. By repeatedly using~\eqref{Eq:VV}, we find
				\begin{equation}
					C(\mu_{p-n},\ldots,\mu_1) = v_{\mu_{p-n}} \cdots v_{\mu_1} \sum_\mu D(\mu_p, \ldots,\mu_{m+1})v_{\mu_p}^\ast \cdots v_{\mu_{m+1}}^\ast,
				\end{equation}
				and, by letting
				\begin{equation}
					B = \sum_\mu D(\mu_p, \ldots,\mu_{m+1})v_{\mu_p}^\ast \cdots v_{\mu_{m+1}}^\ast,
				\end{equation}
				$\psi$ in Eq.~\eqref{Eq:Psi} becomes $\Gamma_p(B)$ and the statement follows.
			\end{proof}
			
			We end the example by pointing out that this construction reduces to the one presented in \ref{Ex:VBS} if we take $J(0)=J(1)=\{1,\ldots,k\}$, $\psi(i,j)=\sum_\mu v_\mu(i,j)\psi_\mu$, and $\Gamma(i,j)=\delta_{ij}$ for $i,j=1, \ldots, k$.
		}$\Diamond$
	\end{example}
	
	\section{States, one sided ideals and hereditary subalgebras}\label{Sec:Background}
	
	This section collects known facts in preparation for our analysis of the bottom half of the diagram \eqref{Eq:MasterDiag}. We will reserve a separate section for the top half of \eqref{Eq:MasterDiag}. Let us start by recalling that the set of states and the set of pure states over the $C^\ast$-algebra $\Aa$, ${\rm S}(\Aa)$ and ${\rm PS}(\Aa)$ respectively, inherit topologies from the dual $\Aa^\ast$ space equipped with the weak$^\ast$ topology and, if $\alpha$ is a $G$-action on $\Aa$, this action lifts to states $\alpha_g(\omega):= \omega \circ \alpha_g^{-1}$. Since pure states remain so after acted by an automorphism, both ${\rm S}(\Aa)$ and ${\rm PS}(\Aa)$ are dynamical $G$-systems, but we should recall that, while ${\rm S}(\Aa)$ is always compact for unital $C^\ast$-algebras, ${\rm PS}(\Aa)$ is not, in general. In fact, for $\Aa=\Ss^{\otimes G}$, ${\rm PS}(\Aa)$ is a contractible and homogeneous space under the full group of automorphisms of $\Aa$ \cite{SpiegelJFA2025}, and ${\rm PS}(\Aa)$ is dense in ${\rm S}(\Aa)$ \cite[Example~4.1.31]{BratteliBook1}.
	
	For $\omega \in {\rm S}(\Aa)$, we will use $\Nn_\omega$ to denote the closed left ideal
	\begin{equation}
		\Nn_\omega : = \{a \in \Aa \, | \, \omega(a^\ast a)=0 \}.
	\end{equation}
	The left ideals of pure states are maximal and:
	
	\begin{theorem}[Th.~5.3.5 \cite{MurphyBook}] If $\Aa$ is a unital $C^\ast$-algebra, then the map $\omega \mapsto \Nn_\omega$ from ${\rm PS}(\Aa)$ to the set of maximal closed left ideals is a bijection.
	\end{theorem}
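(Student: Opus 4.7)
The statement decomposes into: $(\alpha)$ pure $\omega$ gives maximal $\Nn_\omega$; $(\beta)$ every maximal closed left ideal $L$ arises as some $\Nn_\omega$; $(\gamma)$ such $\omega$ is pure; and $(\delta)$ the map is injective on ${\rm PS}(\Aa)$. The GNS reformulation $\Nn_\omega = \{a : \pi_\omega(a)\xi_\omega = 0\}$ underlies all parts; $\Nn_\omega$ is automatically a closed left ideal by Cauchy--Schwarz applied to the sesquilinear form $\omega(\cdot^\ast \cdot)$, and is proper since $\omega(1) = 1$.

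For $(\alpha)$, purity of $\omega$ gives irreducibility of $\pi_\omega$. If $L \supsetneq \Nn_\omega$ is a closed left ideal, pick $a \in L \setminus \Nn_\omega$ so that $\pi_\omega(a)\xi_\omega \neq 0$; Kadison transitivity furnishes $b \in \Aa$ with $\pi_\omega(b)\pi_\omega(a)\xi_\omega = \xi_\omega$, whence $ba - 1 \in \Nn_\omega \subseteq L$, forcing $1 \in L$ and $L = \Aa$.

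For $(\beta)$, I construct $\omega$ via Hahn--Banach. In a unital $C^\ast$-algebra one has $d(1, L) = 1$ for every proper closed left ideal $L$, since any element within norm-distance $<1$ of $1$ is invertible and would force $L = \Aa$. Hence $d(1, \Bb) = 1$ for the norm-closed self-adjoint subspace $\Bb := L \cap L^*$, and Hahn--Banach extends the functional $\alpha \cdot 1 + x \mapsto \alpha$ on $\CM \cdot 1 + \Bb$ to a norm-one functional $\omega$ on $\Aa$ with $\omega(1) = 1$, which is automatically a state. For $a \in L$, $a^\ast a \in L \cap L^* = \Bb$, so $\omega(a^\ast a) = 0$ and $L \subseteq \Nn_\omega$; maximality of $L$ together with properness of $\Nn_\omega$ then give $L = \Nn_\omega$.

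For $(\gamma)$ and $(\delta)$, I first establish rigidity: two states with the same maximal left kernel are equal. When $\Nn_\omega = L$ is maximal, $\Aa/L$ is a simple left $\Aa$-module, and any state $\omega'$ with $\Nn_{\omega'} = L$ induces a $C^\ast$-compatible Hermitian form on $\Aa/L$ via $\langle [a], [b]\rangle_{\omega'} := \omega'(a^\ast b)$ with $\langle [1], [1]\rangle_{\omega'} = 1$. Given two such forms, the intertwiner $T : \Aa/L \to \Aa/L$ relating them is $\Aa$-linear, so by Schur's lemma on the simple module $T$ is a scalar, and the unit-vector normalization fixes this scalar at $1$. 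Hence $\omega' = \omega$, which is $(\delta)$. For $(\gamma)$: a non-trivial closed $\pi_\omega(\Aa)$-invariant subspace $K \subsetneq \Hh_\omega$ would produce $\xi_\omega = v + w$ with $v, w \neq 0$ (the projection onto $K$ lies in $\pi_\omega(\Aa)'$ since $K^\perp$ is also invariant), and the convex decomposition $\omega = \|v\|^2 \omega_v + \|w\|^2 \omega_w$; maximality would force $\Nn_{\omega_v} = \Nn_\omega$, and the rigidity just proved would give $\omega_v = \omega$. Unpacking, this forces $v$ to be a scalar multiple of $\xi_\omega$, which combined with $v \perp w$ and both non-zero is impossible. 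Hence $\pi_\omega$ is irreducible and $\omega$ is pure. The non-routine ingredients are Schur's lemma for simple $\Aa$-modules in the rigidity step and Kadison transitivity in $(\alpha)$; everything else is bookkeeping of the GNS construction and Hahn--Banach extension.
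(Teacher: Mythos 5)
The paper itself does not prove this statement (it imports it as Murphy, Th.~5.3.5), so your attempt can only be measured against the standard argument. Your parts $(\alpha)$ and $(\beta)$ are correct: Kadison transitivity does show that $\Nn_\omega$ is maximal among proper closed left ideals when $\omega$ is pure, and the Hahn--Banach construction of a state vanishing on $L\cap L^\ast$, hence with $L\subseteq \Nn_\omega$ and then $L=\Nn_\omega$ by maximality, is sound. The genuine gap is the ``rigidity'' step, on which both $(\gamma)$ and $(\delta)$ rest. First, the intertwiner $T:\Aa/L\to\Aa/L$ with $\langle x,y\rangle_{\omega'}=\langle x,Ty\rangle_\omega$ is not known to exist: you would need the $\omega'$-form to be bounded relative to the $\omega$-form, which nothing provides, and even granting that, Riesz representation only places $Ty$ in the completion $\Hh_\omega$, not in the algebraic module $\Aa/L$, so $T$ need not be an endomorphism of the simple module to which you want to apply Schur. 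Second, even if such an endomorphism existed, Schur's lemma only says ${\rm End}_\Aa(\Aa/L)$ is a division algebra over $\CM$; concluding it equals $\CM\cdot{\rm id}$ requires either a dimension count (Dixmier's lemma, unavailable here since $\Aa/L$ has Hamel dimension at least the continuum for infinite-dimensional $\Aa$) or spectral theory for a self-adjoint operator commuting with a topologically irreducible representation --- but topological irreducibility of $\pi_\omega$ is exactly the purity you are trying to establish, so that reading is circular; the non-circular substitute is Kadison's theorem that algebraically irreducible representations are GNS representations of pure states, which is far more than ``bookkeeping.'' Since $(\gamma)$ (purity of the state built in $(\beta)$, needed for surjectivity from ${\rm PS}(\Aa)$) and $(\delta)$ (injectivity) are both deduced from this rigidity claim, the bijection is not proved as written. (A separate, minor point: you use that $\Aa/L$ is a simple module, which needs the one-line observation that a maximal \emph{closed} left ideal is maximal among all left ideals, via the same distance-to-the-unit fact you used in $(\beta)$.)

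The standard route, which is also the one the paper leans on, avoids rigidity for arbitrary states. Surjectivity and purity come at once from Murphy's Th.~5.3.3 (quoted as Proposition 3.3 in the paper): every proper closed left ideal is an intersection of left kernels of pure states containing it, so a maximal $L$ equals $\Nn_\omega$ for some pure $\omega$. Injectivity is then proved only for \emph{pure} states, e.g.\ by showing (using irreducibility/transitivity) that $\ker\omega=\overline{\Nn_\omega+\Nn_\omega^\ast}$, so that any state vanishing on $\Nn_\omega$ and $\Nn_\omega^\ast$ and normalized at $1$ must equal $\omega$; alternatively one can pass through the correspondence with maximal hereditary $C^\ast$-subalgebras and minimal closed projections, as the paper does in Sections 3 and 6. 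If you want to keep your architecture, replace the Schur-based rigidity by such an argument restricted to the pure case, and obtain purity in $(\gamma)$ from the convex-decomposition argument alone (your computation that $\omega_v=\omega$ forces $v\in\CM\xi_\omega$ is fine once rigidity is available for the relevant states, which in the corrected scheme are compared through $\ker\omega$ rather than through an intertwiner).
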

	
	Furthermore:
	
	\begin{proposition}[Th.~5.3.3 \cite{MurphyBook}]\label{Prop:LSet}
		For any proper closed left ideal $L$, the set 
		\begin{equation}\label{Eq:Rr}
			\Ll= \{\Nn_\omega  \, | \, \omega \in {\rm PS}(\Aa), \ L \subseteq \Nn_\omega\}
		\end{equation}
		is non-empty and $L = \cap \, \Ll$.
	\end{proposition}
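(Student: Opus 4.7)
The plan is to prove non-emptiness via a Zorn's lemma argument and then establish the equality by producing, for each $a\notin L$, a pure state in $\Ll$ that does not kill $a^{\ast}a$. For non-emptiness, consider the collection of proper closed left ideals of $\Aa$ containing $L$, ordered by inclusion. This collection is inductive: in a unital $C^{\ast}$-algebra, any element within distance less than one of the unit is invertible, so every proper left ideal lies at distance at least one from $1$, and consequently the norm closure of the union of any chain of proper closed left ideals is again a proper closed left ideal. Zorn's lemma then delivers a maximal closed left ideal $M\supseteq L$, which by the preceding theorem coincides with $\Nn_\omega$ for some $\omega\in{\rm PS}(\Aa)$.

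The inclusion $L\subseteq\bigcap\Ll$ is immediate. For the reverse, fix $a\in\Aa\setminus L$, and consider the set $F:=\{\omega\in{\rm S}(\Aa):L\subseteq\Nn_\omega\}$. This is a weak$^{\ast}$-closed convex face of the compact state space: if $\omega=\lambda\omega_1+(1-\lambda)\omega_2$ belongs to $F$ with $0<\lambda<1$, then $\omega(x^{\ast}x)=0$ for $x\in L$ forces $\omega_i(x^{\ast}x)=0$ by positivity, so $\omega_i\in F$. Consequently, every extreme point of $F$ is extreme in ${\rm S}(\Aa)$, hence a pure state in $\Ll$. Assuming a state $\tau\in F$ with $\tau(a^{\ast}a)>0$ can be produced, the weak$^{\ast}$-continuous affine functional $\omega\mapsto\omega(a^{\ast}a)$ attains its positive maximum over the compact convex $F$ at an extreme point (Bauer's maximum principle), yielding the desired pure state $\omega\in\Ll$ with $a\notin\Nn_\omega$.

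To produce $\tau$, invoke that $L$ admits a right approximate unit $\{e_\lambda\}$ of positive contractions lying in the hereditary $C^{\ast}$-subalgebra $L\cap L^{\ast}$, and that $a\in L$ iff $ae_\lambda\to a$. Since $a\notin L$, there exist $c>0$ and a cofinal subnet along which $\|(1-e_\lambda)a^{\ast}a(1-e_\lambda)\|=\|a(1-e_\lambda)\|^2\geq c^2$. Choose $\sigma_\lambda\in{\rm S}(\Aa)$ attaining this norm and define
\begin{equation*}
\tau_\lambda(x):=\frac{\sigma_\lambda\bigl((1-e_\lambda)\,x\,(1-e_\lambda)\bigr)}{\sigma_\lambda\bigl((1-e_\lambda)^2\bigr)}.
\end{equation*}
The normalization is well defined and bounded below by $c^2/\|a\|^2$ because $(1-e_\lambda)a^{\ast}a(1-e_\lambda)\leq\|a\|^2(1-e_\lambda)^2$, and bounded above by $1$ because $(1-e_\lambda)^2\leq 1$. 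Hence $\tau_\lambda$ is a state with $\tau_\lambda(a^{\ast}a)\geq c^2$, while for any $x\in L$ one has $\|(1-e_\lambda)x^{\ast}x(1-e_\lambda)\|\to 0$ (since $x^{\ast}x\in L\cap L^{\ast}$ and $\{e_\lambda\}$ is an approximate unit for this hereditary subalgebra), so $\tau_\lambda(x^{\ast}x)\to 0$. A weak$^{\ast}$-convergent subnet of $\{\tau_\lambda\}$ produces $\tau\in F$ with $\tau(a^{\ast}a)\geq c^2>0$, finishing the argument.

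The main technical obstacle is this last construction, where one must simultaneously force $\tau$ into the face $F$ and preserve mass on $a^{\ast}a$; the symmetric compression by $(1-e_\lambda)$ together with the normalization by $\sigma_\lambda((1-e_\lambda)^2)$ is exactly what reconciles these competing requirements, and the lower bound on the denominator (extracted from $a\notin L$) is the nontrivial ingredient that prevents the normalized limit state from collapsing.
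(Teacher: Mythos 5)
Your proof is correct. The paper does not actually prove this statement --- it quotes it from Murphy's Theorem 5.3.3 --- and your argument is essentially the standard one behind that citation: the weak$^{\ast}$-closed face $F$ of states annihilating $L$, compressed states $\tau_\lambda(x)=\sigma_\lambda\bigl((1-e_\lambda)x(1-e_\lambda)\bigr)/\sigma_\lambda\bigl((1-e_\lambda)^2\bigr)$ built from the approximate unit of the hereditary subalgebra $L^{\ast}\cap L$, a weak$^{\ast}$-cluster point, and a Krein--Milman/Bauer extreme-point step to upgrade to a pure state; the key quantitative point, the lower bound $\sigma_\lambda\bigl((1-e_\lambda)^2\bigr)\geq c^2/\|a\|^2$ extracted from $a\notin L$, is handled correctly, and the face property of $F$ does guarantee that the maximizing extreme point is pure. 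One remark: for non-emptiness you invoke the bijection between pure states and maximal closed left ideals (the theorem stated just before, Murphy 5.3.5), which in Murphy's book is itself deduced from the present proposition, so within that development your Zorn argument would be circular; this is harmless here, both because the paper cites the bijection independently and because your second half already gives non-emptiness on its own (apply it to $a=1\notin L$ to produce a pure state in $F$), so the Zorn step can simply be dropped.
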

	
	A $C^\ast$-subalgebra $\Bb$ of a $C^\ast$-algebra $\Aa$ is called hereditary if $0\leq a \leq b$ implies $a \in \Bb$ for any $a\in \Aa$ and $b \in \Bb$. We recall the connection between closed left ideals and hereditary $C^\ast$-subalgebras:
	
	\begin{theorem}[\cite{MurphyBook}, Th. 3.2.1]\label{Th:LvsB} Let $\Aa$ be a unital $C^\ast$-algebra. Then
		\begin{equation}
			L \mapsto \Bb=L^\ast \cap L, \quad \Bb \mapsto L=\{a \in \Aa \; | \; a^\ast a \in \Bb \}
		\end{equation}
		is an isomorphism between the poset of proper left closed ideals and the poset $\mathfrak H (\Aa)$ of proper hereditary $C^\ast$-subalgebras of $\Aa$.
	\end{theorem}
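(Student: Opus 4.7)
The plan is to introduce the two maps $\Phi \colon L \mapsto L^\ast \cap L$ and $\Psi \colon \Bb \mapsto L_\Bb := \{a \in \Aa \mid a^\ast a \in \Bb\}$ and verify that each is well-defined, order-preserving, and that they are mutually inverse. Order preservation is immediate from the set-theoretic definitions. My toolkit will be approximate units in closed left ideals and in hereditary $C^\ast$-subalgebras, the continuous functional calculus, and the standard inequalities $(a+b)^\ast(a+b) \le 2(a^\ast a + b^\ast b)$ and $(ca)^\ast(ca) \le \|c\|^2 a^\ast a$.

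Well-definedness of $\Phi$: $L^\ast \cap L$ is plainly a norm-closed self-adjoint subspace, and is a subalgebra because $L^\ast \cdot L \subseteq L^\ast \cap L$ thanks to the right-ideal structure of $L^\ast$ and the left-ideal structure of $L$. For heredity, given $0 \le c \le b$ with $b \in L^\ast \cap L$ and $c \in \Aa$, I would first note that $b^{1/2} \in L^\ast \cap L$ since it is a norm-limit of polynomials in $b$ without constant term, each lying in the ideal $L$ (and in $L^\ast$). Invoking the Douglas-type factorization $c^{1/2} = \lim_n b^{1/2} x_n$ with $x_n \in \Aa$ and $\|x_n\|$ bounded, norm-closure of the right ideal $L^\ast$ places $c^{1/2} \in L^\ast$; by self-adjointness also $c^{1/2} \in L$, and therefore $c = c^{1/2} \cdot c^{1/2} \in L \cdot L \subseteq L$. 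Well-definedness of $\Psi$ is routine: norm-closedness of $L_\Bb$ follows from continuity of $a \mapsto a^\ast a$; closure under addition and left multiplication by $\Aa$ follow from the two inequalities above combined with heredity of $\Bb$.

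For $\Psi \circ \Phi = \mathrm{id}$: the inclusion $L \subseteq L_{L^\ast \cap L}$ is immediate since $a \in L$ gives $a^\ast a \in L \cap L^\ast$. Conversely, if $a^\ast a \in L^\ast \cap L =: \Bb$, then picking an approximate unit $\{e_\lambda\}$ of $\Bb$ I obtain
\begin{equation*}
  \|a - a e_\lambda\|^2 \;=\; \|(1 - e_\lambda)\, a^\ast a\, (1 - e_\lambda)\| \;\longrightarrow\; 0,
\end{equation*}
while each $a e_\lambda \in \Aa L \subseteq L$, so $a \in L$. For $\Phi \circ \Psi = \mathrm{id}$, the inclusion $\Bb \subseteq L_\Bb^\ast \cap L_\Bb$ reduces by linearity to the case $b \in \Bb_+$, where $b^{1/2} \in \Bb \subseteq L_\Bb$ gives $b = (b^{1/2})^\ast b^{1/2} \in L_\Bb^\ast \cap L_\Bb$.

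The main obstacle is the reverse inclusion $L_\Bb^\ast \cap L_\Bb \subseteq \Bb$: to deduce $a \in \Bb$ from $a^\ast a, a a^\ast \in \Bb$. My plan is first to establish the auxiliary fact $\Bb\,\Aa\,\Bb \subseteq \Bb$ by polarization, reducing to expressions of the form $c a c^\ast$ with $c \in \Bb$; for $a \in \Aa_+$ one has $c a c^\ast \le \|a\|\, c c^\ast \in \Bb$, and heredity yields $c a c^\ast \in \Bb$, with the general case following from decomposing $a$ into self-adjoint and positive/negative parts. Then, setting $\varphi_n(t) := t^{1/2}/(t + 1/n)^{1/2}$, I exploit the functional-calculus identity $a\,\varphi_n(a^\ast a) = \varphi_n(a a^\ast)\,a$ (valid because $\varphi_n$ is continuous with $\varphi_n(0) = 0$) to write
\begin{equation*}
  \varphi_n(a a^\ast)\, a\, \varphi_n(a^\ast a) \;=\; a\, \varphi_n(a^\ast a)^2 \;\longrightarrow\; a
\end{equation*}
in norm, since $\varphi_n^2(t) = t/(t + 1/n) \to 1$ uniformly on $\sigma(a^\ast a)$ up to arbitrarily small error near $0$. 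Each approximant lies in $\Bb\, \Aa\, \Bb \subseteq \Bb$ because $\varphi_n(a^\ast a), \varphi_n(a a^\ast) \in \Bb$, hence $a \in \overline{\Bb} = \Bb$, completing the proof.
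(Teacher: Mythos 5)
Your argument is correct and complete. Note that the paper offers no proof of this statement at all -- it is quoted verbatim from Murphy (Th.~3.2.1) -- so the relevant comparison is with the standard textbook argument. Murphy proves heredity of $L^\ast\cap L$ using the approximate unit that every closed left ideal possesses (his Th.~3.1.1): for $0\le c\le b$ with $b\in L^\ast\cap L$ one estimates $\|c^{1/2}(1-u_\lambda)\|^2\le\|(1-u_\lambda)b(1-u_\lambda)\|\to 0$, whence $c^{1/2}\in L$. You instead use the factorization $c^{1/2}=\lim_n b^{1/2}x_n$ together with the observations that $b^{1/2}\in L^\ast\cap L$ and that $L^\ast$ is a closed right ideal; this is an equally standard route and avoids invoking approximate units for ideals. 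For the key inclusion $L_\Bb^\ast\cap L_\Bb\subseteq\Bb$ you build an approximate unit directly out of $a^\ast a$ and $aa^\ast$ via $\varphi_n$ and combine it with $\Bb\,\Aa\,\Bb\subseteq\Bb$; Murphy reaches the same point with an abstract approximate unit of $\Bb$ and the same $\Bb\,\Aa\,\Bb\subseteq\Bb$ characterization, so here the two arguments are essentially parallel, yours being a bit more self-contained. Three small points to tighten: (i) the justification that $a\varphi_n(a^\ast a)^2\to a$ should not be phrased as uniform convergence of $\varphi_n^2$ to $1$ on ${\rm Spec}(a^\ast a)$ (it is genuinely non-uniform at $0$); the correct estimate is $\|a(1-\varphi_n^2(a^\ast a))\|^2=\sup_{t\in{\rm Spec}(a^\ast a)}t\,(1-\varphi_n^2(t))^2\le \tfrac{1}{4n}$, where the extra factor of $t$ is what kills the error near $0$; (ii) the inclusion $\Bb\subseteq L_\Bb^\ast\cap L_\Bb$ is stated circularly ($b^{1/2}\in\Bb\subseteq L_\Bb$ presupposes the inclusion); simply note that $b\in\Bb$ gives $b^\ast b\in\Bb$, hence $b\in L_\Bb$, and likewise $b^\ast\in L_\Bb$; (iii) the theorem as stated concerns \emph{proper} closed left ideals and \emph{proper} hereditary subalgebras, so add the one-line remark that $L=\Aa$ iff $1\in L$ iff $L^\ast\cap L=\Aa$ (a hereditary subalgebra containing $1$ is all of $\Aa$), so your bijection restricts correctly. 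None of these affects the validity of the proof.
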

	
	Let $\hat{\mathfrak H} (\Aa):= {\mathfrak H}(\Aa) \cup \Aa$ be the set of all hereditary $C^\ast$-subalgebras of $\Aa$. When equipped with the partial order given by inclusions, it becomes a sublattice of the lattice of $C^\ast$-subalgebras of $\Aa$, with the infimum given by intersection. This lattice is isomorphic with several interesting lattices appearing in the $C^\ast$-context (see \cite{AkemannAM2015} for a concise list). Given Proposition~\ref{Prop:LSet} and the close relation between left ideals and hereditary subalgebras, we point out the isomorphism most relevant to us:
	
	\begin{proposition}\label{Prop:BOmega}
		If $\Aa$ is a unital $C^\ast$-algebra, then the map $\omega \mapsto \Bb_\omega$ from ${\rm PS}(\Aa)$ to the set of maximal proper hereditary $C^\ast$-subalgebras is a bijection and the map
		\begin{equation}\label{Eq:BtoOmega}
			\Bb \overset{\mathfrak j}{\mapsto} \Omega_\Bb :=\{\omega \in {\rm PS}(\Aa)\, |\, \Bb \subseteq \Bb_\omega :=\Nn_\omega^\ast \cap \Nn_\omega \}
		\end{equation}
		is a lattice isomorphism from $(\hat{\mathfrak H}(\Aa),\subseteq)$ to $(\mathfrak P[{\rm PS}(\Aa)],\supseteq)$, where $\mathfrak P$ denotes the power set. The inverse of this map is
		\begin{equation}
			\Omega \mapsto \Bb_\Omega = \bigcap_{\omega \in \Omega} \Bb_\omega.
		\end{equation}
	\end{proposition}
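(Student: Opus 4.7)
The plan is to derive both claims by combining the two classical isomorphisms already recalled---the Murphy bijection $\omega \mapsto \Nn_\omega$ between pure states and maximal proper closed left ideals (Theorem~5.3.5 of~\cite{MurphyBook}) and the order-isomorphism $L \mapsto L^\ast \cap L$ of Theorem~\ref{Th:LvsB}---and then transporting the density statement of Proposition~\ref{Prop:LSet} through these identifications.

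For the first claim, since $L \mapsto L^\ast \cap L$ is an order-isomorphism of posets under inclusion, it sends maximal proper elements to maximal proper elements. Composing with the Murphy bijection and noting that $\Nn_\omega^\ast \cap \Nn_\omega = \Bb_\omega$, the map $\omega \mapsto \Bb_\omega$ is a bijection from ${\rm PS}(\Aa)$ onto the maximal proper hereditary $C^\ast$-subalgebras of $\Aa$.

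For the second claim, I would first verify that $\mathfrak i:\Omega \mapsto \bigcap_{\omega \in \Omega}\Bb_\omega$ indeed lands in $\hat{\mathfrak H}(\Aa)$, which is immediate since intersections of hereditary $C^\ast$-subalgebras are hereditary. The order-reversing property of $\mathfrak j$ is clear from its definition: $\Bb_1 \subseteq \Bb_2$ forces $\Omega_{\Bb_2} \subseteq \Omega_{\Bb_1}$, i.e., $\mathfrak j$ is order-preserving as a map $(\hat{\mathfrak H}(\Aa),\subseteq) \to (\mathfrak P[{\rm PS}(\Aa)],\supseteq)$; similarly for $\mathfrak i$. The crux is to show $\mathfrak i \circ \mathfrak j = {\rm id}_{\hat{\mathfrak H}(\Aa)}$. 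Fix $\Bb$ and let $L=\{a\in\Aa : a^\ast a \in \Bb\}$ be the associated closed left ideal given by Theorem~\ref{Th:LvsB}. Because that correspondence is an order-isomorphism, $\Bb \subseteq \Bb_\omega$ is equivalent to $L \subseteq \Nn_\omega$, so
\begin{equation*}
\Omega_\Bb \;=\; \{\omega \in {\rm PS}(\Aa) : L \subseteq \Nn_\omega\}.
\end{equation*}
Proposition~\ref{Prop:LSet} then yields $L = \bigcap_{\omega \in \Omega_\Bb} \Nn_\omega$, and taking adjoints and intersecting gives
\begin{equation*}
\Bb \;=\; L^\ast \cap L \;=\; \bigcap_{\omega \in \Omega_\Bb}\bigl(\Nn_\omega^\ast \cap \Nn_\omega\bigr) \;=\; \mathfrak i(\Omega_\Bb).
\end{equation*}

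This simultaneously establishes injectivity of $\mathfrak j$ and identifies $\mathfrak i$ as its left inverse, producing a Galois adjunction between $(\hat{\mathfrak H}(\Aa),\subseteq)$ and $(\mathfrak P[{\rm PS}(\Aa)],\supseteq)$. Compatibility of $\mathfrak j$ with arbitrary joins of $\hat{\mathfrak H}(\Aa)$ is then a direct consequence of the defining formula: $\omega \in \bigcap_\alpha \mathfrak j(\Bb_\alpha)$ iff $\Bb_\alpha \subseteq \Bb_\omega$ for all $\alpha$ iff $\bigvee_\alpha \Bb_\alpha \subseteq \Bb_\omega$, i.e., $\omega \in \mathfrak j(\bigvee_\alpha \Bb_\alpha)$, where one remembers that joins in the target poset $(\mathfrak P[{\rm PS}(\Aa)],\supseteq)$ are given by intersections of subsets. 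The main obstacle of the whole argument is the identity $\mathfrak i \circ \mathfrak j = {\rm id}$: this is the only step that genuinely uses pure-state information, being a direct transport of Proposition~\ref{Prop:LSet} across the Murphy and Theorem~\ref{Th:LvsB} correspondences; everything else is order-theoretic bookkeeping and transport through already-established bijections.
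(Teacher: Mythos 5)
Your proposal is correct and follows exactly the route the paper itself (implicitly) takes: Proposition~\ref{Prop:BOmega} is stated there without a separate proof, as a transport of Proposition~\ref{Prop:LSet} through Murphy's bijection $\omega \mapsto \Nn_\omega$ and the ideal--hereditary correspondence of Theorem~\ref{Th:LvsB}, and your key computation --- $\Bb \subseteq \Bb_\omega \Leftrightarrow L \subseteq \Nn_\omega$ by the order isomorphism, then $L=\bigcap_{\omega\in\Omega_\Bb}\Nn_\omega$ from Proposition~\ref{Prop:LSet}, then $\Bb = L^\ast\cap L = \bigcap_{\omega\in\Omega_\Bb}\Bb_\omega$ --- is sound (for $\Bb=\Aa$ one invokes the empty-intersection convention recorded in the paper's remark, which you could mention explicitly).

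One observation, which is a defect of the statement rather than of your argument: you prove that $\mathfrak j$ is an order embedding with left inverse $\mathfrak i$, i.e.\ a Galois insertion onto its image, and you stop short of the literal claim that $\mathfrak j$ is a bijection onto the full power set $\mathfrak P[{\rm PS}(\Aa)]$ with two-sided inverse $\mathfrak i$. That stronger literal claim cannot be proved because it fails in general: already for $\Aa=M_2(\CM)$, if $\omega_1,\omega_2$ are vector states attached to non-orthogonal, non-parallel unit vectors, then $\Bb_{\omega_i}=\CM\, q_i$ with $q_i$ rank-one projections, so $\Bb_{\omega_1}\cap\Bb_{\omega_2}=\{0\}$ and $\mathfrak j(\{0\})={\rm PS}(\Aa)\neq\{\omega_1,\omega_2\}$; hence $\mathfrak j\circ\mathfrak i\neq{\rm id}$ and not every subset of ${\rm PS}(\Aa)$ lies in the image of $\mathfrak j$. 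What you establish (injectivity, order reversal, $\mathfrak i\circ\mathfrak j={\rm id}$, and compatibility with meets and joins on the image) is the honest content of the proposition and is exactly what the paper uses downstream, so your stopping point is the right one; if you wish, add a sentence noting that the ``isomorphism'' should be read as an isomorphism onto the image of $\mathfrak j$ (equivalently, onto the sets of the form $\Omega_\Bb$).
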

	
	\begin{remark}
		{\rm Note that the above map takes $\Aa\in \hat{\mathfrak H}(\Aa)$ into $\emptyset \in \mathfrak P[{\rm PS}(\Aa)]$ and $\{0\}\in \hat{\mathfrak H}(\Aa)$ into ${\rm PS}(\Aa) \in \mathfrak P[{\rm PS}(\Aa)]$.} $\Diamond$
	\end{remark}
	
	\begin{corollary}\label{Cor:StateSets}
		There is a well defined map ${\rm S}(\Aa) \overset{\mathfrak s}{\to} (\mathfrak P[{\rm PS}(\Aa)]$ given by
		\begin{equation}
			\omega {\mapsto} \Bb_\omega = \Nn_\omega^\ast \cap \Nn_\omega \mapsto \Omega_{\Bb_\omega}.
		\end{equation}
		We call $\mathfrak s(\omega)$ the support of $\omega$ in ${\rm PS}(\Aa)$.
	\end{corollary}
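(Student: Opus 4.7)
The statement asserts that the composition $\omega \mapsto \Bb_\omega \mapsto \Omega_{\Bb_\omega}$ is well-defined on all of ${\rm S}(\Aa)$, extending the construction of Proposition~\ref{Prop:BOmega} from pure to arbitrary states. My plan is to verify that each of the two intermediate maps is defined for every state, at which point the composition is automatic.

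First I would check that $\Nn_\omega$ is a proper closed left ideal for any $\omega \in {\rm S}(\Aa)$. Closedness is immediate from continuity of $a \mapsto \omega(a^\ast a)$. For the left-ideal property, the operator inequality $a^\ast c^\ast c a \leq \|c\|^2\, a^\ast a$ combined with positivity of $\omega$ gives $\omega\bigl((ca)^\ast(ca)\bigr) \leq \|c\|^2 \omega(a^\ast a) = 0$ whenever $a \in \Nn_\omega$, so $ca \in \Nn_\omega$. Properness follows from $\omega(1) = 1 \neq 0$, which forces $1 \notin \Nn_\omega$.

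Once $\Nn_\omega$ is known to be a proper closed left ideal, Theorem~\ref{Th:LvsB} gives directly that $\Bb_\omega = \Nn_\omega^\ast \cap \Nn_\omega$ is a proper hereditary $C^\ast$-subalgebra, hence an element of $\hat{\mathfrak H}(\Aa)$. Proposition~\ref{Prop:BOmega} then supplies the lattice map $\mathfrak j\colon \hat{\mathfrak H}(\Aa) \to \mathfrak P[{\rm PS}(\Aa)]$, and one may form $\mathfrak s(\omega) := \mathfrak j(\Bb_\omega) = \Omega_{\Bb_\omega}$. The result is unambiguously defined and functorial in $\omega$, so the corollary follows by composition of already-established constructions.

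There is no substantive obstacle here; the content of the corollary is to formalize and name the construction. The only mild subtlety worth flagging is that for a non-pure $\omega$ the ideal $\Nn_\omega$ need not be maximal, so $\Bb_\omega$ need not be a maximal proper hereditary subalgebra and accordingly $\mathfrak s(\omega)$ need not be a singleton in ${\rm PS}(\Aa)$. Rather, $\mathfrak s(\omega)$ collects precisely those pure states $\eta$ whose associated maximal hereditary subalgebra $\Bb_\eta$ contains $\Bb_\omega$, which is the natural ``support'' picture one expects for mixed states and is consistent with the remark following Proposition~\ref{Prop:BOmega}.
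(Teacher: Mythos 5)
Your argument is correct and matches the paper's (implicit) reasoning: the corollary is stated without proof precisely because, as you verify, $\Nn_\omega$ is a proper closed left ideal for every state, so Theorem~\ref{Th:LvsB} makes $\Bb_\omega$ a proper hereditary $C^\ast$-subalgebra and Proposition~\ref{Prop:BOmega} supplies the map $\mathfrak j$ to complete the composition. Your closing remark about $\mathfrak s(\omega)$ not being a singleton for mixed states is an accurate and harmless elaboration, not a deviation.
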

	
	The following characterization of hereditary $C^\ast$-subalgebras is useful: 
	
	\begin{theorem}[\cite{MurphyBook}, Th. 3.2.5]\label{Th:HCharacterization} Let $\Aa$ be a separable $C^\ast$-algebra. If $\Bb$ is a hereditary $C^\ast$-subalgebra of $\Aa$, then there exists a positive element $w_\Bb \in \Bb$ such that 
		\begin{equation}
			\Bb = \overline{w_\Bb \Aa w_\Bb} \ {\rm and} \ L(\Bb)=\overline{\Aa w_\Bb}.
		\end{equation}
	\end{theorem}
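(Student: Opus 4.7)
The plan is to construct $w_\Bb$ as a \emph{strictly positive} element of $\Bb$, exploiting separability, and then leverage the fact that continuous functional calculus of such an element supplies a canonical approximate unit for $\Bb$. Since $\Aa$ is separable, so is $\Bb$, so one may fix a dense sequence $(b_n)$ in the positive part of the unit ball of $\Bb$ and set
\begin{equation}
w_\Bb := \sum_{n \geq 1} 2^{-n} b_n \in \Bb_+.
\end{equation}
For any nonzero state $\omega$ on $\Bb$, density of $(b_n)$ forces $\omega(b_n) > 0$ for some $n$, hence $\omega(w_\Bb) > 0$; that is, $w_\Bb$ is strictly positive in $\Bb$. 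A standard consequence is that $e_n := f_n(w_\Bb)$, with $f_n(t) := t(t+\tfrac{1}{n})^{-1}$, forms an approximate unit for $\Bb$. Since $f_n(t) = t\,g_n(t)$ with $g_n$ continuous on $\sigma(w_\Bb)$, each $e_n$ factors as $w_\Bb\, g_n(w_\Bb) \in w_\Bb \Bb \cap \Bb w_\Bb$.

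For $\Bb = \overline{w_\Bb \Aa w_\Bb}$, the inclusion $\overline{w_\Bb \Aa w_\Bb} \subseteq \Bb$ reduces to showing $w_\Bb a w_\Bb \in \Bb$ for every $a \in \Aa$. Splitting $a$ into self-adjoint real and imaginary parts, the sandwich estimate $-\|a\|\,w_\Bb^2 \leq w_\Bb a w_\Bb \leq \|a\|\,w_\Bb^2$ combined with heredity of $\Bb$ and $\|a\|\,w_\Bb^2 \in \Bb_+$ yields the claim. The reverse inclusion follows because each $b \in \Bb$ satisfies $b = \lim_n e_n b e_n$ by the approximate unit property, and
\begin{equation}
e_n b e_n \;=\; w_\Bb\,\bigl[g_n(w_\Bb)\, b\, g_n(w_\Bb)\bigr]\, w_\Bb \;\in\; w_\Bb \Aa w_\Bb.
\end{equation}

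For $L(\Bb) = \overline{\Aa w_\Bb}$, first observe that for $a \in \Aa$ we have $(a w_\Bb)^\ast(a w_\Bb) = w_\Bb a^\ast a w_\Bb \in \Bb$ by the previous step, so $\Aa w_\Bb \subseteq L(\Bb)$ and hence $\overline{\Aa w_\Bb} \subseteq L(\Bb)$. Conversely, if $a \in L(\Bb)$ then $a^\ast a \in \Bb$, and working in the unitization of $\Aa$ one computes
\begin{equation}
\|a - a e_n\|^2 \;=\; \|(1 - e_n)\,a^\ast a\,(1 - e_n)\|,
\end{equation}
whose right-hand side tends to $0$ upon applying the approximate unit property to $a^\ast a \in \Bb$. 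Since $a e_n = a\, w_\Bb\, g_n(w_\Bb) \in \Aa w_\Bb$, this gives $a \in \overline{\Aa w_\Bb}$.

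The main technical hinge is the interplay between $w_\Bb$ and the ambient algebra $\Aa$: it is the hereditary property that converts the spectral sandwich $\pm\|a\|\,w_\Bb^2$ into genuine membership $w_\Bb a w_\Bb \in \Bb$, which is exactly the nontrivial ingredient beyond the purely $\Bb$-internal spectral theory. Separability enters solely through the existence of a strictly positive element via the explicit series construction; once that is in hand, the remaining arguments are routine manipulations with approximate units and continuous functional calculus.
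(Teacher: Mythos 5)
Your proof is correct, but it follows a different route from the one the paper (via Murphy, Th.~3.2.5, and Remark~\ref{Re:W}) has in mind. The paper's construction starts from a \emph{sequential approximate unit} $\{1_n^\Bb\}$ of $\Bb$ (available by separability, and later chosen to consist of projections since the relevant algebras have real rank zero) and sets $w_\Bb=\sum 2^{-n}1_n^\Bb$, so the approximate-unit property of the building blocks is given from the outset; this is also the form the paper actually uses downstream, e.g.\ $w=\sum_m 2^{-m}p_{\Lambda_m}$ in the proof of Theorem~\ref{Prop:HvsFf}. You instead take a dense sequence $(b_n)$ in the positive unit ball, show $w_\Bb=\sum 2^{-n}b_n$ is \emph{strictly positive} in $\Bb$, and then invoke the standard theorem that $e_n=w_\Bb(w_\Bb+\tfrac1n)^{-1}$ is an approximate unit; from there your heredity argument for $\overline{w_\Bb\Aa w_\Bb}\subseteq\Bb$ (adding $\|a\|w_\Bb^2$ to make the sandwich positive), the factorization $e_nbe_n=w_\Bb[g_n(w_\Bb)bg_n(w_\Bb)]w_\Bb$, and the computation $\|a-ae_n\|^2=\|(1-e_n)a^\ast a(1-e_n)\|$ for $a^\ast a\in\Bb$ are all sound, including the identification $ae_n=(ag_n(w_\Bb))w_\Bb\in\Aa w_\Bb$. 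What your route buys is independence from any preferred approximate unit (any strictly positive element works); what it costs is the reliance on the non-trivial but standard fact that strictly positive elements generate canonical approximate units, which you assert rather than prove, whereas the approximate-unit-first construction gets that ingredient for free and, in the paper's setting, additionally yields $w_\Bb$ built from projections, which is what makes the localized units $p_\Lambda(\Bb)$ available later.
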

	
	\begin{remark}\label{Re:W}
		{\rm We recall the construction of $w_\Bb$. Since $\Aa$ is separable, $\Bb$ is separable too. Hence, the latter accepts a sequential approximate unit $\{1_n^\Bb\}_{n\in \NM^\times}$, and we can choose
			\begin{equation}\label{Eq:WB}
				w_\Bb:=\sum 2^{-n}\, 1_n^\Bb.
			\end{equation}
			Note that we made the choice to have $w_\Bb  \leq 1$. While $w_\Bb$ depends on the chosen approximate unit, $\overline{w_\Bb \Aa w_\Bb}$ does not. Also, note that, if $\Aa$ is finite dimensional, then $w_\Bb$ can be chosen in a {\it unique} way to be a projection from $\Aa$.} $\Diamond$
	\end{remark}
	
	\section{Frustration-freeness vs hereditary $C^\ast$-subalgebras}\label{Sec:FFfromB}
	
	This section establishes the horizontal part of the diagram~\ref{Eq:MasterDiag}. In the process, we introduce and characterize the entry $\mathfrak H^F(\Ss^{\otimes G})$ of this diagram.
	
	\subsection{Intrinsic characterization of frustration-freeness}

	Hereditary $C^\ast$-subalgebras of AF algebras are AF themselves, and AF-algebras have real rank zero. Hence, all $\Bb \in \mathfrak H(\Ss^{\otimes G})$ have real rank zero and, as such, they always have approximate units given by nets of projections \cite{StrungBook}. Below, we identify the class of hereditary $C^\ast$-subalgebras for which these approximate units are frustration-free systems of projections:
	
	\begin{lemma}\label{Prop:BbUnit}
		Let $\Bb \in \mathfrak H(\Ss^{\otimes G})$ be proper. Then $\Bb \cap \Ss_\Lambda$ is a hereditary subalgebra of $\Ss_\Lambda$ for each $\Lambda \in {\rm K}(G)$. As such, there exists a unique projection $p_\Lambda(\Bb) \in \Ss_\Lambda \subset \Ss^{\otimes G}$ such that 
		\begin{equation}
			\Bb \cap \Ss_\Lambda = p_\Lambda(\Bb) \, \Ss_\Lambda \, p_\Lambda(\Bb).
		\end{equation}
		If $\Bb$ displays the property
		\begin{equation}\label{Eq:PropF}
			{\rm (F):} \ \  \Bb = \overline{\cup_{\Lambda \in {\rm K}(G)} (\Bb \cap \Ss_\Lambda)},
		\end{equation}
		then $\{p_\Lambda(\Bb)\}_{\Lambda \in {\rm K}(G)}$ is an approximate unit for $\Bb$ and a frustration-free proper system of projections.
	\end{lemma}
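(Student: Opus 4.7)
The statement breaks into three independent pieces: (a) hereditariness of $\Bb\cap\Ss_\Lambda$ inside $\Ss_\Lambda$, (b) existence and uniqueness of the projection $p_\Lambda(\Bb)$, and (c) the three properties claimed for $\{p_\Lambda(\Bb)\}$ under~(F). Step (a) is immediate: $\Bb\cap\Ss_\Lambda$ is a $C^\ast$-subalgebra of $\Ss_\Lambda$ by intersection, and if $0\leq a\leq b$ with $a\in\Ss_\Lambda$ and $b\in\Bb\cap\Ss_\Lambda$, hereditariness of $\Bb$ inside $\Ss^{\otimes G}$ places $a$ in $\Bb$, hence in $\Bb\cap\Ss_\Lambda$. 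For step (b), I would invoke Theorem~\ref{Th:HCharacterization} and Remark~\ref{Re:W} applied to the finite-dimensional algebra $\Ss_\Lambda\simeq\BM(\Hh_\Lambda)$: this furnishes a positive $w\in\Bb\cap\Ss_\Lambda$ with $\Bb\cap\Ss_\Lambda=\overline{w\,\Ss_\Lambda\,w}$, and by the last line of the remark, $w$ can be taken uniquely to be a projection, which I denote $p_\Lambda(\Bb)$. Uniqueness can also be checked directly: if two projections in $\Ss_\Lambda$ generate the same corner, then $p=qxq$ for some $x\in\Ss_\Lambda$, so $qp=q^2xq=qxq=p$, and taking adjoints yields $pq=p$, whence $p\leq q$; symmetry closes it.

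For step (c), I first extract the frustration-free nesting property without yet invoking (F). For $\Lambda_1\subseteq\Lambda_2$ in ${\rm K}(G)$, the inclusion $\Ss_{\Lambda_1}\subseteq\Ss_{\Lambda_2}$ yields $\Bb\cap\Ss_{\Lambda_1}\subseteq\Bb\cap\Ss_{\Lambda_2}$, so $p_{\Lambda_1}(\Bb)\in p_{\Lambda_2}(\Bb)\,\Ss_{\Lambda_2}\,p_{\Lambda_2}(\Bb)$, and hence
\begin{equation}
p_{\Lambda_1}(\Bb) = p_{\Lambda_2}(\Bb)\, p_{\Lambda_1}(\Bb)\, p_{\Lambda_2}(\Bb).
\end{equation}
Left-multiplication by $p_{\Lambda_2}(\Bb)$ together with idempotency gives $p_{\Lambda_2}(\Bb)\,p_{\Lambda_1}(\Bb)=p_{\Lambda_1}(\Bb)$, and taking adjoints produces the right-absorption, exactly~\eqref{Eq:FF2}. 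Properness is then a one-line contradiction: $p_\Lambda(\Bb)=1_\Lambda$ would place the unit of $\Ss^{\otimes G}$ inside $\Bb$, and hereditariness would force $\Bb=\Ss^{\otimes G}$, against the hypothesis.

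The approximate-unit property is the only step that actually uses (F). Given $b\in\Bb$ and $\varepsilon>0$, property~(F) produces some $\Lambda_0\in{\rm K}(G)$ and $b_0\in\Bb\cap\Ss_{\Lambda_0}=p_{\Lambda_0}(\Bb)\,\Ss_{\Lambda_0}\,p_{\Lambda_0}(\Bb)$ with $\|b-b_0\|<\varepsilon/2$; for every $\Lambda\supseteq\Lambda_0$, the frustration-free relations just established give $p_\Lambda(\Bb)\,b_0=b_0=b_0\,p_\Lambda(\Bb)$, and a triangle inequality yields $\|p_\Lambda(\Bb)\,b-b\|\leq\varepsilon$ and the mirror estimate. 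The only delicate point I anticipate is the algebraic passage from the sandwich identity to the two-sided absorption in~\eqref{Eq:FF2}; every subsequent claim — frustration-freeness, properness, and the approximate-unit estimate — leans on that single manipulation, so I would write it out explicitly rather than leave it to the reader.
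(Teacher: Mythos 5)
Your proposal is correct and follows essentially the same route as the paper's proof: the containment $\Bb\cap\Ss_{\Lambda_1}\subseteq\Bb\cap\Ss_{\Lambda_2}$ giving $p_{\Lambda_1}=p_{\Lambda_2}\,s\,p_{\Lambda_2}$ and hence the absorption relations \eqref{Eq:FF2}, followed by the density supplied by (F) plus a triangle inequality for the approximate-unit property. The additional detail you supply — hereditariness of the intersection, the corner characterization and uniqueness of $p_\Lambda(\Bb)$ via Remark~\ref{Re:W}, and the explicit properness argument through the unital embedding — merely fills in steps the paper leaves implicit, and is sound.
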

	
	\begin{proof} 
		Property (F) assures us that the intersections $\Bb \cap \Ss_\Lambda$ are not all void and that none of the $p_\Lambda$'s can be the identity. Hence, $\{p_\Lambda\}$ is non trivial and proper. 
		We first claim that $\{p_\Lambda\}=\{p_\Lambda(\Bb)\}_{\Lambda \in {\rm K}(G)}$ is an increasing net. Indeed, if $\Lambda_1 \subseteq \Lambda_2$,
		\begin{equation}
			\Bb \cap \Ss_{\Lambda_1} \subseteq \Bb \cap \Ss_{\Lambda_2},
		\end{equation}
		hence $p_{\Lambda_1}= p_{\Lambda_2} s \, p_{\Lambda_2}$ for some $s \in \Ss_{\Lambda_2}$ and, as such, $p_{\Lambda_2} p_{\Lambda_1}=p_{\Lambda_1}$ and the claim follows. Property (F) also assures us that, for each $b\in \Bb$, there exists a net $b_\Lambda \in \Bb \cap \Ss_\Lambda$ such that $\lim b_\Lambda =b$. Then
		\begin{equation}
			\|b - p_\Lambda b\| = \|b- b_\Lambda - p_\Lambda(b-b_\Lambda)\| \to 0,
		\end{equation}
		which shows that $\{p_\Lambda\}$ is indeed an approximate unit.
	\end{proof}
	
	\begin{remark}
		{\rm Clearly, if $\Bb$ is $G$-invariant, then $\{p_\Lambda(\Bb)\}_{\Lambda \in {\rm K}(G)}$ is a $G$-system.}\ $\Diamond$
	\end{remark}
	We will refer to the approximate unit singled out above as the localized approximate unit. We denote the subset of proper hereditary $C^\ast$-subalgebras displaying property (F) by $\mathfrak H^F(\Ss^{\otimes G})$ and note that this subset is $G$-invariant. We denote by $\hat{\mathfrak H}^F(\Ss^{\otimes G})$ the subset of all hereditary $C^\ast$-subalgebras displaying property (F). This is mostly for notational purposes, because the latter is just ${\mathfrak H}^F(\Ss^{\otimes G})$ with $\Ss^{\otimes G}$ added to it.
	
	\begin{theorem}\label{Prop:HvsFf}
		The following hold:
		\begin{enumerate}
			\item The map 
			\begin{equation}\label{Eq:BtoP}
				\mathfrak H^F(\Ss^{\otimes G}) \ni \Bb \overset{\mathfrak p}{\mapsto} \{p_\Lambda(\Bb)\} \in \mathfrak F(\Ss^{\otimes G})
			\end{equation}
			is an injective and $G$-equivariant morphism of posets.
			
			\item There exists a $G$-equivariant morphism $\mathfrak p^{-1}$ of posets from $\mathfrak F(\Ss^{\otimes G})$ to $\mathfrak H^F(\Ss^{\otimes G})$, which is a left inverse for \eqref{Eq:BtoP}.
			
		\end{enumerate}
		
	\end{theorem}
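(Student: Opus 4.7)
The plan is to treat the two items in sequence, leveraging Lemma~\ref{Prop:BbUnit} for the first and constructing an explicit candidate for $\mathfrak p^{-1}$ in the second, where the left-inverse identity then becomes a direct consequence of property (F).

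For item~(1), well-definedness of $\mathfrak p$ is exactly Lemma~\ref{Prop:BbUnit}. Because each $p_\Lambda(\Bb)$ is uniquely determined by $\Bb\cap\Ss_\Lambda = p_\Lambda(\Bb)\Ss_\Lambda p_\Lambda(\Bb)$, applying the automorphism $\alpha_g$ to both sides yields $\alpha_g(\Bb)\cap\Ss_{g\cdot\Lambda} = \alpha_g(p_\Lambda(\Bb))\Ss_{g\cdot\Lambda}\alpha_g(p_\Lambda(\Bb))$, so uniqueness forces $p_{g\cdot\Lambda}(\alpha_g(\Bb))=\alpha_g(p_\Lambda(\Bb))$, proving $G$-equivariance. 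For the order, $\Bb_1\subseteq\Bb_2$ implies $\Bb_1\cap\Ss_\Lambda\subseteq\Bb_2\cap\Ss_\Lambda$, hence $p_\Lambda(\Bb_1)\in p_\Lambda(\Bb_2)\Ss_\Lambda p_\Lambda(\Bb_2)$, which forces $p_\Lambda(\Bb_1)\leq p_\Lambda(\Bb_2)$. Injectivity of $\mathfrak p$ will drop out once a left inverse is produced in item~(2).

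For item~(2), I would define
\begin{equation*}
\mathfrak p^{-1}(\{p_\Lambda\}) := \overline{\textstyle\bigcup_{\Lambda\in{\rm K}(G)} p_\Lambda\,\Ss_\Lambda\,p_\Lambda}\subset\Ss^{\otimes G}.
\end{equation*}
Frustration-freeness gives $p_\Lambda p_{\Lambda'}=p_\Lambda$ for $\Lambda\subseteq\Lambda'$, so $p_\Lambda\Ss_\Lambda p_\Lambda = p_{\Lambda'}(p_\Lambda\Ss_\Lambda p_\Lambda)p_{\Lambda'} \subseteq p_{\Lambda'}\Ss_{\Lambda'} p_{\Lambda'}$; thus the union is an increasing directed system of finite-dimensional hereditary $C^\ast$-subalgebras of $\Ss^{\otimes G}$. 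A density argument against the local net $\bigcup_\Lambda\Ss_\Lambda$ shows the identity $\mathfrak p^{-1}(\{p_\Lambda\})=\overline{\bigcup_\Lambda p_\Lambda\Ss^{\otimes G}p_\Lambda}$, and from this presentation as a norm-closed increasing union of corner (hence hereditary) subalgebras, hereditariness of $\mathfrak p^{-1}(\{p_\Lambda\})$ follows by approximating $b_1,b_2\in\mathfrak p^{-1}(\{p_\Lambda\})$ inside a common $p_\Lambda\Ss^{\otimes G}p_\Lambda$ and using $b_1 c b_2\in p_\Lambda\Ss^{\otimes G}p_\Lambda$ for any $c\in\Ss^{\otimes G}$. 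Properness is immediate from $\|1-p_\Lambda s p_\Lambda\|\geq\|p_\Lambda^\bot\|=1$ for all proper $p_\Lambda$. Property~(F) is visible by construction, since $p_\Lambda\Ss_\Lambda p_\Lambda\subseteq \mathfrak p^{-1}(\{p_\Lambda\})\cap\Ss_\Lambda$.

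The left-inverse identity then unfolds cleanly: for $\Bb\in\mathfrak H^F(\Ss^{\otimes G})$,
\begin{equation*}
\mathfrak p^{-1}(\mathfrak p(\Bb)) = \overline{\textstyle\bigcup_\Lambda p_\Lambda(\Bb)\,\Ss_\Lambda\,p_\Lambda(\Bb)} = \overline{\textstyle\bigcup_\Lambda (\Bb\cap\Ss_\Lambda)} = \Bb
\end{equation*}
by the defining property of $p_\Lambda(\Bb)$ and property~(F) of $\Bb$. This yields injectivity of $\mathfrak p$ at once. The order-preserving property of $\mathfrak p^{-1}$ follows from $p_\Lambda\leq q_\Lambda \Rightarrow p_\Lambda\Ss_\Lambda p_\Lambda \subseteq q_\Lambda\Ss_\Lambda q_\Lambda$, and $G$-equivariance follows from $\alpha_g(p_\Lambda\Ss_\Lambda p_\Lambda)=\alpha_g(p_\Lambda)\Ss_{g\cdot\Lambda}\alpha_g(p_\Lambda)$ together with the reindexing $\Lambda\mapsto g\cdot\Lambda$ inside the union. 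The main subtlety is the identification of the two presentations of $\mathfrak p^{-1}(\{p_\Lambda\})$ and the consequent hereditariness; everything else is bookkeeping around property~(F) and the defining relation of Lemma~\ref{Prop:BbUnit}.
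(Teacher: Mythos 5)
Your proposal is correct, and it diverges from the paper's proof in part (2) in a genuinely different (and arguably cleaner) way. The paper constructs $\mathfrak p^{-1}(\{p_\Lambda\})$ as $\overline{w\,\Ss^{\otimes G} w}$ with $w=\sum_m 2^{-m}p_{\Lambda_m}$, which is hereditary by construction but forces extra work: one must show $\{p_\Lambda\}$ is an approximate unit (hence that the construction is independent of the chosen exhausting sequence $\{\Lambda_m\}$), then verify property (F) by an $\epsilon$-argument, and rule out $\Bb=\Ss^{\otimes G}$ via the approximate-unit property. You instead define $\mathfrak p^{-1}(\{p_\Lambda\})=\overline{\bigcup_\Lambda p_\Lambda\Ss_\Lambda p_\Lambda}$, which is manifestly choice-free, makes property (F) and the left-inverse identity $\mathfrak p^{-1}(\mathfrak p(\Bb))=\overline{\bigcup_\Lambda(\Bb\cap\Ss_\Lambda)}=\Bb$ immediate, and shifts the burden onto hereditariness, which you settle correctly by identifying the algebra with $\overline{\bigcup_\Lambda p_\Lambda\Ss^{\otimes G}p_\Lambda}$ (using frustration-freeness $p_\Lambda\le p_{\Lambda\cup\Xi}$ against the dense local net) and invoking the standard criterion that a closed $C^\ast$-subalgebra $\Bb$ with $\Bb\,\Ss^{\otimes G}\,\Bb\subseteq\Bb$ is hereditary — a fact available in the very reference (Murphy, Sec.~3.2) the paper already uses, though you should cite it explicitly since it carries the weight of the argument. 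Your properness argument via the compression estimate $\|1-p_\Lambda s p_\Lambda\|\ge\|p_\Lambda^\perp\|=1$ is simpler than the paper's contradiction argument, and deducing injectivity of $\mathfrak p$ from the left inverse replaces the paper's appeal to uniqueness of the element $w_\Bb$ from Theorem~\ref{Th:HCharacterization}; both are valid. The equivariance and order-preservation arguments are essentially the paper's. Net effect: the paper's route produces the approximate-unit statement as a by-product (which it reuses later, e.g.\ in Lemma~\ref{Lemm:PLvsOmega}), while yours buys a shorter, choice-free verification of (F) and of the left-inverse identity.
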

	
	\begin{proof}
		(1) The map respects the partial orders because, if $\Bb \subseteq \Bb'$, then $\Bb \cap \Ss_\Lambda \subseteq \Bb' \cap \Ss_\Lambda$ and, as such, $p_\Lambda \in p'_\Lambda \Ss_\Lambda p'_\Lambda$, which implies $p_\Lambda p'_\Lambda = p'_\Lambda p_\Lambda = p_\Lambda$. Furthermore, if two hereditary subalgebras admit identical approximate units, then their corresponding $w$'s coincide and \ref{Th:HCharacterization} assures us that the hereditary $C^\ast$-subalgebras are identical. Now, for each $\Lambda \in {\rm K}(G)$, we have
		\begin{equation}
			\alpha_g(\Bb) \cap \Ss_{g\cdot \Lambda} = p_{g\cdot \Lambda}\big (\alpha_g(\Bb)\big) \, \Ss_{g\cdot \Lambda} \, p_{g\cdot \Lambda}\big (\alpha_g(\Bb)\big ).
		\end{equation}
		In the same time,
		\begin{equation}
			\begin{aligned}
				\alpha_g(\Bb) \cap \Ss_{g\cdot \Lambda} & = \alpha_g(\Bb \cap \Ss_{\Lambda}) \\
				& =\alpha_g\big(p_\Lambda(\Bb)\, \Ss_\Lambda\, p_\Lambda(\Bb)\big )= \alpha_g\big(p_\Lambda(\Bb)\big )\, \Ss_{g \cdot \Lambda} \, \alpha_g\big(p_\Lambda(\Bb)\big ).
			\end{aligned}
		\end{equation}
		Using~\ref{Re:W}, we conclude that
		\begin{equation}
			p_{g\cdot \Lambda}\big (\alpha_g(\Bb)\big)=\alpha_g\big(p_\Lambda(\Bb)\big ) \ \Leftrightarrow \ \{p_\Lambda\big (\alpha_g(\Bb)\big )\} = g \cdot \{p_\Lambda(\Bb)\}
		\end{equation}
		and $G$-equivariance is proved.
		
		(2) Given $\{p_\Lambda\} \in \mathfrak F(\Ss^{\otimes G})$, we can construct $w=\sum_{m=1}^\infty 2^{-m} p_{\Lambda_m}\in \Ss^{\otimes G}$ for an increasing sequence of finite subsets such that $\cup \Lambda_m = G$. In turn, this supplies the hereditary $C^\ast$-subalgebra $\Bb=\overline{w \Ss^{\otimes G} w}$. We want to show that $\{p_{\Lambda}\}$ is an approximate unit for $\Bb$. Using frustration-freeness,
		\begin{equation}
			p_{\Lambda}w= \sum_{m\leq n} 2^{-m}p_{\Lambda_m}+ \sum_{m > n}2^{-m} p_\Lambda p_{\Lambda_m}= w- \sum_{m > n}2^{-m} (1-p_{\Lambda})p_{\Lambda_m}
		\end{equation}
		for any $\Lambda \supseteq \Lambda_n$, hence 
		\begin{equation}
			\|w-p_{\Lambda}w\|\leq 2^{-n}.
		\end{equation}
		For any $b\in \Bb$ and $\epsilon >0$, we can find $a_\epsilon \in \Ss^{\otimes G}$ such that  $\|b-wa_\epsilon w\|\leq \epsilon$. Then
		\begin{equation}
			\| b - p_{\Lambda}b\|\leq \|wa_\epsilon w-p_{\Lambda}wa_\epsilon w\|+2\epsilon \leq 2^{-n}\|a_\epsilon\|\|w\|+2\epsilon.
		\end{equation}
		Hence, there exists $n(\epsilon)$ such that $\| b - p_{\Lambda}b\|\leq 3\epsilon$ for all $\Lambda \supseteq \Lambda_{n(\epsilon)}$. Since $\epsilon$ was arbitrary, $\{p_{\Lambda}\}$ is indeed an approximate unit for $\Bb$. An important outcome of this conclusion is that $\Bb$ so constructed is independent of the choice of $\{\Lambda_m\}$ used in the first place. 
		
		Let us also verify that $\Bb$ has property (F): For general $b\in \Bb$ and any $\epsilon>0$ we can find some $\Lambda$ large enough such that there exists $a \in \Ss_{\Lambda}$ with $\lVert b- w a w\rVert < \frac{\epsilon}{2}$. By the approximate unit property there is then $n$ large enough such that 
		$$\lVert b- p_{\Lambda_n} w a w p_{\Lambda_n} \rVert \leq \lVert b- w a w \rVert + \lVert w a w - p_{\Lambda_n} w a w p_{\Lambda_n} \rVert < \epsilon.$$
		Because eventually $\Lambda\subset \Lambda_n$ and then $p_{\Lambda_n} w a w p_{\Lambda_n} \in \Bb\cap \Ss_{\Lambda_n}$, we conclude that property (F) holds.

		Assume now that $\Bb$ is the whole algebra $\Ss^{\otimes G}$, {\it i.e.} that is not proper. Then $p_{\Lambda}^\bot \in \Bb$ for each $\Lambda\in {\rm K}(G)$ and, since $\{p_\Xi\}$ is an approximate unit for $\Bb$, for each $\epsilon>0$ there exists $\Delta(\epsilon) \supset \Lambda$ such that $\|p_{\Xi} p_{\Lambda}^\bot- p_{\Lambda}^\bot\|\leq \epsilon$ for all $\Xi \supset \Delta(\epsilon)$. By using frustration-freeness, we find
		\begin{equation}
			\|p_{\Xi} p_{\Lambda}^\bot- p_{\Lambda}^\bot\|=\|(1-p_{\Xi}^\bot) p_{\Lambda}^\bot- p_{\Lambda}^\bot\|=\|p_{\Xi}^\bot\|,
		\end{equation}
		and conclude that $\|p_\Xi^\bot \|\leq \epsilon$. Taking $\epsilon <1$, this can only happen if $p_\Xi=1$, but this contradicts our assumptions. Thus, $\Bb$ must be proper and, as such, in $\mathfrak H^F(\Ss^{\otimes G})$.
		
		Lastly, it is clear that the map respects the partial orders, it is a left inverse for \eqref{Eq:BtoP} and, if $\Bb$ is the image of $\{p_\Lambda\}$, then $\alpha_g(\Bb)$ is the image of $g \cdot \{p_\Lambda\}$, so the map is indeed $G$-equivariant. 
	\end{proof}
	
	\begin{remark}\label{Re:Ext1}
		{\rm We can extend $\mathfrak p$ to a morphism between the posets $\hat{\mathfrak H}^F(\Ss^{\otimes G})$ and $\hat{\mathfrak F}(\Ss^{\otimes G})$, by declaring $\Ss^{\otimes G} \mapsto \{1_\Lambda\}$. We can also extend $\mathfrak p^{-1}$ to a morphism between the posets $\hat{\mathfrak F}(\Ss^{\otimes G})$ and $\hat{\mathfrak H}^F(\Ss^{\otimes G})$, by declaring that all non-proper systems of projections are mapped into $\Ss^{\otimes G}$. This extension is a left inverse for the extension of $\mathfrak p$.} $\Diamond$
	\end{remark}
	
	The morphism $\mathfrak p^{-1}$ clearly cannot be injective. However, the localized approximate unit distinguishes itself among all the other frustration-free systems generating the same hereditary $C^\ast$-subalgebra in the following way:
	
	\begin{proposition}
		If $\mathfrak F_\Bb$ is the pre-image of the hereditary $C^\ast$-algebra $\Bb$ for the map $\mathfrak p^{-1}$, then $\mathfrak F_\Bb$ has a maximum, which is precisely $\{p_\Lambda(\Bb)\}$.
	\end{proposition}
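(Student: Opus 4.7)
The plan is to establish two facts: that $\{p_\Lambda(\Bb)\}$ actually lies in the fiber $\mathfrak F_\Bb$, and that every other member of $\mathfrak F_\Bb$ is pointwise dominated by it. Together these yield the maximum.

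For the first fact, Theorem~\ref{Prop:HvsFf} asserts that $\mathfrak p^{-1}$ is a left inverse for $\mathfrak p$, and by construction $\mathfrak p(\Bb)=\{p_\Lambda(\Bb)\}$. Consequently $\mathfrak p^{-1}(\{p_\Lambda(\Bb)\})=\Bb$, so the localized approximate unit indeed sits in $\mathfrak F_\Bb$.

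For the second fact, fix an arbitrary $\{q_\Lambda\}\in\mathfrak F_\Bb$. The first half of the proof of Theorem~\ref{Prop:HvsFf}(2) shows that, whatever the input system, its image under $\mathfrak p^{-1}$ admits precisely that system as an approximate unit. Applied here, $\{q_\Lambda\}$ is an approximate unit for $\mathfrak p^{-1}(\{q_\Lambda\})=\Bb$, and in particular each $q_\Lambda$ belongs to $\Bb$. Combined with $q_\Lambda\in\Ss_\Lambda$, this gives $q_\Lambda\in\Bb\cap\Ss_\Lambda$, which by Lemma~\ref{Prop:BbUnit} equals $p_\Lambda(\Bb)\,\Ss_\Lambda\,p_\Lambda(\Bb)$. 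Hence $q_\Lambda=p_\Lambda(\Bb)\,q_\Lambda\,p_\Lambda(\Bb)$, and because $q_\Lambda$ is a projection this forces $p_\Lambda(\Bb)\,q_\Lambda=q_\Lambda$, equivalently $q_\Lambda\leq p_\Lambda(\Bb)$. Since $\Lambda\in{\rm K}(G)$ was arbitrary, we conclude $\{q_\Lambda\}\leq\{p_\Lambda(\Bb)\}$ in the order on $\mathfrak F(\Ss^{\otimes G})$ defined in~\eqref{Eq:FFOrder}.

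There is no substantive obstacle: both steps reduce to already established results. The only point worth flagging is conceptual rather than technical, namely the observation that $\mathfrak p^{-1}$, although defined via the auxiliary element $w=\sum 2^{-m}q_{\Lambda_m}$, automatically returns a hereditary $C^\ast$-subalgebra for which the original system $\{q_\Lambda\}$ is itself an approximate unit. Once this is acknowledged, containment in $\Bb$ and the corner characterization of $\Bb\cap\Ss_\Lambda$ from Lemma~\ref{Prop:BbUnit} immediately pin $q_\Lambda$ below $p_\Lambda(\Bb)$.
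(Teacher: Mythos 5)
Your proposal is correct and follows essentially the same route as the paper: membership of $\{p_\Lambda(\Bb)\}$ in the fiber via the left-inverse property, and domination of any $\{q_\Lambda\}\in\mathfrak F_\Bb$ by noting $q_\Lambda\in\Bb\cap\Ss_\Lambda=p_\Lambda(\Bb)\,\Ss_\Lambda\,p_\Lambda(\Bb)$, which forces $q_\Lambda\leq p_\Lambda(\Bb)$. The paper phrases the membership $q_\Lambda\in\Bb$ as read off directly from the construction of $\mathfrak p^{-1}$ rather than via the approximate-unit property, but this is an inessential difference.
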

	
	\begin{proof}
		Let $\{p_\Lambda \}$ be from $\mathfrak F_\Bb$. From the construction of $\Bb=\mathfrak p^{-1}(\{p_\Lambda\})$, we can see that each $p_\Lambda$ belongs to $\Bb$ and, obviously, also to $\Ss_\Lambda$. Thus
		\begin{equation}
			p_\Lambda \in \Bb \cap \Ss_\Lambda = p_\Lambda(\Bb)\, \Ss_\Lambda \, p_\Lambda(\Bb).
		\end{equation}
		As such, there exists $s \in \Ss_\Lambda$ such that $p_\Lambda=p_\Lambda(\Bb) \, s \, p_\Lambda(\Bb)$ and, as a consequence, $p_\Lambda p_\Lambda(\Bb)=p_\Lambda$ for all $\Lambda \in {\rm K}(G)$. Since $\{p_\Lambda(\Bb)\} \in \mathfrak F_\Bb$, this shows that $\{p_\Lambda(\Bb)\}$ is indeed the maximal element.
	\end{proof}
	
	\begin{corollary}
		There exists a $G$-equivariant bijective map between the maximal elements of $\mathfrak H^F(\Ss^{\otimes G})$ and of $\mathfrak F(\Ss^{\otimes G})$.
	\end{corollary}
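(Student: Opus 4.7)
The plan is to show that the morphism $\mathfrak p$ of Theorem~\ref{Prop:HvsFf}, restricted to the maximal elements of $\mathfrak H^F(\Ss^{\otimes G})$, lands in the maximal elements of $\mathfrak F(\Ss^{\otimes G})$, and that every maximum of the latter arises this way. Together with the injectivity and $G$-equivariance of $\mathfrak p$ already established, this will produce the desired $G$-equivariant bijection. All the substantive technical content is contained in Theorem~\ref{Prop:HvsFf} and in the preceding proposition characterizing the fibers of $\mathfrak p^{-1}$; the remaining work is purely order-theoretic.

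For the first direction, I would start with a maximal element $\Bb \in \mathfrak H^F(\Ss^{\otimes G})$ and set $\{p_\Lambda\} := \mathfrak p(\Bb)$. Assume $\{p_\Lambda\} \leq \{q_\Lambda\}$ for some $\{q_\Lambda\} \in \mathfrak F(\Ss^{\otimes G})$. Because $\mathfrak p^{-1}$ is a poset morphism and a left inverse of $\mathfrak p$,
\begin{equation}
    \Bb = \mathfrak p^{-1}\bigl(\mathfrak p(\Bb)\bigr) \subseteq \mathfrak p^{-1}(\{q_\Lambda\}),
\end{equation}
and maximality of $\Bb$ in $\mathfrak H^F(\Ss^{\otimes G})$ forces $\mathfrak p^{-1}(\{q_\Lambda\}) = \Bb$. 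Hence $\{q_\Lambda\}$ lies in the fiber $\mathfrak F_\Bb$, and by the preceding proposition its maximum is $\{p_\Lambda(\Bb)\} = \mathfrak p(\Bb) = \{p_\Lambda\}$. This gives $\{q_\Lambda\} \leq \{p_\Lambda\}$, hence equality, confirming that $\mathfrak p(\Bb)$ is maximal in $\mathfrak F(\Ss^{\otimes G})$.

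For the converse direction, let $\{p_\Lambda\}$ be maximal in $\mathfrak F(\Ss^{\otimes G})$ and set $\Bb := \mathfrak p^{-1}(\{p_\Lambda\})$. Since $\{p_\Lambda\} \in \mathfrak F_\Bb$ and the preceding proposition asserts that $\mathfrak p(\Bb)$ is the maximum of $\mathfrak F_\Bb$, we have $\{p_\Lambda\} \leq \mathfrak p(\Bb)$, and by maximality $\{p_\Lambda\} = \mathfrak p(\Bb)$. Now given $\Bb \subseteq \Bb'$ in $\mathfrak H^F(\Ss^{\otimes G})$, applying $\mathfrak p$ yields $\{p_\Lambda\} = \mathfrak p(\Bb) \leq \mathfrak p(\Bb')$, so maximality of $\{p_\Lambda\}$ forces $\mathfrak p(\Bb) = \mathfrak p(\Bb')$, and injectivity of $\mathfrak p$ gives $\Bb = \Bb'$. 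Thus $\Bb$ is maximal in $\mathfrak H^F(\Ss^{\otimes G})$, and the identity $\{p_\Lambda\} = \mathfrak p(\Bb)$ exhibits $\{p_\Lambda\}$ as the image of a maximal element.

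Combining the two directions, $\mathfrak p$ restricts to an injection from maximal elements onto maximal elements whose inverse is supplied by $\mathfrak p^{-1}$ (on the image). The $G$-equivariance of $\mathfrak p$ from Theorem~\ref{Prop:HvsFf} immediately descends to this restriction, since the $G$-actions on both posets are by order-isomorphisms and hence preserve maximality. I do not anticipate a genuine obstacle here: the only subtle point is ensuring that the fiber proposition is invoked in the correct direction in each half of the argument, so that the maximality assumption in one poset can be converted into maximality in the other without appealing to Zorn-type existence claims.
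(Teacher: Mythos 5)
Your proposal is correct and follows exactly the route the paper intends: the corollary is stated as an immediate consequence of Theorem~\ref{Prop:HvsFf} (injectivity, $G$-equivariance, and the left-inverse property of $\mathfrak p^{-1}$) together with the preceding proposition that $\{p_\Lambda(\Bb)\}$ is the maximum of the fiber $\mathfrak F_\Bb$, and your two directions simply spell out how these facts transfer maximality back and forth between $\mathfrak H^F(\Ss^{\otimes G})$ and $\mathfrak F(\Ss^{\otimes G})$. No gaps; the invocations of the fiber proposition are in the correct direction in both halves.
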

	
	The subset $\hat{\mathfrak H}^F(\Ss^{\otimes G}) \subseteq \hat{\mathfrak H}(\Ss^{\otimes G})$ is not closed under the sup and inf operations in $\hat{\mathfrak H}(\Ss^{\otimes G})$, hence it cannot be called a sublattice of $\hat{\mathfrak H}(\Ss^{\otimes G})$ \cite[p.~20]{GratzerBook}. Nevertheless:
	
	\begin{proposition} $(\hat{\mathfrak H}^F(\Ss^{\otimes G}),\subseteq)$ is a lattice.
	\end{proposition}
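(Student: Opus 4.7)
The plan is to transport the lattice structure of $\hat{\mathfrak F}(\Ss^{\otimes G})$ from Proposition~\ref{Prop:FLattice} onto $\hat{\mathfrak H}^F(\Ss^{\otimes G})$ via the two $G$-equivariant poset morphisms $\mathfrak p$ and $\mathfrak p^{-1}$ of Theorem~\ref{Prop:HvsFf}, extended to the full posets as in Remark~\ref{Re:Ext1}. The key property to exploit is that $\mathfrak p^{-1}\circ \mathfrak p=\mathrm{id}$ on $\hat{\mathfrak H}^F$, so $\mathfrak p$ embeds $\hat{\mathfrak H}^F$ as a (generally non-sublattice) subset of $\hat{\mathfrak F}$, while $\mathfrak p^{-1}$ behaves as a monotone retraction. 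The meets and joins in $\hat{\mathfrak H}^F$ will be obtained by first taking the meet/join inside $\hat{\mathfrak F}$ of the images, then projecting back with $\mathfrak p^{-1}$, which explains why $\hat{\mathfrak H}^F$ fails to be a sublattice of $\hat{\mathfrak H}$.

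Concretely, for $\Bb_1,\Bb_2 \in \hat{\mathfrak H}^F$ with localized approximate units $\{p_\Lambda^i\}:=\mathfrak p(\Bb_i)$, I would set
\begin{equation*}
	\Bb_1\wedge\Bb_2 := \mathfrak p^{-1}\bigl(\{p_\Lambda^1\wedge p_\Lambda^2\}\bigr), \qquad \Bb_1\vee\Bb_2 := \mathfrak p^{-1}\bigl(\{p_\Lambda^1\vee p_\Lambda^2\}\bigr).
\end{equation*}
Proposition~\ref{Prop:FLattice} guarantees that the bracketed systems belong to $\hat{\mathfrak F}(\Ss^{\otimes G})$, and the extended $\mathfrak p^{-1}$ places the results inside $\hat{\mathfrak H}^F$, so both expressions are well defined.

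To verify that $\Bb_1\wedge\Bb_2$ is a lower bound, I would invoke monotonicity of $\mathfrak p^{-1}$: since $\{p_\Lambda^1\wedge p_\Lambda^2\}\leq \{p_\Lambda^i\}$, applying $\mathfrak p^{-1}$ gives $\Bb_1\wedge\Bb_2 \subseteq \mathfrak p^{-1}(\mathfrak p(\Bb_i))=\Bb_i$. For the universal property, any $\Bb'\in \hat{\mathfrak H}^F$ with $\Bb'\subseteq \Bb_i$ satisfies $\mathfrak p(\Bb')\leq \mathfrak p(\Bb_i)$ by monotonicity of $\mathfrak p$, and hence $\mathfrak p(\Bb')\leq \{p_\Lambda^1\wedge p_\Lambda^2\}$ by the meet formula in $\hat{\mathfrak F}$; monotonicity of $\mathfrak p^{-1}$ then yields $\Bb'=\mathfrak p^{-1}(\mathfrak p(\Bb'))\subseteq \Bb_1\wedge\Bb_2$. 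The verification for $\Bb_1\vee\Bb_2$ is strictly dual.

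The main obstacle is really just clean bookkeeping around the extended morphisms: one has to handle the boundary cases where $\{p_\Lambda^1\wedge p_\Lambda^2\}$ collapses to $\{0\}$ or $\{p_\Lambda^1\vee p_\Lambda^2\}$ becomes non-proper (landing on $\Ss^{\otimes G}$ under $\mathfrak p^{-1}$), and to note that the induced operations are indeed $G$-equivariant by $G$-equivariance of $\mathfrak p$ and $\mathfrak p^{-1}$. No further analytical input is required, since the substantive content — that $\mathfrak p,\mathfrak p^{-1}$ are order morphisms with $\mathfrak p^{-1}\circ \mathfrak p=\mathrm{id}$, together with the lattice structure of $\hat{\mathfrak F}$ — is already packaged in Proposition~\ref{Prop:FLattice} and Theorem~\ref{Prop:HvsFf}.
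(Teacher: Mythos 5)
Your proposal is correct and follows essentially the same route as the paper: both transport the lattice structure of $\hat{\mathfrak F}(\Ss^{\otimes G})$ from Proposition~\ref{Prop:FLattice} through the extended morphisms $\mathfrak p$ and $\mathfrak p^{-1}$ of Theorem~\ref{Prop:HvsFf}, defining meet and join as $\mathfrak p^{-1}$ applied to the meet and join of the localized approximate units and using $\mathfrak p^{-1}\circ\mathfrak p=\mathrm{id}$ together with monotonicity to verify the universal properties. Your write-up is only slightly more explicit than the paper's about checking both the bound and the extremality directions.
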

	
	\begin{proof}
		We are going to exploit Proposition~\ref{Prop:FLattice}, where it was established that $\hat{\mathfrak F}(\Ss^{\otimes G})$ is a lattice. If $\Bb_i \in \hat{\mathfrak H}^F(\Ss^{\otimes G})$ for $i=1,2$, then 
		\begin{equation}
			\{p_\Lambda(\Bb_i)\} \geq \{p_\Lambda(\Bb_1)\} \wedge \{p_\Lambda(\Bb_2)\} \geq \{p_\Lambda(\Bb)\}
		\end{equation}
		for any $\Bb \in \hat{\mathfrak H}^F(\Ss^{\otimes G})$ with $\Bb \subseteq \Bb_i$, $i=1,2$. Then the extension of the morphism $\mathfrak p^{-1}$ of posets applied to $\{p_\Lambda(\Bb_1)\} \wedge \{p_\Lambda(\Bb_2)\}$ supplies a hereditary $C^\ast$-subalgebra $\Bb_1 \wedge_F \Bb_2 \in \hat{\mathfrak H}^F(\Ss^{\otimes G})$, which is larger or equal then any $\Bb \subseteq \Bb_i$, $i=1,2$. Similarly, the same morphism applied to $\{p_\Lambda(\Bb_1)\} \vee \{p_\Lambda(\Bb_2)\}$ supplies a hereditary $C^\ast$-subalgebra $\Bb_1 \vee_F \Bb_2 \in \hat{\mathfrak H}^F(\Aa)$, which is smaller or equal then any $\Bb \supseteq \Bb_i$, $i=1,2$.
	\end{proof}
	
	In conclusion, the frustration-freeness was found to derive from the particular filtration $\Ss^{\otimes G} = \overline{\bigcup \Ss_\Lambda}$  by finite-dimensional $C^\ast$-algebras, and it can be entirely captured by property (F). We can formulate the concept beyond the present context as follows:
	
	\begin{definition}[General frustration-freeness]
		Let $\Aa$ be an AF-$C^\ast$-algebra and $\{\Aa_i\}_{i\in I}$ a particular filtration of it by finite-dimensional $C^\ast$-algebras indexed by the countable directed set $I$. A proper hereditary $C^\ast$-subalgebra $\Bb \subset \Aa$ is said to display frustration-freeness relative to $\{\Aa_i\}$ if $\Bb = \overline{\bigcup_{i\in I} (\Bb \cap \Aa_i)}$.
	\end{definition}
	
	With this definition in place, we can generate via \ref{Prop:HvsFf} frustration-free systems of projections, {\it i.e.} increasing nets $\{p_i\}_{\i \in I}$ of proper projections such that $p_i \in \Aa_i$,  as well as frustration-free models, all in the general context of AF-algebras. Many of the statements from this paper remain true in this general setting.
	
	\subsection{Density statements}
	
	We address here the question of how large is the set $\mathfrak H^F(\Ss^{\otimes G})$ inside the set $\mathfrak H(\Ss^{\otimes G})$ of all hereditary $C^\ast$-subalgebras. For this, we recall yet another special property of hereditary $C^\ast$-subalgebras:
	
	\begin{proposition}[\cite{LazarCJM1986},~Prop.~3.1(i)]\label{Prop:Inlcusion}
		Let $\Aa$ be a separable AF algebra and $\Bb \in \mathfrak H(\Aa)$. Then $\Bb$ is an AF algebra and, if $\Bb_n$ is an ascending sequence of finite dimensional subalgebras with norm-dense union in $\Bb$, then there exists an ascending sequence $\Aa_n$ with norm-dense union in $\Aa$ such that $\Bb \cap \Aa_n =\Bb_n$ for all $n$.
	\end{proposition}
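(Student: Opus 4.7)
The plan is to prove the two assertions in turn. For the AF-ness of $\Bb$, I would use that AF algebras have real rank zero and that this property is inherited by hereditary $C^*$-subalgebras; combined with the separability inherited from $\Aa$, this supplies a sequential approximate unit $\{e_n\}_{n\in\NM}$ of $\Bb$ consisting of increasing projections. Fix any auxiliary filtration $\Aa=\overline{\bigcup_m \tilde\Aa_m}$ by finite-dimensional subalgebras. The elementary fact that two projections at norm distance less than one are unitarily conjugate by a unitary close to the identity lets one arrange, after a small unitary perturbation absorbed into the $e_n$'s, that $e_n\in\tilde\Aa_{m(n)}$ for a suitable subsequence $m(n)$. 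Then $\Bb'_n:=e_n\tilde\Aa_{m(n)}e_n$ is finite-dimensional, contained in $\Bb$ because $e_n\in\Bb$ and $\Bb$ is hereditary, forms an ascending chain, and exhausts $\Bb$ in norm by the approximate-unit property $e_n b e_n\to b$ together with density of $\bigcup_m\tilde\Aa_m$ in $\Aa$.

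For the extension of a prescribed filtration $\{\Bb_n\}$, I would proceed by induction on $n$. Let $e_n$ be the unit of $\Bb_n$, so that $\{e_n\}$ is an approximate unit of $\Bb$ of increasing projections. Fix a dense sequence $\{a_\ell\}\subset\Aa$, and assume $\Aa_1\subseteq\cdots\subseteq \Aa_{n-1}$ have been constructed with $\Bb\cap \Aa_k=\Bb_k$ for $k<n$. At stage $n$, I would produce a finite-dimensional $\Aa_n\supseteq \Aa_{n-1}\cup \Bb_n$ that absorbs $1/n$-approximations of $a_1,\ldots,a_n$. The construction is organized through the Peirce decomposition in the unitization,
\[ \Aa^\sim = e_n\Aa^\sim e_n + e_n\Aa^\sim f_n + f_n\Aa^\sim e_n + f_n\Aa^\sim f_n, \qquad f_n:=1-e_n. \]
Because $\Bb$ is hereditary and $e_n\in\Bb$, the first corner equals $e_n\Bb e_n$, which is itself an AF algebra filtered by $e_n\Bb_k e_n$; the $(e_n,e_n)$-block of each $a_\ell$ is therefore perturbed by density to lie in $\Bb_n$. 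The remaining three corners carry the content of $\Aa_n$ outside $\Bb_n$ and must be arranged so as to meet $\Bb$ only at zero.

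The main obstacle is enforcing the intersection $\Bb\cap \Aa_n = \Bb_n$ while preserving the equalities $\Bb\cap \Aa_k = \Bb_k$ for $k<n$. This requires, after the initial approximation of the $a_\ell$, a further perturbation of the $(e_n,f_n)$, $(f_n,e_n)$ and $(f_n,f_n)$ components of $\Aa_n$ by a unitary equal to the identity on $\Aa_{n-1}\cup \Bb_n$, such that the perturbed $\Aa_n$ meets $\Bb$ only in $\Bb_n$. The existence of such a unitary is a relative version of the semiprojectivity of finite-dimensional $C^*$-subalgebras of AF algebras, obtainable from Bratteli's approximate-intertwining technique applied rel $\Aa_{n-1}\cup \Bb_n$; combined with the fact that the portion of $\Bb$ outside $e_n\Bb e_n$ is itself filtered in norm by off-$(e_n,e_n)$ Peirce components of the $\Bb_k$ with $k\geq n$, this provides enough slack to push $\Aa_n$ transverse to that portion. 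Iterating, the resulting ascending $\Aa_n$ have dense union in $\Aa$ (by absorption of the $a_\ell$) and satisfy $\Bb\cap \Aa_n =\Bb_n$ by construction.
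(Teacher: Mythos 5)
The paper does not prove this statement at all: it is quoted, with citation, from Lazar--Tsui--Wright \cite{LazarCJM1986}, Prop.~3.1(i), so your attempt has to stand on its own, and as written it does not. Already in the first assertion there is a flaw: for $\Bb'_n:=e_n\tilde\Aa_{m(n)}e_n$ to be a finite-dimensional subalgebra of $\Bb$ you need $e_n\in\Bb$ (so that heredity gives $e_n\Aa e_n\subseteq\Bb$) \emph{and} $e_n\in\tilde\Aa_{m(n)}$ simultaneously; but the unitary perturbation you invoke replaces $e_n$ by a nearby projection of $\tilde\Aa_{m(n)}$ which in general no longer lies in $\Bb$, and it moreover threatens both the monotonicity $e_n\leq e_{n+1}$ and the approximate-unit property, which would have to be re-established coherently for all $n$ at once. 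The standard repair is to avoid the global statement and use Bratteli's local characterization of AF algebras: given finitely many $b_i\in\Bb$ and $\epsilon>0$, pick a projection $e\in\Bb$ with $eb_ie\approx b_i$, a projection $f\in\tilde\Aa_m$ with $\|e-f\|$ small and a unitary $u$ near $1$ with $ueu^\ast=f$; then $u^\ast f\tilde\Aa_m f u\subseteq e\Aa e\subseteq\Bb$ is finite dimensional and almost contains the $b_i$. No coherent sequence of corners is needed.

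The second assertion is the real content of the proposition, and there your argument for the decisive step is missing. The inclusion $\Bb\cap\Aa_n\supseteq\Bb_n$ is free once $\Bb_n\subseteq\Aa_n$; the entire difficulty is the reverse inclusion, i.e.\ preventing $\Aa_n$ from meeting $\Bb$ outside $\Bb_n$. You assert that a unitary equal to the identity on $\Aa_{n-1}\cup\Bb_n$, supplied by a ``relative semiprojectivity''/approximate-intertwining principle, pushes the remaining Peirce corners of $\Aa_n$ ``transverse'' to $\Bb$, but you neither state the lemma you are invoking nor explain why such a unitary exists or why transversality is achievable: since $\Bb$ is merely hereditary and not a corner of $\Aa$, the blocks $e_n\Aa_nf_n$, $f_n\Aa_nf_n$ can a priori intersect $\Bb$ nontrivially, and nothing in your sketch rules this out or shows the obstruction can be removed while keeping $\Bb\cap\Aa_k=\Bb_k$ for $k<n$. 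In addition, your perturbing unitary is only required to fix $\Aa_{n-1}\cup\Bb_n$, not to be close to $1$, so after the perturbation the $1/n$-approximants of $a_1,\dots,a_n$ may be moved by an uncontrolled amount and the density of $\bigcup_n\Aa_n$ in $\Aa$ is no longer guaranteed. Until the exact-intersection step is actually proved, the proposal does not establish the proposition; this step is precisely where the cited proof in \cite{LazarCJM1986} does its work, so if you want to use the statement you should either cite it, as the paper does, or reproduce a genuine argument for that step.
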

	
	Let us also state the following general fact about AF algebras, which we formulate for our specific context:
	
	\begin{proposition}[\cite{DavidsonBook}, Th. III.3.5]\label{Th:Dav}
		Suppose $\Aa_n$ is an increasing sequence of finite sub-$C^\ast$-algebras of $\Ss^{\otimes G}$ such that
		$\Ss^{\otimes G} = \overline{\cup_n \Aa_n}$.
		Then, for any $\epsilon >0$, there exists a unitary element $u \in \Ss^{\otimes G}$ with $\|u-1\|<\epsilon$ so that $\cup_n \Aa_n = u^\ast(\cup_\Lambda \Ss_\Lambda)u$.
	\end{proposition}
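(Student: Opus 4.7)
The plan is to build the unitary $u$ as a norm-convergent infinite product $u = \lim_{N\to\infty} v_1 v_2 \cdots v_N$ of unitaries in $\Ss^{\otimes G}$ that are each very close to $1$, produced by an intertwining argument between the abstract AF filtration $\{\Aa_n\}$ and the canonical filtration $\{\Ss_\Lambda\}_{\Lambda \in {\rm K}(G)}$. To set this up I would first fix summable positive reals $\{\epsilon_n\}$ with $\sum_n \epsilon_n < \epsilon$, and enumerate dense countable subsets of $\bigcup_n \Aa_n$ and $\bigcup_\Lambda \Ss_\Lambda$ to drive the inductive approximations.

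The local engine of the proof is a Glimm--Bratteli perturbation lemma: given two finite-dimensional $C^\ast$-subalgebras $\Aa$ and $\Bb$ of a unital $C^\ast$-algebra such that each element in the unit ball of $\Aa$ lies within a sufficiently small $\delta$ of $\Bb$, there is a unitary $w$ with $\|w-1\| < f(\delta)$, $f(\delta)\to 0$, satisfying $w\Aa w^\ast \subseteq \Bb$. One proves this by lifting a system of matrix units of $\Aa$ to an approximate system inside $\Bb$ via spectral/functional calculus on the positive approximants, and then polar-decomposing to get genuine matrix units in $\Bb$ close to the original ones; the unitary $w$ is the one that maps one family of matrix units to the other.

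With this lemma in hand, I would alternate the construction. At an odd stage $2k-1$, writing $U_{n-1} = v_1 \cdots v_{n-1}$, I use density of $\bigcup_\Lambda \Ss_\Lambda$ to pick $\Lambda_k$ so large that the unit ball of $U_{n-1} \Aa_k U_{n-1}^\ast$ is approximated by $\Ss_{\Lambda_k}$ within the threshold of the perturbation lemma, and take $v_n$ the unitary it produces, with $\|v_n-1\| < \epsilon_n$, so that $U_n \Aa_k U_n^\ast \subseteq \Ss_{\Lambda_k}$. At an even stage $2k$, I do the symmetric step: choose $m_k$ large so that $\Ss_{\Lambda_k}$ is approximated by $U_n \Aa_{m_k} U_n^\ast$ and produce $v_{n+1}$ with $U_{n+1}^\ast \Ss_{\Lambda_k} U_{n+1} \subseteq \Aa_{m_k}$ (while preserving the previous inclusion by taking $v_{n+1}$ close enough to $1$, since a small conjugation of a finite-dimensional algebra keeps it inside any slightly enlarged ambient finite-dimensional algebra). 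Passing to the limit, $u = \prod_n v_n$ exists with $\|u-1\|<\epsilon$ and satisfies $u^\ast \Ss_{\Lambda_k} u \subseteq \Aa_{m_k}$ and $\Aa_k \subseteq u^\ast \Ss_{\Lambda_k} u$ for every $k$, from which the equality $\bigcup_n \Aa_n = u^\ast \big(\bigcup_\Lambda \Ss_\Lambda\big) u$ is immediate by cofinality.

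The main obstacle is the local perturbation lemma and its quantitative control: the threshold $\delta$ needed to conjugate $\Aa$ into $\Bb$ must be small enough relative to the matrix-unit combinatorics of $\Aa$, which may grow with $n$. The remedy is standard but careful: at step $n$ one chooses $\delta_n$ sharp enough to be absorbed by $\epsilon_n$ given the (now fixed) matrix-unit structure of $U_{n-1}\Aa_k U_{n-1}^\ast$, and one must separately verify that the later small perturbations $v_{n+1}, v_{n+2}, \ldots$ do not destroy the containment already achieved at stage $n$. This is handled by building a slight slack into each stage and by exploiting that a finite-dimensional subalgebra is stable under conjugation by any sufficiently small unitary into any strictly larger finite-dimensional algebra containing it.
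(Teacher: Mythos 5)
The paper offers no proof of this proposition---it is quoted from Davidson's book (Th.~III.3.5)---and your outline follows the same standard route as the cited textbook argument: a back-and-forth intertwining of the two filtrations powered by the Glimm--Bratteli near-inclusion perturbation lemma, with $u$ realized as a norm-convergent infinite product of small unitaries. The architecture is right, but there is a genuine gap at the one delicate point, namely how the containments achieved at earlier stages survive the later corrections. You justify this with the claim that a finite-dimensional subalgebra is stable under conjugation by any sufficiently small unitary into any strictly larger finite-dimensional algebra containing it. That claim is false: take $F=\CM e_{11}\subset M:=M_2(\CM)\oplus\CM\subset M_3(\CM)$ and $v_t=\exp\bigl(\mathrm{i}t(e_{13}+e_{31})\bigr)$; then $v_t F v_t^\ast\not\subseteq M$ for every small $t\neq 0$. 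In general, conjugation by a small unitary of the ambient algebra (and the unitaries produced at your even stages are not local, so they do not even normalize any $\Ss_\Lambda$) moves a finite-dimensional subalgebra out of a proper subalgebra containing it, and no amount of ``slack'' repairs this, because the conclusion to be proved is an equality of the uncompleted local algebras $\cup_n\Aa_n=u^\ast(\cup_\Lambda\Ss_\Lambda)u$: approximate containments at the finite stages do not yield the exact containments needed in the limit.

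The standard repair, which is how the textbook proof actually closes the induction, is a refined perturbation lemma: if $F\subseteq G$ are finite-dimensional subalgebras, $C$ is a $C^\ast$-subalgebra with $F\subseteq C$ exactly and with $G$ nearly contained in $C$, then the small unitary $w$ satisfying $wGw^\ast\subseteq C$ can be chosen in the relative commutant of $F$ (average over the unitary group of $F$ to reduce to $F'\cap G$ versus $F'\cap C$, apply the basic lemma there, and use $G=C^\ast(F,\,F'\cap G)$ together with $wFw^\ast=F\subseteq C$). Choosing each correction $v_{n+1}$ in the commutant of the finite-dimensional algebra exactly placed at stage $n$ leaves all previously established inclusions untouched---those algebras are fixed pointwise by the later corrections---so the exact chain $\Aa_k\subseteq u^\ast\Ss_{\Lambda_k}u\subseteq\Aa_{m_k}$ survives to the limit, and your cofinality argument then finishes the proof. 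Without this commutant refinement the inductive scheme as you wrote it does not close.
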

	
	A direct consequence of the mentioned facts is that any hereditary sub-$C^\ast$-algebra of $\Ss^{\otimes G}$ can be twisted by an inner automorphism to fit into $\mathfrak H^F(\Ss^{\otimes G})$. More precisely:
	
	\begin{proposition}\label{Prop:BU} 
		For every $\epsilon >0$ and $\Bb \in \mathfrak H(\Ss^{\otimes G})$, there exists a unitary element $u \in \Ss^{\otimes G}$ such that $\|u-1\| < \epsilon$ and
		\begin{equation}
			\Bb_u:=u \Bb u^\ast= \overline{\cup_{\Lambda \in {\rm K}(G)} (\Bb_u \cap \Ss_\Lambda)}.
		\end{equation}
	\end{proposition}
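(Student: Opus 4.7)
The strategy is to directly combine Proposition \ref{Prop:Inlcusion} and Theorem \ref{Th:Dav}, which are essentially tailored to this situation. The first yields matched AF filtrations of $\Bb$ and $\Ss^{\otimes G}$, while the second allows us to unitarily align the filtration of $\Ss^{\otimes G}$ with its canonical local filtration $\{\Ss_\Lambda\}$ via an arbitrarily small inner perturbation.

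First, I note that $\Ss^{\otimes G}$ is separable, and by Proposition \ref{Prop:Inlcusion} any $\Bb \in \mathfrak H(\Ss^{\otimes G})$ is then itself AF. Pick an ascending sequence $\Bb_n$ of finite dimensional $C^\ast$-subalgebras of $\Bb$ with norm-dense union. Applying Proposition \ref{Prop:Inlcusion} to this choice, obtain an ascending sequence $\Aa_n$ of finite dimensional $C^\ast$-subalgebras of $\Ss^{\otimes G}$ with $\overline{\cup_n \Aa_n}=\Ss^{\otimes G}$ and $\Bb \cap \Aa_n = \Bb_n$ for every $n$. Now apply Theorem \ref{Th:Dav} to the filtration $\{\Aa_n\}$: for the given $\epsilon >0$, there exists a unitary $u \in \Ss^{\otimes G}$ with $\|u-1\| < \epsilon$ such that $u \bigl(\cup_n \Aa_n\bigr) u^\ast = \cup_{\Lambda \in {\rm K}(G)} \Ss_\Lambda$.

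It remains to verify that $\Bb_u = u \Bb u^\ast$ satisfies property (F). Each $u \Bb_n u^\ast$ is a finite dimensional subalgebra contained in $u \Aa_n u^\ast \subset \cup_{\Lambda \in {\rm K}(G)} \Ss_\Lambda$. Choosing a finite linear basis of $u\Bb_n u^\ast$ and observing that each basis element lies in some $\Ss_{\Lambda}$, the union of these finitely many $\Lambda$'s produces a single finite $\Lambda_n \in {\rm K}(G)$ with $u\Bb_n u^\ast \subseteq \Ss_{\Lambda_n}$. Since also $u \Bb_n u^\ast \subseteq \Bb_u$, we conclude $u \Bb_n u^\ast \subseteq \Bb_u \cap \Ss_{\Lambda_n}$. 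Then
\begin{equation}
\Bb_u \;=\; \overline{\cup_n u \Bb_n u^\ast} \;\subseteq\; \overline{\cup_{\Lambda \in {\rm K}(G)} (\Bb_u \cap \Ss_\Lambda)} \;\subseteq\; \Bb_u,
\end{equation}
where the first equality uses density of $\cup_n \Bb_n$ in $\Bb$ together with the isometric nature of conjugation by $u$. This forces equality throughout, which is precisely property (F) for $\Bb_u$.

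The only mildly nontrivial step is the last paragraph's observation that a finite dimensional subalgebra of $\cup_\Lambda \Ss_\Lambda$ must lie in a single $\Ss_\Lambda$; but this is immediate from finite-dimensionality and the fact that $\Lambda \mapsto \Ss_\Lambda$ is directed under inclusion. I do not anticipate any real obstacle: once Proposition \ref{Prop:Inlcusion} and Theorem \ref{Th:Dav} are in hand, the proof is essentially the composition of these two inputs with a short verification.
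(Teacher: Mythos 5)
Your proof is correct and follows essentially the same route as the paper: both combine Proposition~\ref{Prop:Inlcusion} (matched filtrations with $\Bb\cap\Aa_n=\Bb_n$) with Theorem~\ref{Th:Dav} (a unitary $u$ close to $1$ aligning $\cup_n\Aa_n$ with $\cup_\Lambda\Ss_\Lambda$). The only cosmetic difference is that the paper computes $u^\ast\big(\cup_\Lambda(\Bb_u\cap\Ss_\Lambda)\big)u=\cup_n(\Bb\cap\Aa_n)$ directly, whereas you verify the two inclusions using the (correct) observation that a finite-dimensional subalgebra of $\cup_\Lambda\Ss_\Lambda$ sits inside a single $\Ss_\Lambda$ by directedness.
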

	
	\begin{proof} Let $\Bb = \overline{\cup_n \Bb_n} \in \mathfrak H(\Ss^{\otimes G})$ and $\Aa_n$ be the sequences from \ref{Prop:Inlcusion}. For $\epsilon \in \RM_+$, let $u \in \Ss^{\otimes G}$ be the unitary element from \ref{Th:Dav}. Then
		\begin{equation}
			u^\ast( \cup_\Lambda (\Bb_u \cap \Ss_\Lambda))u = \Bb \cap u^\ast( \cup_\Lambda \Ss_\Lambda)u = \Bb \cap ( \cup_n \Aa_n) = \cup_n(\Bb \cap \Aa_n ).
		\end{equation}
		Therefore
		\begin{equation}
			\overline{\cup_\Lambda (\Bb_u \cap \Ss_\Lambda)}= u \, \big (\overline{\cup_n(\Bb \cap \Aa_n )} \big )\, u^\ast =u \Bb u^\ast
		\end{equation}
		and the statement follows.
	\end{proof}
	
	\begin{remark}\label{Re:GInv}
		{\rm The situation is very different if one insists on $G$-invariance. Indeed, since $\Ss^{\otimes G}$ displays asymptotic abelianness, there is no element in it that is invariant under the $G$-action. Thus, the above mechanism fails to produce $G$-invariant hereditary $C^\ast$-subalgebra, even if the original one is $G$-invariant.} \ $\Diamond$
	\end{remark}
	
	\section{States vs frustration-free systems}\label{Sec:FFvsStates}
	
	This section analyzes the relation between states and frustration-free systems of projections via the bottom part of diagram \eqref{Eq:MasterDiag}. The entry $\mathfrak P^F[{\rm PS}(\Ss^{\otimes G})] \subset \mathfrak P[{\rm PS}(\Ss^{\otimes G})]$ in \eqref{Eq:MasterDiag} can be identified via Proposition~\ref{Prop:BOmega} as the image of $\mathfrak H^F(\Ss^{\otimes G})$ through the lattice isomorphism $\mathfrak j$ from \eqref{Eq:BtoOmega}.
	
	Reference \cite{AffleckCMP1988} introduced the concept of frustration-free states, which will be the focus of our investigation. We reformulate it as follows:
	
	\begin{definition}[\cite{AffleckCMP1988}]
		Let $\{p_\Lambda\}$ be a frustration-free system of proper projections. One says that $\omega \in {\rm S}(\Ss^{\otimes G})$ is a frustration-free ground state for $\{p_\Lambda\}$ if $\omega(p_\Lambda)=0$ for all $\Lambda \in {\rm K}(G)$. In general, we say that $\omega$ is a frustration-free state if it so for at least one frustration-free system of proper projections.
	\end{definition}
	
	\begin{remark}
		{\rm Let $h_\Lambda = \sum_{g \cdot \Delta \subseteq \Lambda} \alpha_g(p_\Delta)$ be the seeds of a frustration-free inner-limit derivation $\delta_{\bm h}$ associated with the frustration-free $G$-system $\{p_\Lambda\}$. If $\omega$ is a frustration-free ground state for $\{p_\Lambda\}$, then each $p_\Lambda$ belongs to $\Bb_\omega$. As a consequence, $\omega(\delta_{\bm h}(a))=0$ for all $a$ in the domain of $\delta_{\bm h}$, hence $\omega$ is invariant to the time evolution generated by $\delta_{\bm h}$, and, furthermore,
			\begin{equation}\label{Eq:GS0}
				\omega\big(a^\ast \delta_{\bm h}(a)\big) = \omega(a^\ast [h_\Xi, a]) = \omega(a^\ast h_\Xi a) \geq 0
			\end{equation} 
			holds for any $a\in \bigcup \Ss_\Lambda$ and for some $\Xi \in {\rm K}(G)$. Since $\bigcup \Ss_\Lambda$ is dense in $Dom(\delta_{\bm h})$, a frustration-free ground state is also a ground state of $\delta_{\bm h}$ in the usual sense.
		} $\Diamond$
	\end{remark}
	
	The following statement gives a characterization of the frustration-free $G$-invariant ground states:
	
	\begin{proposition}\label{Prop:FFandGS}
		If the hereditary $C^\ast$-subalgebra $\Bb_\omega$ of a state $\omega \in {\rm S}(\Ss^{\otimes G})$ is $G$-invariant, proper and displays property {\rm (F)}, then $\omega$ is a frustration-free ground state for a frustration-free $G$-system of projections. 
	\end{proposition}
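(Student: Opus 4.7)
The plan is to show that the localized approximate unit of $\Bb_\omega$, supplied by Lemma~\ref{Prop:BbUnit}, is exactly the frustration-free $G$-system of projections whose zero-expectation values characterize $\omega$ as a frustration-free ground state. The argument proceeds in three essentially bookkeeping steps, each of which has already been prepared by the machinery developed above.

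First, since $\Bb_\omega$ is proper and satisfies property (F), it belongs to $\mathfrak H^F(\Ss^{\otimes G})$, and Lemma~\ref{Prop:BbUnit} delivers a well-defined family $\{p_\Lambda(\Bb_\omega)\}_{\Lambda\in{\rm K}(G)}$ of proper projections with $p_\Lambda(\Bb_\omega)\in \Ss_\Lambda$ forming a frustration-free system that is simultaneously an approximate unit for $\Bb_\omega$. Second, I invoke the $G$-equivariance of the morphism $\mathfrak p$ from Theorem~\ref{Prop:HvsFf}(1): the $G$-invariance hypothesis $\alpha_g(\Bb_\omega)=\Bb_\omega$ translates, via
\begin{equation}
\alpha_g\big(p_\Lambda(\Bb_\omega)\big)=p_{g\cdot\Lambda}\big(\alpha_g(\Bb_\omega)\big)=p_{g\cdot\Lambda}(\Bb_\omega),
\end{equation}
into precisely condition~\eqref{Eq:PGEq}. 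Hence $\{p_\Lambda(\Bb_\omega)\}\in\mathfrak F_G(\Ss^{\otimes G})$.

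Third, I show $\omega(p_\Lambda(\Bb_\omega))=0$ for every $\Lambda$. By definition, $\Bb_\omega=\Nn_\omega^\ast\cap\Nn_\omega$, so in particular $p_\Lambda(\Bb_\omega)\in\Nn_\omega$, meaning $\omega\big(p_\Lambda(\Bb_\omega)^\ast p_\Lambda(\Bb_\omega)\big)=0$; since each $p_\Lambda(\Bb_\omega)$ is a projection, $p_\Lambda^\ast p_\Lambda=p_\Lambda$ and the desired vanishing follows. Combining the three steps, $\omega$ is a frustration-free ground state for the frustration-free $G$-system $\{p_\Lambda(\Bb_\omega)\}$.

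I do not anticipate a genuine obstacle here; the statement is essentially an assembly of Lemma~\ref{Prop:BbUnit}, the $G$-equivariance in Theorem~\ref{Prop:HvsFf}, and the defining relation between $\Bb_\omega$ and $\Nn_\omega$. The only subtlety worth flagging is that one must rely on property (F) to guarantee that the localized approximate unit is nontrivial and proper — without (F) the intersections $\Bb_\omega\cap\Ss_\Lambda$ could all be zero, and the construction of the system would degenerate.
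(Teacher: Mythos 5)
Your proposal is correct and follows the same route as the paper: the paper's proof is exactly the observation that Lemma~\ref{Prop:BbUnit} (plus the remark that $G$-invariance of $\Bb_\omega$ makes the localized approximate unit a $G$-system) yields a proper frustration-free $G$-system $\{p_\Lambda(\Bb_\omega)\}$ contained in $\Bb_\omega$, hence annihilated by $\omega$. You merely spell out the bookkeeping ($G$-equivariance via Theorem~\ref{Prop:HvsFf} and $\omega(p_\Lambda)=\omega(p_\Lambda^\ast p_\Lambda)=0$) that the paper leaves implicit.
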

	
	\begin{proof}
		By Lemma~\ref{Prop:BbUnit}, $\Bb_\omega$ admits a $G$-invariant localized approximate unit $\{p_\Lambda(\Bb_\omega)\}$, which is proper, frustration-free and $\omega(p_\Lambda(\Bb_\omega))=0$.
	\end{proof}

	A frustration-free system of projections determines a canonical sub-set of ${\rm PS}(\Ss^{\otimes G})$ via the composition $\mathfrak j \circ \mathfrak p^{-1}$ of maps defined in \ref{Prop:BOmega} and \ref{Prop:HvsFf}. The following statements clarify the relation between this subset and the frustration-free ground states.
	
	\begin{lemma}\label{Lemm:PLvsOmega}
		Let $\{p_\Lambda\}$ be a frustration-free system of proper projections. Then $\omega$ is a frustration-free ground state for $\{p_\Lambda\}$ if and only if 
		\begin{equation}\mathfrak p^{-1}(\{p_\Lambda\}) \subseteq  \mathfrak \Bb_\omega.
		\end{equation}
	\end{lemma}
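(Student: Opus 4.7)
The proof hinges on two simple observations: first, because $p_\Lambda = p_\Lambda^\ast p_\Lambda$, the condition $\omega(p_\Lambda)=0$ is equivalent to $p_\Lambda \in \Nn_\omega$, and since $p_\Lambda$ is self-adjoint this is also equivalent to $p_\Lambda \in \Bb_\omega = \Nn_\omega^\ast \cap \Nn_\omega$. Second, a hereditary $C^\ast$-subalgebra $\Bb$ of a unital $C^\ast$-algebra $\Aa$ absorbs two-sided multiplication in the sense $\Bb \, \Aa \, \Bb \subseteq \Bb$. These two facts, combined with the explicit construction of $\mathfrak p^{-1}$ given in the proof of Theorem~\ref{Prop:HvsFf}, will carry the argument.

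For the forward implication, I would assume $\omega(p_\Lambda)=0$ for every $\Lambda \in {\rm K}(G)$, so that each $p_\Lambda$ belongs to $\Bb_\omega$. Fixing any increasing exhaustion $\{\Lambda_m\}$ with $\bigcup \Lambda_m = G$, the element $w = \sum_{m=1}^\infty 2^{-m} p_{\Lambda_m}$ is the norm-convergent sum used to define $\mathfrak p^{-1}(\{p_\Lambda\}) = \overline{w\,\Ss^{\otimes G}\, w}$. Since $\Bb_\omega$ is a norm-closed $C^\ast$-subalgebra containing all $p_{\Lambda_m}$, we get $w \in \Bb_\omega$. Hereditarity then yields $w\,\Ss^{\otimes G}\,w \subseteq \Bb_\omega$, and taking closures gives $\mathfrak p^{-1}(\{p_\Lambda\}) \subseteq \Bb_\omega$. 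The only subtle point here is that the definition of $\mathfrak p^{-1}(\{p_\Lambda\})$ depends a priori on the choice of exhaustion, but this independence was already settled inside the proof of Theorem~\ref{Prop:HvsFf}, so it is safe to invoke.

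For the converse, I would use the fact, observed in the proof of the proposition on maximal elements immediately preceding this lemma, that each $p_\Lambda$ itself lies in $\mathfrak p^{-1}(\{p_\Lambda\})$ (as $\{p_\Lambda\}$ is an approximate unit for that hereditary subalgebra and in particular is contained in it). Under the hypothesis $\mathfrak p^{-1}(\{p_\Lambda\}) \subseteq \Bb_\omega$, this immediately gives $p_\Lambda \in \Bb_\omega \subseteq \Nn_\omega$, hence $\omega(p_\Lambda) = \omega(p_\Lambda^\ast p_\Lambda) = 0$ for every $\Lambda$, which is the definition of a frustration-free ground state for $\{p_\Lambda\}$.

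There is no real obstacle: the statement is essentially a dictionary entry translating ``$\omega$ annihilates each $p_\Lambda$'' into ``$\Bb_\omega$ contains the hereditary subalgebra generated by $\{p_\Lambda\}$.'' The only place where a little care is needed is on the forward side, to make sure that the closure and the infinite sum defining $w$ stay inside $\Bb_\omega$; this reduces to the fact that $\Bb_\omega$ is closed and that the hereditary property allows one to sandwich arbitrary elements of $\Ss^{\otimes G}$ between elements of $\Bb_\omega$.
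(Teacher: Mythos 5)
Your proof is correct and follows essentially the same route as the paper: both hinge on the element $w=\sum_m 2^{-m}p_{\Lambda_m}$ lying in $\Bb_\omega$ and on hereditarity to absorb $\overline{w\,\Ss^{\otimes G}\,w}$. The only cosmetic difference is in the converse, where the paper deduces $\omega(p_{\Lambda_m})=0$ from $w\in\Bb_\omega$ and positivity and then uses $p_\Lambda\leq p_{\Lambda_m}$, while you invoke the earlier-established fact that each $p_\Lambda$ belongs to $\mathfrak p^{-1}(\{p_\Lambda\})$; both are valid.
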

	
	\begin{proof}
		As in \ref{Prop:HvsFf}, define $w=\sum_{m=1}^\infty 2^{-m} p_{\Lambda_m}$ for an increasing sequence $\Lambda_m \in {\rm K}(G)$ such that $\cup \Lambda_m =G$. Then $\mathfrak p^{-1}(\{p_\Lambda\})=w \Ss^{\otimes G} w$. If $\omega$ is a frustration-free ground state for $\{p_\Lambda\}$, then $w$ belongs to $\Bb_\omega$ because $\omega(w)=0$, and, as such, $\mathfrak p^{-1}(\{p_\Lambda\}) \subseteq \Bb_\omega$. Reciprocally, if $\mathfrak p^{-1}(\{p_\Lambda\}) \subseteq \Bb_\omega$, then $w \in \Bb_\omega$ and, as such, $\omega(w)=0$ or $\omega(p_{\Lambda_m})=0$ for all $m$. If $\Lambda \in {\rm K}(G)$, then $\Lambda \subseteq \Lambda_m$ for some $m$, hence $\omega(p_\Lambda)=0$ for all $\Lambda \in {\rm K}(G)$ and we can conclude that $\omega$ is a frustration-free ground state for $\{p_\Lambda\}$.
	\end{proof}
	
	\begin{theorem}\label{Th:PLvsOmega}
		Let $\{p_\Lambda\}$ be a frustration-free system of proper projections. Then 
		\begin{equation}\label{Eq:PvsB}
			\mathfrak p^{-1}(\{p_\Lambda\})= \bigcap  \mathfrak \Bb_\omega,
		\end{equation}
		where the intersection runs over all frustration-free ground states of $\{p_\Lambda\}$.
	\end{theorem}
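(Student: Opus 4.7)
The plan is to combine Lemma~\ref{Lemm:PLvsOmega} with the inverse direction of the lattice isomorphism from Proposition~\ref{Prop:BOmega}. The inclusion $\mathfrak{p}^{-1}(\{p_\Lambda\}) \subseteq \bigcap_\omega \Bb_\omega$ is immediate: for every frustration-free ground state $\omega$ of $\{p_\Lambda\}$, Lemma~\ref{Lemm:PLvsOmega} supplies $\mathfrak{p}^{-1}(\{p_\Lambda\}) \subseteq \Bb_\omega$, so intersecting over all such $\omega$ yields the claim.

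For the reverse inclusion, I set $\Bb := \mathfrak{p}^{-1}(\{p_\Lambda\})$, which is a proper hereditary $C^\ast$-subalgebra by Theorem~\ref{Prop:HvsFf}(2). Applying the inverse lattice isomorphism $\mathfrak j^{-1} \colon \Omega \mapsto \bigcap_{\omega \in \Omega}\Bb_\omega$ from Proposition~\ref{Prop:BOmega} to the subset $\Omega_\Bb = \{\omega \in {\rm PS}(\Ss^{\otimes G}) : \Bb \subseteq \Bb_\omega\}$ recovers $\Bb$ itself:
\begin{equation*}
\Bb \ = \ \bigcap_{\omega \in \Omega_\Bb} \Bb_\omega .
\end{equation*}
Now Lemma~\ref{Lemm:PLvsOmega} identifies the indexing set $\Omega_\Bb$ as precisely the set of \emph{pure} frustration-free ground states of $\{p_\Lambda\}$. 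Since these form a subset of all frustration-free ground states, the intersection of $\Bb_\omega$ over the larger set is contained in the intersection over the smaller set, which equals $\Bb$. Combined with the first inclusion, this closes the equality.

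No significant obstacle is anticipated: the argument amounts to chaining Lemma~\ref{Lemm:PLvsOmega} (which characterizes frustration-free ground states by the inclusion $\Bb \subseteq \Bb_\omega$) with Proposition~\ref{Prop:BOmega} (which represents any proper hereditary $C^\ast$-subalgebra as the intersection of the maximal ones containing it). The only subtlety is that Proposition~\ref{Prop:BOmega} recovers $\Bb$ as an intersection over \emph{pure} states, while the theorem intersects over \emph{all} frustration-free ground states; this is harmless precisely because the reverse containment from Lemma~\ref{Lemm:PLvsOmega} pins $\Bb$ inside the potentially smaller intersection, forcing the two to coincide.
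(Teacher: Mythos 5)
Your proposal is correct and follows essentially the same route as the paper: the easy inclusion via Lemma~\ref{Lemm:PLvsOmega}, and the reverse inclusion by recovering $\mathfrak p^{-1}(\{p_\Lambda\})$ as the intersection of the maximal hereditary $C^\ast$-subalgebras containing it (Proposition~\ref{Prop:BOmega}), which Lemma~\ref{Lemm:PLvsOmega} identifies with the $\Bb_\omega$ of the pure frustration-free ground states. Your explicit remark that the intersection over all frustration-free ground states sits inside the intersection over the pure ones is exactly the (implicit) final step of the paper's argument, just spelled out more carefully.
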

	
	\begin{proof}
		From \ref{Lemm:PLvsOmega}, we have the inclusion $\mathfrak p^{-1}(\{p_\Lambda\}) \subseteq \bigcap  \mathfrak \Bb_\omega$, hence the challenge is to prove the opposite inclusion. Any proper hereditary sub-$C^\ast$-algebra is the intersection of the maximal hereditary sub-$C^\ast$-algebras containing it. According to \ref{Prop:BOmega} and \ref{Lemm:PLvsOmega}, in the case of $\mathfrak p^{-1}(\{p_\Lambda\})$, the latter correspond to the pure states that are frustration-free ground states for $\{p_\Lambda\}$.
	\end{proof}
	
	\begin{corollary}\label{Cor:PMinus}
		In the setting of \ref{Th:PLvsOmega}, 
		\begin{equation}
			(\mathfrak j \circ \mathfrak p^{-1})(\{p_\Lambda\}) = \bigcup \mathfrak s(\omega),
		\end{equation}
		where the union runs over all frustration-free ground states for $\{p_\Lambda\}$. The statement remains valid if the union is restricted to pure frustration-free ground states.
	\end{corollary}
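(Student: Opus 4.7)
The plan is to unpack both sides using the definitions of $\mathfrak j$ and $\mathfrak s$, and then invoke Lemma~\ref{Lemm:PLvsOmega} to connect them. By the formula \eqref{Eq:BtoOmega} for $\mathfrak j$,
\[
    (\mathfrak j \circ \mathfrak p^{-1})(\{p_\Lambda\}) \;=\; \Omega_{\mathfrak p^{-1}(\{p_\Lambda\})} \;=\; \{\omega' \in {\rm PS}(\Ss^{\otimes G}) : \mathfrak p^{-1}(\{p_\Lambda\}) \subseteq \Bb_{\omega'}\},
\]
and Lemma~\ref{Lemm:PLvsOmega} rewrites the inclusion $\mathfrak p^{-1}(\{p_\Lambda\}) \subseteq \Bb_{\omega'}$ as the assertion that $\omega'$ is a frustration-free ground state for $\{p_\Lambda\}$. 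Hence the left-hand side is precisely the set of pure frustration-free ground states for $\{p_\Lambda\}$.

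Next, I would verify two inclusions for the right-hand side. Given any frustration-free ground state $\omega$ for $\{p_\Lambda\}$ (pure or not), the condition $\omega(p_\Lambda)=0$ together with $p_\Lambda^\ast p_\Lambda = p_\Lambda$ places each $p_\Lambda$ in $\Nn_\omega$, and by self-adjointness also in $\Bb_\omega = \Nn_\omega^\ast \cap \Nn_\omega$. For any $\omega' \in \mathfrak s(\omega) = \Omega_{\Bb_\omega}$ one has $\Bb_\omega \subseteq \Bb_{\omega'}$, so every $p_\Lambda$ lies in $\Bb_{\omega'}$, forcing $\omega'(p_\Lambda)=0$. Thus every element of $\bigcup \mathfrak s(\omega)$ is a pure frustration-free ground state, which proves the inclusion $\bigcup \mathfrak s(\omega) \subseteq (\mathfrak j \circ \mathfrak p^{-1})(\{p_\Lambda\})$.

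For the reverse inclusion, any pure frustration-free ground state $\omega'$ trivially satisfies $\omega' \in \mathfrak s(\omega') = \Omega_{\Bb_{\omega'}}$, so $\omega'$ is already contained in the union. Since $\omega'$ itself is pure, the very same witnessing element may be used whether the union is taken over all frustration-free ground states or only over the pure ones, and so both formulations produce the same set. No serious obstacle is expected; the argument is essentially a dictionary translation of Lemma~\ref{Lemm:PLvsOmega} through the lattice isomorphism~$\mathfrak j$ of Proposition~\ref{Prop:BOmega}.
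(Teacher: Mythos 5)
Your argument is correct, but it takes a more elementary route than the paper. The paper deduces the corollary in one line from Theorem~\ref{Th:PLvsOmega}: it applies the lattice isomorphism $\mathfrak j$ of Proposition~\ref{Prop:BOmega} to the identity $\mathfrak p^{-1}(\{p_\Lambda\})=\bigcap\Bb_\omega$, using that $\mathfrak j$ reverses the order and hence carries the intersection of the hereditary subalgebras $\Bb_\omega$ into the union of the supports $\mathfrak s(\omega)=\mathfrak j(\Bb_\omega)$; the refinement to pure frustration-free ground states is then inherited from the proof of that theorem (where the intersection is realized over maximal hereditary subalgebras, i.e.\ over pure states). You instead bypass Theorem~\ref{Th:PLvsOmega} entirely: unpacking \eqref{Eq:BtoOmega} and invoking Lemma~\ref{Lemm:PLvsOmega} identifies the left-hand side as the set of pure frustration-free ground states, and your two element-wise inclusions (each $p_\Lambda\in\Bb_\omega$ propagates to every $\Bb_{\omega'}$ with $\omega'\in\mathfrak s(\omega)$, and every pure frustration-free ground state lies in its own support) give both the stated equality and the pure-state variant simultaneously. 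What your approach buys is self-containedness — it avoids the nontrivial input behind Theorem~\ref{Th:PLvsOmega} (that a proper hereditary $C^\ast$-subalgebra is the intersection of the maximal ones containing it) and handles the pure refinement directly; what the paper's approach buys is the structural picture, exhibiting the corollary as the $\mathfrak j$-image of \eqref{Eq:PvsB}, in line with the diagram \eqref{Eq:MasterDiag}. Both proofs are valid.
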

	
	\begin{proof}
		We recall that, for $\omega \in {\rm S}(\Ss^{\otimes G})$, $\mathfrak s(\omega)=\mathfrak j(\Bb_\omega)$ (see \ref{Cor:StateSets}). Then the statement follows by applying the lattice isomorphism $\mathfrak j$ from $(\hat{\mathfrak H}(\Aa),\subseteq)$ to $(\mathfrak P[{\rm PS}(\Aa)],\supseteq)$ on \eqref{Eq:PvsB}.
	\end{proof}
	
	We now can give a complete and intrinsic characterization of the frustration-free ground states:
	\begin{theorem}\label{Th:FFGS}
		A state is a frustration-free ground state if and only if its support is contained in the support of a state whose hereditary $C^\ast$-algebra satisfies property (F).
	\end{theorem}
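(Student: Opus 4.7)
\medskip

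The plan is to reduce the statement to one about hereditary $C^\ast$-subalgebras via the order-reversing lattice isomorphism $\mathfrak j$ of Proposition~\ref{Prop:BOmega}: namely, $\omega$ is a frustration-free ground state if and only if there exists $\omega' \in {\rm S}(\Ss^{\otimes G})$ with $\Bb_{\omega'} \in \mathfrak H^F(\Ss^{\otimes G})$ and $\Bb_{\omega'} \subseteq \Bb_\omega$. I will prove the two directions separately, using exclusively the constructions built up in this section.

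For the ``if'' direction, Lemma~\ref{Prop:BbUnit} immediately supplies the localized approximate unit $\{p_\Lambda(\Bb_{\omega'})\}_{\Lambda \in {\rm K}(G)} \in \mathfrak F(\Ss^{\otimes G})$; each of its members lies in $\Bb_{\omega'} \subseteq \Bb_\omega$, so $\omega(p_\Lambda(\Bb_{\omega'}))=0$ for every $\Lambda$, identifying $\omega$ as a frustration-free ground state for that system.

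For the ``only if'' direction, suppose $\omega$ is a frustration-free ground state for some $\{p_\Lambda\} \in \mathfrak F(\Ss^{\otimes G})$ and set $\Bb := \mathfrak p^{-1}(\{p_\Lambda\}) \in \mathfrak H^F(\Ss^{\otimes G})$. Lemma~\ref{Lemm:PLvsOmega} already yields $\Bb \subseteq \Bb_\omega$, so it suffices to produce a state $\omega'$ with $\Bb_{\omega'} = \Bb$, which then automatically satisfies property~(F) and $\Bb_{\omega'} \subseteq \Bb_\omega$. Let $F$ denote the set of frustration-free ground states for $\{p_\Lambda\}$. By Proposition~\ref{Prop:LSet} applied to the proper left ideal $L(\Bb)$, $F \neq \emptyset$, and Theorem~\ref{Th:PLvsOmega} gives $\Bb = \bigcap_{\eta \in F} \Bb_\eta$. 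Since $\Ss^{\otimes G}$ is separable, the state space is weak-$\ast$ compact metrizable, so the weak-$\ast$ closed subset $F$ inherits separability. I would then pick a countable dense sequence $\{\eta_n\} \subseteq F$ and define
\begin{equation*}
\omega' := \sum_{n \geq 1} 2^{-n}\, \eta_n \in {\rm S}(\Ss^{\otimes G}).
\end{equation*}
Strict positivity of the weights gives $\omega'(a^\ast a) = 0 \Leftrightarrow \eta_n(a^\ast a) = 0$ for all $n$, and similarly for $aa^\ast$, so that $\Bb_{\omega'} = \bigcap_n \Bb_{\eta_n}$.

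The main step, which I expect to be the principal obstacle, is upgrading this countable intersection to the full one, $\bigcap_n \Bb_{\eta_n} = \bigcap_{\eta \in F} \Bb_\eta = \Bb$. The inclusion $\supseteq$ is trivial; for the reverse, given $a \in \bigcap_n \Bb_{\eta_n}$ and an arbitrary $\eta \in F$, I would extract $\eta_{n_k} \to \eta$ from the dense sequence and invoke weak-$\ast$ continuity of the evaluation maps $\eta \mapsto \eta(a^\ast a)$ and $\eta \mapsto \eta(aa^\ast)$ to conclude $a \in \Bb_\eta$. This delivers $\Bb_{\omega'} = \Bb \in \mathfrak H^F(\Ss^{\otimes G})$ and closes the proof.
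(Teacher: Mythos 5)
Your proof is correct, and its skeleton is the same as the paper's: both directions are routed through the order-reversing correspondence of Proposition~\ref{Prop:BOmega}, the ``if'' direction uses the localized approximate unit of Lemma~\ref{Prop:BbUnit} (the paper cites Proposition~\ref{Prop:FFandGS}, which is the same mechanism), and the ``only if'' direction uses $\mathfrak p^{-1}(\{p_\Lambda\}) \subseteq \Bb_\omega$ from Lemma~\ref{Lemm:PLvsOmega}. Where you genuinely add something is in the forward direction: the theorem asks for a \emph{state} whose hereditary $C^\ast$-subalgebra has property (F) and whose support contains $\mathfrak s(\omega)$, and the paper's proof only exhibits the hereditary subalgebra $\mathfrak p^{-1}(\{p_\Lambda\})$ (equivalently the subset $(\mathfrak j\circ\mathfrak p^{-1})(\{p_\Lambda\})$ of pure states), leaving implicit that this subset is realized as the support of an actual state; it also states the inclusion of supports in the wrong direction, presumably a slip. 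Your construction closes exactly this step: the set $F$ of frustration-free ground states of $\{p_\Lambda\}$ is weak-$\ast$ closed in the compact metrizable state space, hence separable and nonempty (Proposition~\ref{Prop:LSet}, or part (2) of \ref{Prop:HvsFf}), the convex combination $\omega'=\sum 2^{-n}\eta_n$ of a dense sequence satisfies $\Bb_{\omega'}=\bigcap_n\Bb_{\eta_n}$, and the weak-$\ast$ continuity of $\eta\mapsto\eta(a^\ast a)$, $\eta\mapsto\eta(aa^\ast)$ upgrades this to $\bigcap_{\eta\in F}\Bb_\eta$, which is $\mathfrak p^{-1}(\{p_\Lambda\})$ by Theorem~\ref{Th:PLvsOmega}. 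So your argument is a more explicit (and, as far as the literal statement is concerned, more complete) version of the paper's; the paper's version is shorter but relies on the reader supplying the witness state, e.g.\ by exactly the kind of separability argument you give.
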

	
	\begin{proof}
		If $\omega$ is the ground state for a frustration-free system of projections $\{p_\Lambda\}$, then its support must include the subset $(\mathfrak j \circ \mathfrak p^{-1})(\{p_\Lambda\})\subset {\rm PS}(\Ss^{\otimes G})$. The hereditary $C^\ast$-subalgebra corresponding to this subset satisfies property (F). Reciprocally, if the hereditary $C^\ast$-subalgebra $\Bb_\eta$ of a state $\eta$ satisfies property (F), then \ref{Prop:FFandGS} assures us that $\eta\big(p_\Lambda(\Bb_\eta)\big)=0$. Now, if $\mathfrak s(\omega) \subseteq \mathfrak s (\eta)$, then $\Bb_\eta \subseteq \Bb_\omega$ and, as such, $\omega\big(p_\Lambda(\Bb_\eta)\big)=0$ because $\omega(\Bb_\omega)=\{0\}$.
	\end{proof}
	
	Lastly, we introduce a well-known condition that assures that $(\mathfrak j \circ \mathfrak p^{-1})(\{p_\Lambda\})$ consists of a single point:
	
	\begin{proposition}\label{Prop:LTQO}
		Suppose $\{p_\Lambda\}$ is a frustration-free system of proper projections with the property that
		\begin{equation}\label{Eq:LTQO}
			\lim_{\Lambda \supseteq \Delta} \|p_\Lambda^\bot a p_\Lambda^\bot -\omega_\Lambda(a) p_\Lambda^\bot \|_{\BM(\Hh_\Lambda)}=0
		\end{equation}
		for any $\Delta \in {\rm K}(G)$ and local observable $a \in \Ss_\Delta$, where 
		\begin{equation}
			\omega_\Lambda(a)=\frac{1}{{\rm dim}\, p_\Lambda^\bot}{\rm Tr}(a\, p_\Lambda^\bot).
		\end{equation}
		Then $\mathfrak p^{-1}(\{p_\Lambda\})$ is maximal.
	\end{proposition}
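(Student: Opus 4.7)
The plan is to exploit the LTQO condition to force all frustration-free ground states of $\{p_\Lambda\}$ to coincide on a dense subalgebra, conclude they coincide everywhere, argue the unique ground state is pure, and then invoke Theorem~\ref{Th:PLvsOmega} together with Proposition~\ref{Prop:BOmega} to conclude that $\mathfrak{p}^{-1}(\{p_\Lambda\})$ is a maximal proper hereditary $C^\ast$-subalgebra.

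First, I would extract the key identity: if $\omega$ is any frustration-free ground state for $\{p_\Lambda\}$, then $\omega(p_\Lambda)=0$ puts $p_\Lambda$ into $\Nn_\omega$, and the Cauchy--Schwarz inequality for states, namely $|\omega(b^\ast a)|^2\leq \omega(b^\ast b)\,\omega(a^\ast a)$, immediately yields $\omega(p_\Lambda a)=\omega(a p_\Lambda)=0$ for every $a\in \Ss^{\otimes G}$. Expanding $p_\Lambda^\bot a p_\Lambda^\bot$ thus gives $\omega(p_\Lambda^\bot a p_\Lambda^\bot)=\omega(a)$, while $\omega(p_\Lambda^\bot)=1$ gives $\omega(\omega_\Lambda(a) p_\Lambda^\bot)=\omega_\Lambda(a)$. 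Applying $\omega$ to the operator whose norm appears in \eqref{Eq:LTQO} then produces
\begin{equation*}
|\omega(a)-\omega_\Lambda(a)|\leq \|p_\Lambda^\bot a p_\Lambda^\bot-\omega_\Lambda(a) p_\Lambda^\bot\|.
\end{equation*}
By the LTQO hypothesis, the right-hand side tends to zero along $\Lambda\supseteq \Delta$ for any local $a\in\Ss_\Delta$, so $\omega(a)=\lim_\Lambda \omega_\Lambda(a)$. The value of this limit is determined entirely by $\{p_\Lambda\}$ and $a$, independent of $\omega$; hence all frustration-free ground states agree on the dense $\ast$-subalgebra $\bigcup_\Lambda \Ss_\Lambda$ and, by norm continuity, on all of $\Ss^{\otimes G}$.

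Next, I would argue that the unique frustration-free ground state $\omega_0$ so produced is in fact pure. Since $\mathfrak p^{-1}(\{p_\Lambda\})$ is a proper hereditary $C^\ast$-subalgebra, Proposition~\ref{Prop:LSet} combined with Proposition~\ref{Prop:BOmega} ensures that it is contained in at least one maximal proper hereditary $C^\ast$-subalgebra $\Bb_\eta$ corresponding to some pure state $\eta$. By Lemma~\ref{Lemm:PLvsOmega}, $\eta$ is then a frustration-free ground state for $\{p_\Lambda\}$, and by the uniqueness established above $\eta=\omega_0$. Hence $\omega_0$ is pure. Finally, Theorem~\ref{Th:PLvsOmega} gives
\begin{equation*}
\mathfrak p^{-1}(\{p_\Lambda\}) = \bigcap \Bb_\omega = \Bb_{\omega_0},
\end{equation*}
and Proposition~\ref{Prop:BOmega} identifies $\Bb_{\omega_0}$ as a maximal element of $\hat{\mathfrak H}(\Ss^{\otimes G})$, which is the desired conclusion.

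The entire argument is short and the principal conceptual step is the first one: recognizing that the LTQO estimate, read against any frustration-free ground state, collapses to the quantitative statement $\omega(a)\approx \omega_\Lambda(a)$ and thereby freezes $\omega$ on all local observables. The purity of $\omega_0$ and the final identification of $\mathfrak p^{-1}(\{p_\Lambda\})$ with $\Bb_{\omega_0}$ are then routine bookkeeping using results already established in the preceding sections. I do not anticipate any genuine obstacle.
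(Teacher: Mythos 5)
Your proof is correct and follows essentially the same route as the paper: use the LTQO bound to pin the local expectations of every frustration-free ground state to $\omega_\Lambda(a)$, conclude the ground state is unique, and then identify $\mathfrak p^{-1}(\{p_\Lambda\})$ with $\Bb_{\omega_0}$ for that (necessarily pure) state, whose maximality follows from Proposition~\ref{Prop:BOmega}. The only cosmetic difference is that you evaluate the state directly on the LTQO operator, using $\omega(p_\Lambda a)=\omega(a p_\Lambda)=0$ from Cauchy--Schwarz, whereas the paper decomposes the restricted ground states into vector states supported in ${\rm Ker}(p_\Lambda)$; both yield the same estimate.
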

	
	\begin{proof}
		According to \ref{Cor:PMinus}, it is enough to show that $\{p_\Lambda\}$ accepts a unique frustration-free ground state. Note that \ref{Prop:HvsFf} assures us that $(\mathfrak j \circ \mathfrak p^{-1})(\{p_\Lambda\})$ is not void, that is, $\{p_\Lambda\}$ accepts at least one frustration-free ground state. Let $a\in \Ss_\Delta \subseteq \Ss_\Lambda \simeq \BM(\Hh_\Lambda)$ be a local observable. Then
		\begin{equation}\label{eq: FF3}
			\left\| p_{\Lambda}^\perp a p_{\Lambda}^\perp - \omega_{\Lambda}(a) p_{\Lambda}^\perp \right\| 
			= \sup^{\|\psi\| = 1}_{\psi \in {\rm Ker}(p_{\Lambda})} \left| \langle \psi, a \psi \rangle - \omega_{\Lambda}(a) \right|.
		\end{equation}
		Consider now two frustration-free ground states $\omega_1$ and $\omega_2$ of $\{p_\Lambda\}$. Then, there exist orthonormal vectors  $\psi_{\Lambda}^{\alpha,j}\in {\rm Ker}(p_{\Lambda})$ and positive coefficients $c_{\alpha,j}$ such that
		$\omega_\alpha(a) = \sum_j c_{\alpha,j}\langle \psi_{\Lambda}^{\alpha,j}, a \psi_{\Lambda}^{\alpha,j} \rangle$, with $\sum_j c_{\alpha,j}=1$, $\alpha=1,2$. Using \eqref{eq: FF3}, we have
		\begin{equation}
			\begin{aligned}
				|\omega_1(a) -\omega_2(a)| &\leq \sum_\alpha |\omega_\alpha(a) - \omega_{\Lambda}(a)| \\
				&\leq \sum_{\alpha,j} c_{\alpha,j}\left| \langle \psi_{\Lambda}^{\alpha,j}, a \psi_{\Lambda}^{\alpha,j} \rangle - \omega_{\Lambda} (a) \right|\\
				&\leq 2 \left\| p_{\Lambda}^\perp a p_{\Lambda}^\perp - \omega_{\Lambda}(a) p_{\Lambda}^\perp \right\|.
			\end{aligned}
		\end{equation}
		Taking the limit over $\Lambda$ gives $\omega_1(a) =\omega_2(a)$. Since $\Delta$ was arbitrary, this is true for all local observables and, by continuity, $\omega_1 = \omega_2$.
	\end{proof}
	
	\begin{remark}
		{\rm Condition \eqref{Eq:LTQO} is known as local topological quantum order (LTQO) in the published literature \cite{NachtergaeleLMP2024}. It is automatic that, if $\{p_\Lambda\}$ is a proper $G$-system that displays LTQO, then its unique frustration-free ground state is pure and $G$-invariant. An optimal version of the LTQO condition is supplied in \ref{Sec:OptLTQO}.} \ $\Diamond$
	\end{remark}
	
	\section{Frustration-free open projections}\label{Sec:FFOpenProj}

	We start with a few context-free remarks about open projections. Henceforth, let $\Aa$ be again a unital separable $C^\ast$-algebra and $\Aa^{\ast \ast}$ be its double dual. The latter is naturally a $W^\ast$-algebra that can be concretely realized as the weak closure of the universal representation $\pi_u = \bigoplus_{\eta \in {\rm S}(\Aa) } \pi_\eta$ of $\Aa$ \cite{ShermanPICM1950,TakedaPJA1954}. If $\Aa$ is a $G$-$C^\ast$-algebra, then $\Aa^{\ast \ast}$ is a $G$-$W^\ast$-algebra. All the limits appearing in this section are taken in the weak$^\ast$-topology of $\Aa^{\ast \ast}$, which coincides with the one induced from the strong topology of $\BM(\Hh_u)$ on bounded subsets.
	
	\begin{definition}[\cite{AkemannJFA1969}]
		A projection $p\in \Aa^{\ast \ast}$ is open if there exists a net $\{a_\alpha\}\in \Aa$ such that $0\leq a_\alpha \uparrow p$. If $p$ is open, one says that $p^\bot : =1-p$ is closed.
	\end{definition}
	
	\begin{proposition}[\cite{AkemannJFA1969,AkemannAM2015}]
		We denote the sets of projections and of open projections in $\Aa^{\ast \ast}$ by $\Pp(\Aa^{\ast \ast})$ and $\Pp_o(\Aa^{\ast \ast})$, respectively, and we endow both with the partial order inherited from $\BM(\Hh_u)$. Then $(\Pp(\Aa^{\ast \ast}),\leq )$ is a complete lattice and $(\Pp_o(\Aa^{\ast \ast}),\leq )$ is a semilattice.
	\end{proposition}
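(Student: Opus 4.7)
The plan is to handle the two assertions separately, each as a specialization of a classical structural fact about the double dual. For the first, I would invoke that $\Aa^{\ast\ast}=\pi_u(\Aa)''$ is a von Neumann algebra, so its projection lattice is complete by the standard bicommutant argument. For the second, I would route through the well-known bijection between open projections in $\Aa^{\ast\ast}$ and hereditary $C^\ast$-subalgebras of $\Aa$, together with some monotone functional calculus.

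For $\Pp(\Aa^{\ast\ast})$, given a family $\{p_\alpha\}$, I would define $\vee p_\alpha$ as the projection in $\BM(\Hh_u)$ onto $\overline{{\rm span}}\bigcup_\alpha{\rm Ran}(p_\alpha)$ and $\wedge p_\alpha$ as the projection onto $\bigcap_\alpha{\rm Ran}(p_\alpha)$. Both closed subspaces are invariant under the commutant $\pi_u(\Aa)'$ because each individual range is, so by the bicommutant theorem both projections lie in $\Aa^{\ast\ast}$. Routine checking confirms they give least upper and greatest lower bounds in the order inherited from $\BM(\Hh_u)$.

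For $\Pp_o(\Aa^{\ast\ast})$, it is enough to show closure under the join operation just introduced. Given $\{p_\alpha\}\subset\Pp_o(\Aa^{\ast\ast})$ together with increasing approximating nets $0\leq a^{(\alpha)}_\gamma\uparrow p_\alpha$ in $\Aa_+$ bounded by $1$, I would build the upward-filtered net of positive contractions
\begin{equation}
b_{(F,\gamma,k)} := f_k\!\left(\sum_{\alpha\in F} a^{(\alpha)}_{\gamma_\alpha}\right)\in\Aa_+, \qquad f_k(t):=\min(kt,1),
\end{equation}
indexed by finite $F$, choices $\{\gamma_\alpha\}_{\alpha\in F}$, and $k\in\NM^\times$, and show that it is monotone in the product order and converges in the strong operator topology to the join $\vee_\alpha p_\alpha$ computed above. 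The main obstacle will be this convergence claim: one must verify that the net truly exhausts $\vee_\alpha p_\alpha$ rather than some strict sub-projection. The key input is that $f_k(a)$ converges strongly to the range projection of any positive $a\in\Aa^{\ast\ast}$ as $k\to\infty$, which, together with lower semicontinuity of ranges under strong-operator limits, identifies the limit of the partial range projections with the closed span of the ${\rm Ran}(p_\alpha)$'s.
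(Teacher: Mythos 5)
Your first half is fine and is essentially the paper's argument: the paper simply invokes the standard fact that the projection lattice of a $W^\ast$-algebra is complete, and your bicommutant/range argument is the standard proof of exactly that. For the second half you take a genuinely different route. You aim to show that $\Pp_o(\Aa^{\ast\ast})$ is closed under suprema computed in $\Pp(\Aa^{\ast\ast})$, i.e.\ that it is a join-semilattice, by exhibiting an increasing net in $\Aa_+$ converging to $\vee_\alpha p_\alpha$. The paper instead produces a meet-semilattice: it defines $p\wedge_o p'$ as the supremum of all open projections lying below both $p$ and $p'$, and cites Prop.~II.5 of \cite{AkemannJFA1969} for the fact that a supremum of open projections is open --- which is precisely the fact you are trying to prove by hand (note the paper's remark that the meet inherited from $\Pp(\Aa^{\ast\ast})$ need not be open, which is why its meet is $\wedge_o$ and not $\wedge$). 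Since the statement only says ``semilattice'', a completed join-semilattice argument would literally suffice, but be aware that the operation the paper constructs and uses is the meet $\wedge_o$.

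There is, however, a concrete gap in your key step. The function $f_k(t)=\min(kt,1)$ is not operator monotone (operator monotone functions on $[0,\infty)$ are real-analytic on the open half-line, while $f_k$ has a kink at $t=1/k$; there exist positive matrices $x\le y$ with $f_k(x)\not\le f_k(y)$). Hence from $\sum_{\alpha\in F} a^{(\alpha)}_{\gamma_\alpha}\le \sum_{\alpha\in F'} a^{(\alpha)}_{\gamma'_\alpha}$ you cannot conclude $b_{(F,\gamma,k)}\le b_{(F',\gamma',k)}$, so the claimed monotonicity of your net fails --- and monotonicity is not cosmetic here, because the paper's definition of an open projection requires an \emph{increasing} net $0\le a_\alpha\uparrow p$. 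The repair is straightforward: replace $f_k$ by an operator monotone function such as $g_k(t)=kt(1+kt)^{-1}$, which is pointwise nondecreasing in $k$, satisfies $g_k(a)\uparrow s(a)$ (the support projection) for each positive $a$, and preserves the order $x\le y\Rightarrow g_k(x)\le g_k(y)$. Then the net is genuinely increasing; each term is $\le \vee_\alpha p_\alpha$ because its support projection lies under $\vee_\alpha p_\alpha$, and the limit dominates every $s(a^{(\alpha)}_\gamma)$, hence every $p_\alpha$, which identifies the limit as $\vee_\alpha p_\alpha$ without appeal to the vague ``lower semicontinuity of ranges''. Alternatively, pass to the hereditary $C^\ast$-subalgebra generated by $\bigcup_\alpha\big(p_\alpha\Aa p_\alpha\cap\Aa\big)$ and use its canonical approximate unit, or simply cite \cite{AkemannJFA1969} as the paper does.
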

	
	\begin{proof}
		The first statement is always true for $W^\ast$-algebras. Now, let $p,p' \in \Pp_o(\Aa^{\ast \ast})$ and consider the subset 
		\begin{equation}
			J(p,p') =\{q \in \Pp_o(\Aa^{\ast \ast}), \ q \leq p, \ q \leq p' \}.
		\end{equation}
		Then 
		\begin{equation}
			p \wedge_o p' := \sup\nolimits_{\Pp(\Aa^{\ast \ast})} \, J(p,p')
		\end{equation} 
		exists and is an open projection \cite[Prop.~II.5]{AkemannJFA1969}. Clearly, $p \wedge_o p' \leq p$ and $p \wedge_o p' \leq p'$ and $p \wedge_o p'$ is the maximal open projection with this property.
	\end{proof}
	
	\begin{remark}
		{\rm Note that, in general, $(\Pp_o(\Aa^{\ast \ast}),\leq )$ is not a sublattice of $(\Pp(\Aa^{\ast \ast}),\leq )$, since $p\wedge p'$ may not be open \cite[Example~II.6]{AkemannJFA1969}, but the poset of closed projections is \cite[Prop.~II.5]{AkemannJFA1969}.} \ $\Diamond$
	\end{remark}
	
	\begin{proposition}[\cite{AkemannJFA1969,AkemannJFA1970}]\label{Prop:P1}
		$\Pp_o(\Aa^{\ast \ast})$ and $\mathfrak H(\Aa)$ are isomorphic posets. In one direction, the isomorphism is implemented by 
		\begin{equation}
			\Pp_o(\Aa^{\ast \ast})\ni p \mapsto p\Aa^{\ast \ast} p \cap \Aa = p\Aa p \cap \Aa \in \mathfrak H(\Aa).
		\end{equation}
		and by 
		\begin{equation}
			\mathfrak H(\Aa) \ni \Bb=\overline{w_\Bb \Aa w_\Bb} \mapsto p = \lim_{n\to \infty} w_\Bb^{1/n} \in \Pp_o(\Aa^{\ast \ast})\subset \Aa^{**}
		\end{equation}
		in the opposite direction.
	\end{proposition}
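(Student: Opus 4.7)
The plan is to prove (i) well-definedness of both maps, (ii) their mutual inverseness, and (iii) order-preservation. Throughout, on bounded subsets of $\Aa^{\ast\ast}$ the weak$^\ast$ topology agrees with the strong operator topology, so monotone bounded nets of self-adjoints converge. For the forward map $p\mapsto p\Aa^{\ast\ast}p\cap\Aa$, hereditariness reduces to the standard projection fact that $0\leq a\leq \|b\|p$ forces $a=pap$, while the equality with $p\Aa p\cap\Aa$ is trivial since any $a\in\Aa$ with $a=pap$ realises itself as $p\cdot a\cdot p$. For the reverse map, $\{w_\Bb^{1/n}\}_n$ is increasing (as $\lambda^{1/n}$ is increasing in $n$ on $[0,1]$), bounded by $1$, and lies in $\Bb$; it therefore converges weakly$^\ast$ to a positive contraction $p_\Bb$, and applying the same to $(w_\Bb^{1/n})^2=w_\Bb^{2/n}$ yields $p_\Bb^2=p_\Bb$, with openness automatic from the definition.

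For mutual inversion starting from $p$, set $\Bb:=p\Aa p\cap\Aa$ and $p':=p_\Bb$. Functional calculus applied to $w_\Bb\in\Bb$ gives $pw_\Bb^{1/n}p=w_\Bb^{1/n}$, hence $p'=pp'p\leq p$ in the limit; conversely $p$ is the weak$^\ast$-supremum of the contractions $a\in\Aa^{+}$ with $a\leq p$, each of which lies in $\Bb$ and satisfies $aw_\Bb^{1/n}\to a$ in norm by the approximate-unit property, giving $a\leq p'$ and hence $p\leq p'$. Starting from $\Bb$, the inclusion $\Bb\subset p_\Bb\Aa p_\Bb\cap\Aa$ uses $p_\Bb w_\Bb=w_\Bb p_\Bb=w_\Bb$, a consequence of the uniform estimate $\sup_{\lambda\in[0,1]}\lambda(1-\lambda^{1/n})\to 0$, combined with Theorem~\ref{Th:HCharacterization}. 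Order-preservation is then immediate, $p_\Bb$ being characterisable canonically as the weak$^\ast$-supremum of the positive unit ball of $\Bb$, and the forward map preserving order via $pp'=p$.

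The main obstacle is the reverse inclusion $p_\Bb\Aa p_\Bb\cap\Aa\subset\Bb$: since $p_\Bb$ is only a weak$^\ast$-limit of $w_\Bb^{1/n}$, the natural approximants $w_\Bb^{1/n}aw_\Bb^{1/n}\in\Bb$ need not converge to $a$ in norm. The cleanest resolution is to first establish the standard identity $\Bb^{\ast\ast}=p_\Bb\Aa^{\ast\ast}p_\Bb$. The inclusion $\Bb^{\ast\ast}\subset p_\Bb\Aa^{\ast\ast}p_\Bb$ follows from $p_\Bb$ being the weak$^\ast$-limit of an approximate unit of $\Bb$, hence the identity of the $W^\ast$-algebra $\Bb^{\ast\ast}$; the reverse uses that each $p_\Bb a p_\Bb$ with $a\in\Aa$ is a weak$^\ast$-limit of elements $w_\Bb^{1/n}aw_\Bb^{1/m}\in\Bb\cdot\Aa\cdot\Bb\subset\Bb$ (the last inclusion is the well-known property of hereditary subalgebras), extended to general $a\in\Aa^{\ast\ast}$ by weak$^\ast$-density. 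Since $\Bb$ is norm-closed in $\Aa$ hence weak$^\ast$-closed in $\Aa^{\ast\ast}$, Hahn-Banach gives $\Bb=\Aa\cap\Bb^{\ast\ast}=p_\Bb\Aa^{\ast\ast}p_\Bb\cap\Aa$, completing the argument.
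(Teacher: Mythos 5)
The paper offers no proof of Proposition~\ref{Prop:P1}: it is quoted from \cite{AkemannJFA1969,AkemannJFA1970}. Your argument is essentially the classical Akemann--Pedersen one (hereditary subalgebra $\leftrightarrow$ its weak$^\ast$ closure $p\Aa^{\ast\ast}p$ $\leftrightarrow$ open projection $p$), and it is correct in substance. The key steps all check out: $0\leq a\leq \|b\|p$ forces $a=pap$, so the forward map lands in $\mathfrak H(\Aa)$ and $p\Aa^{\ast\ast}p\cap\Aa=p\Aa p\cap\Aa$; the net $w_\Bb^{1/n}$ increases to an open projection $p_\Bb$ which is the unit of the weak$^\ast$ closure of $\Bb$, and your remark that $p_\Bb$ is canonically the supremum of the positive part of the unit ball of $\Bb$ both removes the apparent dependence on the choice of $w_\Bb$ in Remark~\ref{Re:W} and gives monotonicity of the reverse map; the identification of the weak$^\ast$ closure of $\Bb$ with $p_\Bb\Aa^{\ast\ast}p_\Bb$ (using $\Bb\Aa\Bb\subseteq\Bb$ and separate weak$^\ast$ continuity of multiplication) plus a separation argument yields the delicate inclusion $p_\Bb\Aa p_\Bb\cap\Aa\subseteq\Bb$, which you rightly single out as the crux.

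Two wordings need repair, though neither affects the architecture. First, ``since $\Bb$ is norm-closed in $\Aa$ hence weak$^\ast$-closed in $\Aa^{\ast\ast}$'' is false as stated: $\Bb$ is weak$^\ast$-dense in $\overline{\Bb}^{w^\ast}$, which is strictly larger in general (compare $\Aa$ itself, norm-closed yet weak$^\ast$-dense in $\Aa^{\ast\ast}$). What you actually need, and what Hahn--Banach does deliver, is $\Aa\cap\overline{\Bb}^{w^\ast}=\Bb$: every $\varphi\in\Aa^\ast$ vanishing on $\Bb$ is weak$^\ast$-continuous on $\Aa^{\ast\ast}$ and therefore vanishes on $\overline{\Bb}^{w^\ast}$, so no element of $\Aa\setminus\Bb$ can lie in the weak$^\ast$ closure. (Relatedly, when you write $\Bb^{\ast\ast}$ you should state that you identify it with $\overline{\Bb}^{w^\ast}\subseteq\Aa^{\ast\ast}$; your two inclusions are in fact proved for the weak$^\ast$ closure, so nothing else changes.) Second, the step ``$aw_\Bb^{1/n}\to a$ in norm \dots giving $a\leq p'$'' deserves its bridge: $aw_\Bb^{1/n}\to ap'$ strongly, hence $ap'=a=p'a$, hence $a=p'ap'\leq p'$ for a positive contraction $a$, and then $p\leq p'$ follows by passing to the limit along the defining net $0\leq a_\alpha\uparrow p$. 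With these phrasings fixed, your proof is a complete and self-contained substitute for the cited references, in the same spirit as Theorem~\ref{Th:HCharacterization} and the conventions of Section~\ref{Sec:FFOpenProj}.
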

	
	\subsection{A density theorem} We now specialize to $\Aa = \Ss^{\otimes G}$.
	
	\begin{proposition}\label{Prop;FtoP}
		There exists a $G$-equivariant morphism of semilattices
		\begin{equation}
			\mathfrak F(\Ss^{\otimes G}) \to \Pp_o((\Ss^{\otimes G})^{\ast \ast})     
		\end{equation}
	\end{proposition}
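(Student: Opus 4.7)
The plan is to send a frustration-free system $\{p_\Lambda\}$ to the supremum $p_\infty := \sup_{\Lambda \in {\rm K}(G)} p_\Lambda$ taken in the complete projection lattice of $(\Ss^{\otimes G})^{\ast\ast}$, and then verify this is a $G$-equivariant $\wedge$-morphism into $\Pp_o\bigl((\Ss^{\otimes G})^{\ast\ast}\bigr)$.

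First I would check well-definedness. By~\eqref{Eq:FF2}, a frustration-free system is precisely an increasing net of projections in $\Ss^{\otimes G}$ viewed inside $(\Ss^{\otimes G})^{\ast\ast}$, so $0 \leq p_\Lambda \uparrow p_\infty$, placing $p_\infty$ in $\Pp_o((\Ss^{\otimes G})^{\ast\ast})$ by definition. Equivalently, I would identify this assignment with the composition of $\mathfrak p^{-1}$ from Theorem~\ref{Prop:HvsFf} and the bijection of Proposition~\ref{Prop:P1}: the family $\{p_\Lambda\}$ is an approximate unit for $\Bb := \mathfrak p^{-1}(\{p_\Lambda\})$ (shown inside the proof of Theorem~\ref{Prop:HvsFf}), so its weak$^\ast$-limit coincides with $\lim_n w_\Bb^{1/n}$, the open projection associated to $\Bb$.

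Second, order-preservation is immediate from the pointwise definition of the partial order on $\mathfrak F(\Ss^{\otimes G})$. For $G$-equivariance I would use that $\alpha_g$ extends to a normal automorphism of the $W^\ast$-algebra $(\Ss^{\otimes G})^{\ast\ast}$, hence commutes with suprema of increasing nets; together with the reindexing $\Lambda' = g^{-1}\cdot \Lambda$ in the definition of the $G$-action on $\mathfrak F$, this yields $\mathfrak o(g\cdot \{p_\Lambda\}) = \alpha_g\bigl(\mathfrak o(\{p_\Lambda\})\bigr)$.

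Third and hardest, I would establish the $\wedge$-property $\mathfrak o\bigl(\{p_\Lambda^1\}\wedge\{p_\Lambda^2\}\bigr) = \mathfrak o(\{p_\Lambda^1\}) \wedge_o \mathfrak o(\{p_\Lambda^2\})$. Proposition~\ref{Prop:FLattice} identifies the left side with $\sup_\Lambda (p_\Lambda^1 \wedge p_\Lambda^2)$, which is open and bounded above by both $p_\infty^i$, hence bounded above by $p_\infty^1 \wedge_o p_\infty^2$ by the universal property of $\wedge_o$. For the reverse inequality I would pass to hereditary subalgebras: writing $\Bb^i := \mathfrak p^{-1}(\{p_\Lambda^i\})$ and exploiting $\Ss_\Lambda \simeq \BM(\Hh_\Lambda)$, where hereditary subalgebras correspond bijectively to projections, one gets $\Bb^1 \cap \Bb^2 \cap \Ss_\Lambda = (p_\Lambda^1\wedge p_\Lambda^2)\Ss_\Lambda(p_\Lambda^1 \wedge p_\Lambda^2)$, so that $\overline{\bigcup_\Lambda \Bb^1 \cap \Bb^2 \cap \Ss_\Lambda} = \mathfrak p^{-1}(\{p_\Lambda^1\wedge p_\Lambda^2\})$. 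A simultaneous approximate-unit argument, sandwiching any $b \in \Bb^1 \cap \Bb^2$ by the commuting pair $p_\Lambda(\Bb^1)p_\Lambda(\Bb^2) \cdots p_\Lambda(\Bb^2)p_\Lambda(\Bb^1)$ and invoking $\Bb^i \cdot \Ss^{\otimes G} \cdot \Bb^i \subseteq \Bb^i$ twice, shows this closure equals $\Bb^1 \cap \Bb^2$, whose open projection under Proposition~\ref{Prop:P1} is precisely $p_\infty^1 \wedge_o p_\infty^2$.

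The main obstacle is this last simultaneous approximation: for each $b \in \Bb^1 \cap \Bb^2$ one must exhibit a localized element lying in \emph{both} hereditary subalgebras and approximating $b$, which is delicate because the two localized approximate units $\{p_\Lambda(\Bb^1)\}$ and $\{p_\Lambda(\Bb^2)\}$ need not commute.
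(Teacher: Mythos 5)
Your construction of the map is the same as the paper's: send $\{p_\Lambda\}$ to the limit (supremum) of the increasing net in $(\Ss^{\otimes G})^{\ast\ast}$, observe that it is open by the very definition of open projections, and check order-preservation and $G$-equivariance using the normal extension of $\alpha_g$ to the double dual; this is precisely the paper's (very short) argument, and your identification of the image with $\lim_n w_\Bb^{1/n}$ via $\mathfrak p^{-1}$ and Proposition~\ref{Prop:P1} is a correct consistency check. The paper's proof claims nothing beyond compatibility with the partial orders and the $G$-actions, so your first two steps already reproduce it.

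Your third step, however, contains a genuine gap. First, a technical slip: the projection governing $\Bb^i\cap\Ss_\Lambda$ is the localized approximate unit $p_\Lambda(\Bb^i)$, which is the \emph{maximum} of the fibre $\mathfrak F_{\Bb^i}$ of $\mathfrak p^{-1}$ and may be strictly larger than $p^i_\Lambda$; so the identity $\Bb^1\cap\Bb^2\cap\Ss_\Lambda=(p^1_\Lambda\wedge p^2_\Lambda)\,\Ss_\Lambda\,(p^1_\Lambda\wedge p^2_\Lambda)$ is not correct as written and should involve $p_\Lambda(\Bb^1)\wedge p_\Lambda(\Bb^2)$. More seriously, the reverse inequality you need is equivalent to $\overline{\cup_\Lambda(\Bb^1\cap\Bb^2\cap\Ss_\Lambda)}=\Bb^1\cap\Bb^2$, i.e.\ to property (F) for the intersection $\Bb^1\cap\Bb^2$, whose open projection is $p^1_\infty\wedge_o p^2_\infty$ under the poset isomorphism of Proposition~\ref{Prop:P1}. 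The paper explicitly remarks that $\hat{\mathfrak H}^F(\Ss^{\otimes G})$ is \emph{not} closed under the infimum (intersection) taken in $\hat{\mathfrak H}(\Ss^{\otimes G})$ --- this is exactly why it introduces the modified meet $\wedge_F$ instead of using $\cap$ --- so the ``simultaneous approximate unit'' step you flag as the main obstacle is not merely delicate: according to the paper it fails in general, and with it the equality $\mathfrak o(\{p^1_\Lambda\}\wedge\{p^2_\Lambda\})=\mathfrak o(\{p^1_\Lambda\})\wedge_o\mathfrak o(\{p^2_\Lambda\})$; only the inequality $\leq$ survives. In short, the part of your argument that matches the paper is fine, but the attempted $\wedge_o$-preservation cannot be completed by this route; if one wants compatibility with meets, the natural target operation is the meet induced on the image (the $\wedge_F$ of the hereditary picture), not $\wedge_o$.
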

	
	\begin{proof}
		A frustration-free system of projections $\{p_\Lambda\} \in \mathfrak F(\Ss^{\otimes G})$ is an increasing net of projections; hence it has a unique strong limit in $\BM(\Hh_u)$, and hence a unique limit in $\Aa^{\ast \ast}$. Then the correspondence $\{p_\Lambda\} \mapsto \lim p_\Lambda$ supplies the morphism mentioned in the statement. This morphism respects the partial orders and the $G$-actions.
	\end{proof}
	
	If $\Pp_o^F((\Ss^{\otimes G})^{\ast \ast})$ denotes the image of this morphism, then \ref{Prop;FtoP} supplies the first morphism in the top part of diagram \eqref{Eq:MasterDiag}, while the second morphism in the same part of the diagram derives from \ref{Prop:P1}.
	
	\begin{remark}
		{\rm The subset of open projections induced by the frustration-free systems of projections also coincides with the image of $\mathfrak H^F(\Ss^{\otimes G})$ through the map \ref{Prop:P1}. As a consequence, the set of frustration-free open projections is fully and intrinsically determined by property (F).}  $\Diamond$
	\end{remark}
	
	\begin{proposition}
		$\Pp_o^F((\Ss^{\otimes G})^{\ast \ast})$ is dense in $\Pp_o((\Ss^{\otimes G})^{\ast \ast})$ in the norm topology of $\BM(\Hh_u)$.
	\end{proposition}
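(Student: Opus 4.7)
The plan is to transport a given open projection into $\Pp_o^F((\Ss^{\otimes G})^{\ast\ast})$ by conjugation with a small unitary, exploiting Proposition~\ref{Prop:BU}. The crucial inputs are: (i) the poset isomorphism $\Bb \leftrightarrow p$ of Proposition~\ref{Prop:P1}; (ii) the fact that this isomorphism is equivariant under inner automorphisms of $\Ss^{\otimes G}$; (iii) the near-identity conjugation trick from Proposition~\ref{Prop:BU}.

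First, given $p \in \Pp_o((\Ss^{\otimes G})^{\ast\ast})$ and $\epsilon > 0$, let $\Bb \in \mathfrak H(\Ss^{\otimes G})$ be the hereditary subalgebra associated to $p$ via Proposition~\ref{Prop:P1}. Apply Proposition~\ref{Prop:BU} to produce a unitary $u \in \Ss^{\otimes G}$ with $\|u-1\| < \epsilon/2$ such that $\Bb_u := u\Bb u^\ast$ satisfies property (F), i.e.\ $\Bb_u \in \mathfrak H^F(\Ss^{\otimes G})$.

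Second, I identify the open projection corresponding to $\Bb_u$. If $w_\Bb$ is a generator as in Theorem~\ref{Th:HCharacterization}, then $u w_\Bb u^\ast$ is a positive element of $\Bb_u$ and
\begin{equation}
\overline{(uw_\Bb u^\ast)\, \Ss^{\otimes G}\, (uw_\Bb u^\ast)} = u\,\overline{w_\Bb \Ss^{\otimes G} w_\Bb}\,u^\ast = \Bb_u,
\end{equation}
so $u w_\Bb u^\ast$ is an admissible generator. Since $u$ is unitary, functional calculus gives $(u w_\Bb u^\ast)^{1/n} = u w_\Bb^{1/n} u^\ast$, and passing to the weak$^\ast$-limit (using joint weak$^\ast$-continuity of multiplication on bounded sets) yields the open projection $u p u^\ast$ corresponding to $\Bb_u$. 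By construction $u p u^\ast \in \Pp_o^F((\Ss^{\otimes G})^{\ast\ast})$.

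Third, I bound the distance. Using $u u^\ast = 1$ and writing $p - u p u^\ast = (pu - up) u^\ast$,
\begin{equation}
\|p - u p u^\ast\| = \|pu - up\| = \|p(u-1) - (u-1)p\| \leq 2\|u-1\| < \epsilon,
\end{equation}
which proves density in the norm topology of $\BM(\Hh_u)$.

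There is no serious obstacle: the real content lies in Proposition~\ref{Prop:BU}, which was already established using Davidson's theorem. The only small subtlety is checking that conjugation by a unitary of $\Ss^{\otimes G}$ commutes with the passage from hereditary subalgebras to open projections; this is handled cleanly by functional calculus on the generator $w_\Bb$. As noted in Remark~\ref{Re:GInv}, no analogous density statement can be expected in the $G$-equivariant setting.
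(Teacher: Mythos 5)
Your argument is correct and is essentially the paper's own proof: conjugate the hereditary subalgebra by the near-identity unitary from Proposition~\ref{Prop:BU}, observe that the corresponding open projection is $u p u^\ast \in \Pp_o^F((\Ss^{\otimes G})^{\ast\ast})$, and bound $\|p - upu^\ast\| \leq 2\|u-1\|$. One small correction in your second step: multiplication is not jointly weak$^\ast$-continuous even on bounded sets, but you only need that $a \mapsto u a u^\ast$ (conjugation by a \emph{fixed} unitary of $\Ss^{\otimes G}$) is a normal, hence weak$^\ast$-continuous, automorphism of $(\Ss^{\otimes G})^{\ast\ast}$, which gives the identification $\lim_n u\, w_\Bb^{1/n} u^\ast = u p u^\ast$ just as well.
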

	
	\begin{proof}
		Let $p \in \Pp_o((\Ss^{\otimes G})^{\ast \ast})$ and $\Bb = p \, \Ss^{\otimes G} \, p \cap \Ss^{\otimes G}$ be the associated hereditary sub-$C^\ast$-algebra. Let $\epsilon_n \to 0$ be a sequence of positive numbers and, for each $\epsilon_n$, we pick a unitary element $u_n \in \Ss^{\otimes G}$ as in \eqref{Prop:BU}. Then $u_n \Bb u_n^\ast$ belong to $\mathfrak H^F(\Ss^{\otimes G})$, hence their corresponding open projections $p_n = u_n p u_n^\ast$ belong to $\Pp_o^F((\Ss^{\otimes G})^{\ast \ast})$. The statement now follows because $u_n \to 1$ in $\Aa$ and, since the latter is isometrically embedded in $\BM(\Hh_u)$, $p_n \to p$ in norm topology of $\BM(\Hh_u)$.
	\end{proof}

	\subsection{The optimal LTQO condition}\label{Sec:OptLTQO} We start with several statements valid for a general unital and separable $C^\ast$-algebra $\Aa$.
	
	\begin{proposition}[\cite{PedersonBook},~Prop.~3.13.6]
		Let $\Bb$ be a maximal hereditary $C^\ast$-subalgebra of $\Aa$, and $p\in \Aa^{\ast \ast}$ be the unique open projection such that $\Bb = p\Aa p \cap \Aa$. Then $p^\perp$ is a minimal closed projection.
	\end{proposition}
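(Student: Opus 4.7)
The plan is to derive the minimality of $p^\perp$ from the maximality of $\Bb$ by combining two basic facts: the order-isomorphism $\Bb \leftrightarrow p$ supplied by Proposition \ref{Prop:P1}, and the elementary order-reversing bijection $q \mapsto q^\perp := 1 - q$ between projections in $\Aa^{\ast\ast}$, which exchanges open and closed projections.

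First I would reinterpret maximality of $\Bb$ on the open-projection side. Because the map of Proposition \ref{Prop:P1} is an isomorphism of posets between $\mathfrak H(\Aa)$ (proper hereditary $C^\ast$-subalgebras) and the proper open projections of $\Aa^{\ast\ast}$, the maximality of $\Bb$ forces $p$ to be maximal among open projections distinct from $1$. In particular $p \neq 1$, so $p^\perp \neq 0$, and the minimality claim is non-vacuous.

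Next I would argue by contradiction. Suppose there were a closed projection $q$ with $0 < q \lneq p^\perp$. Then $q^\perp$ is an open projection satisfying $p \lneq q^\perp \lneq 1$, hence a proper open projection strictly above $p$. This contradicts the maximality of $p$ just obtained, so no such $q$ exists, and $p^\perp$ is minimal among nonzero closed projections---which is what is meant by a \emph{minimal closed projection} in this context.

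The main obstacle is conceptual rather than computational: one has to confirm that Proposition \ref{Prop:P1} really does restrict to a poset isomorphism on the proper subsets (so that $p \neq 1$ precisely because $\Bb$ is proper), and that ``minimal'' is intended in the nonzero sense. Once these two conventions are fixed, the rest is a one-line application of the order-reversal under $q \mapsto q^\perp$.
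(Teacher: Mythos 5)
Your proof is correct and is essentially the paper's own argument: both proceed by contradiction, sending a closed projection $q$ strictly below $p^\perp$ to the open projection $q^\perp$ strictly above $p$, and then using the hereditary-subalgebra/open-projection correspondence to produce a hereditary $C^\ast$-subalgebra strictly containing $\Bb$, contradicting maximality. Your explicit attention to properness of $p$ and the nonzero convention for ``minimal'' merely spells out what the paper leaves implicit.
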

	
	\begin{proof}
		Suppose the opposite, that there exists $q$ a closed projection strictly below $p^\perp$. Then $q^\perp$ is an open projection strictly above $p$ and, as such, $\Bb$ is strictly contained in $q^\perp \Aa q^\perp \cap \Aa$. This contradicts the assumption that $\Bb$ is maximal.
	\end{proof}
	
	\begin{proposition}[\cite{AkemannJFA1969}, Prop.~II.4] The set of minimal closed projections and the set of the minimal projections in $\Aa^{\ast \ast}$ coincide.   
	\end{proposition}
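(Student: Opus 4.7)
The plan is to show both sets coincide with $\{s(\omega):\omega\in{\rm PS}(\Aa)\}$, the set of support projections of pure states of $\Aa$. Recall that any state $\omega\in{\rm S}(\Aa)$ extends uniquely to a normal state on the $W^\ast$-algebra $\Aa^{\ast\ast}$, whose support $s(\omega)$ is the smallest projection in $\Aa^{\ast\ast}$ with $\omega(s(\omega))=1$. The bridge between the support-picture and the framework of \ref{Prop:P1} is the standard identity
\begin{equation}
(1-s(\omega))\,\Aa\,(1-s(\omega)) \cap \Aa = \Bb_\omega,
\end{equation}
so that $1-s(\omega)$ is precisely the open projection associated to $\Bb_\omega$; in particular $s(\omega)$ is closed.

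For the inclusion that every minimal closed projection is minimal, let $q$ be a minimal closed projection. If $p$ were an open projection with $q^\perp < p < 1$, then $p^\perp$ would be closed with $0 < p^\perp < q$, contradicting minimality of $q$; hence $q^\perp$ is a maximal proper open projection. By \ref{Prop:P1} the hereditary subalgebra $\Bb := q^\perp \Aa q^\perp \cap \Aa$ is maximal proper, and \ref{Prop:BOmega} supplies a unique pure state $\omega$ with $\Bb=\Bb_\omega$. The bridge above then forces $q=s(\omega)$. Since $\omega$ is pure, the GNS representation $\pi_\omega$ is irreducible, so $\pi_\omega(\Aa)''=\BM(\Hh_\omega)$ sits as a direct summand of $\pi_u(\Aa)''$ in which $s(\omega)$ is realized as a rank-one projection; hence $q=s(\omega)$ is minimal in $\Aa^{\ast\ast}$.

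For the reverse inclusion, let $e\in\Aa^{\ast\ast}$ be a minimal projection. Minimality forces $e\Aa^{\ast\ast}e=\CM e$, so the formula $e\,\pi_u(a)\,e=\omega_e(a)\,e$ well-defines a state $\omega_e$ on $\Aa$, whose purity follows from the one-dimensionality of the corner (any convex decomposition $\omega_e=t\omega_1+(1-t)\omega_2$ lifts to a corresponding decomposition of $e$ inside $e\Aa^{\ast\ast}e$, which is forced to be trivial). A direct check using $\omega_e(e)=1$ together with minimality of $e$ yields $e=s(\omega_e)$, and consequently $e$ is closed by the first-paragraph identity. Being minimal among all projections, $e$ is a fortiori minimal among closed ones.

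The main technical obstacle is the first-paragraph identification $1-s(\omega)\leftrightarrow\Bb_\omega$ under the correspondence of \ref{Prop:P1}: this is what ties Pedersen's support-projection formalism to the open/closed projection picture and is the common source of both directions. Once it is accepted, everything else reduces to purity and minimality bookkeeping.
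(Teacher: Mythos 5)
The paper offers no proof of this proposition at all: it is imported as a citation to Prop.~II.4 of \cite{AkemannJFA1969}, so there is no internal argument to compare yours against. Your blind proof is essentially correct and is, if anything, more informative than the citation, since it identifies both sets with the set of support projections of pure states using machinery the paper already has on the page: the complement of a minimal closed projection is a maximal proper open projection, hence by \ref{Prop:P1} and \ref{Prop:BOmega} corresponds to $\Bb_\omega$ for a unique pure state $\omega$; your ``bridge'' identity (that $1-s(\omega)$ is the open projection of $\Nn_\omega$, equivalently that supports of states are closed) then forces $q=s(\omega)$, and minimality of $s(\omega)$ for pure $\omega$ plus the corner argument for a minimal $e$ close both inclusions. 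The bridge is indeed a standard fact, logically prior to and independent of the statement being proved (it belongs to the circle of results around \cite[Sec.~3.13]{PedersonBook} and \cite{AkemannJFA1969}), so invoking it is legitimate and not circular.

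Two points deserve tightening, though neither is a genuine gap. First, ``$s(\omega)$ is realized as a rank-one projection'' is imprecise: as an operator on $\Hh_u$, $s(\omega)$ is the projection onto the closure of $\pi_u(\Ss^{\otimes G})'\xi_\omega$ (with $\xi_\omega$ the GNS vector), which is typically infinite-dimensional; the correct statement is that $s(\omega)$ lies under the central cover $z_\omega$, that $z_\omega\Aa^{\ast\ast}\simeq\BM(\Hh_\omega)$ because $\omega$ is pure, and that the image of $s(\omega)$ under this isomorphism is the rank-one projection onto $\CM\xi_\omega$, whence $s(\omega)\Aa^{\ast\ast}s(\omega)=\CM\, s(\omega)$. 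Second, the purity of $\omega_e$ is better argued directly than by ``lifting a decomposition of $e$'': if $\omega_e=t\omega_1+(1-t)\omega_2$ with $0<t<1$, pass to the normal extensions; since $\tilde\omega_e(e)=1$ and $\tilde\omega_i(e)\leq 1$, one gets $\tilde\omega_i(e)=1$, so $s(\tilde\omega_i)\leq e$ and hence $s(\tilde\omega_i)=e$ by minimality, and then $\omega_i(a)=\tilde\omega_i(eae)=\omega_e(a)$ for all $a$, so $\omega_1=\omega_2=\omega_e$. With these cosmetic repairs your argument is a complete substitute for the cited result.
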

	
	\begin{corollary}
		The set of maximal hereditary $C^\ast$-subalgebras of $\Aa$, the set of pure states ${\rm PS}(\Aa)$ and the set of minimal projections of $\Aa^{\ast \ast}$ are in bijective relations.
	\end{corollary}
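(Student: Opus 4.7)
The plan is to assemble the three preceding results into a triangle of bijections. The bijection between ${\rm PS}(\Aa)$ and the set of maximal proper hereditary $C^\ast$-subalgebras of $\Aa$ is already supplied verbatim by Proposition~\ref{Prop:BOmega} via $\omega \mapsto \Bb_\omega = \Nn_\omega^\ast \cap \Nn_\omega$. The work thus reduces to constructing a canonical bijection between the maximal proper hereditary $C^\ast$-subalgebras of $\Aa$ and the minimal projections of $\Aa^{\ast\ast}$.

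For this, I would use the poset isomorphism of Proposition~\ref{Prop:P1} to transport the question to open projections. Explicitly, $\Bb \mapsto p$ with $\Bb = p\Aa p \cap \Aa$ is a bijection $\mathfrak H(\Aa) \to \Pp_o(\Aa^{\ast\ast})$ that respects inclusions. Hence maximal proper hereditary $C^\ast$-subalgebras correspond precisely to those open projections $p \neq 1$ for which there is no open projection strictly between $p$ and $1$. By taking complements, these are the open projections $p$ whose closed complement $p^\perp \neq 0$ admits no closed projection strictly between $0$ and $p^\perp$, i.e.\ $p^\perp$ is a minimal closed projection. The first proposition of this subsection is exactly the forward half of this equivalence; the reverse is the dual observation that, if $p^\perp$ is a minimal closed projection, then dualizing the inclusion $0 < p^\perp$ produces no open projection between $p$ and $1$, so $\Bb = p\Aa p \cap \Aa$ is maximal. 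Finally, invoking the cited Proposition~II.4 of Akemann identifies minimal closed projections with minimal projections of $\Aa^{\ast\ast}$, completing the chain of bijections
\begin{equation}
{\rm PS}(\Aa) \ \longleftrightarrow\ \{\text{maximal proper hereditary } C^\ast\text{-subalgebras}\} \ \longleftrightarrow\ \{\text{minimal projections of }\Aa^{\ast\ast}\},
\end{equation}
where the second map is $\Bb \mapsto p^\perp$ with $p$ the open projection of Proposition~\ref{Prop:P1}.

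The only non-routine point is confirming the reverse implication that every minimal projection $q \in \Aa^{\ast\ast}$ arises in this way. Here one first notes that a minimal projection is automatically a minimal closed projection (Akemann's Prop.~II.4), so $q^\perp$ is open; then the absence of closed projections between $0$ and $q$ translates under complementation into the absence of open projections strictly between $q^\perp$ and $1$, which by Proposition~\ref{Prop:P1} forces $q^\perp \Aa q^\perp \cap \Aa$ to be a maximal proper hereditary $C^\ast$-subalgebra. This duality between strict containments on the open and closed sides is the only delicate step, but it is standard once one has Proposition~\ref{Prop:P1} in hand.
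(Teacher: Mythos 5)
Your argument is correct and follows essentially the route the paper intends: it chains Proposition~\ref{Prop:BOmega} (pure states $\leftrightarrow$ maximal proper hereditary subalgebras) with the poset isomorphism of Proposition~\ref{Prop:P1}, the complementation duality between open and closed projections, and Akemann's Prop.~II.4, which is exactly how the corollary is meant to follow from the two preceding propositions. The only difference is that you explicitly verify the converse direction (every minimal projection yields a maximal hereditary subalgebra), a detail the paper leaves implicit, and your verification is sound.
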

	
	We now specialize the discussion to the context of quantum spin systems.
	
	\begin{lemma}
		A frustration-free system $\{p_\Lambda\}$ of proper projections has a unique frustration-free ground state if and only if $(\lim p_\Lambda)^\perp$ is a minimal projection in $(\Ss^{\otimes G})^{\ast \ast}$.
	\end{lemma}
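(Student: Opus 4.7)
The plan is to translate the statement through the chain of bijections already at hand and then to handle mixed ground states with a short convexity argument. Set $\Bb := \mathfrak p^{-1}(\{p_\Lambda\}) \in \mathfrak H^F(\Ss^{\otimes G})$ and $p := \lim p_\Lambda \in (\Ss^{\otimes G})^{\ast\ast}$. By Proposition~\ref{Prop;FtoP} together with the remark following it, $p$ is precisely the open projection corresponding to $\Bb$ under the bijection of Proposition~\ref{Prop:P1}. Since complementation in $(\Ss^{\otimes G})^{\ast\ast}$ reverses the order between open and closed projections, $p^\perp$ is a minimal closed projection if and only if $p$ is a maximal proper open projection, which by Proposition~\ref{Prop:P1} happens if and only if $\Bb$ is a maximal proper hereditary $C^\ast$-subalgebra. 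It therefore suffices to show that $\{p_\Lambda\}$ admits a unique frustration-free ground state if and only if $\Bb$ is maximal.

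For the forward direction, suppose $\{p_\Lambda\}$ has a unique frustration-free ground state $\omega$. The set of frustration-free ground states is convex: if $\omega_0 = t\omega_1 + (1-t)\omega_2$ with $t \in (0,1)$ and $\omega_0(p_\Lambda) = 0$ for every $\Lambda$, then positivity of each $p_\Lambda$ forces $\omega_1(p_\Lambda) = \omega_2(p_\Lambda) = 0$. In particular, any nontrivial convex decomposition of $\omega$ would exhibit two distinct ground states, so $\omega$ must be pure. Theorem~\ref{Th:PLvsOmega} then yields $\Bb = \bigcap_{\omega'} \Bb_{\omega'}$ with $\omega'$ ranging over the pure frustration-free ground states, and this intersection now collapses to $\Bb = \Bb_\omega$. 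By Proposition~\ref{Prop:BOmega}, $\Bb_\omega$ is maximal.

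For the reverse direction, suppose $\Bb$ is maximal. By Proposition~\ref{Prop:BOmega} there is a unique pure state $\omega$ with $\Bb = \Bb_\omega$, and Lemma~\ref{Lemm:PLvsOmega} identifies this $\omega$ as a frustration-free ground state. Any other pure ground state $\omega''$ would satisfy $\Bb \subseteq \Bb_{\omega''}$, and maximality of $\Bb$ together with properness of $\Bb_{\omega''}$ forces $\Bb_{\omega''} = \Bb_\omega$, hence $\omega'' = \omega$. Applying the Krein--Milman theorem to the weak$^\ast$-compact convex set of frustration-free ground states, whose extreme points are pure by the same positivity argument as above, this set equals the closed convex hull of $\{\omega\}$, namely $\{\omega\}$; so $\omega$ is the unique ground state.

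The main obstacle is that two distinct mixed states can share the same hereditary $C^\ast$-subalgebra, so the pure-state bijection alone does not rule out extra mixed ground states; the convexity/positivity argument combined with Krein--Milman is what resolves this subtlety in both directions.
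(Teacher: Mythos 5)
Your argument is correct, and its skeleton—reduce the statement to ``$\mathfrak p^{-1}(\{p_\Lambda\})$ is a maximal proper hereditary $C^\ast$-subalgebra'' via the order isomorphism of \ref{Prop:P1} and the bijection of \ref{Prop:BOmega}—is the same as the paper's. The differences are in the details. The paper's written proof only spells out the forward implication (uniqueness forces the support set $(\mathfrak j\circ\mathfrak p^{-1})(\{p_\Lambda\})$ to be a single point, hence maximality, hence minimality of $p^\perp$), leaving the converse to the surrounding machinery; there the exclusion of extra \emph{mixed} ground states is handled, in effect, only in the subsequent Optimal LTQO theorem via the double-dual identity $p^\perp a p^\perp=\omega(a)p^\perp$ applied to the normal extension of a candidate ground state. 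You instead observe that the set of frustration-free ground states is a weak$^\ast$-compact face of the state space and invoke Krein--Milman, which is a legitimate and arguably more elementary alternative that never leaves the $C^\ast$-level; the same face argument also cleanly yields purity of the unique ground state in the forward direction, where the paper argues through \ref{Th:FFGS}/\ref{Cor:PMinus}. One small point to tighten: your order-reversal step only identifies ``$p^\perp$ minimal \emph{closed} projection'' with maximality of $p$, whereas the lemma asserts minimality of $p^\perp$ among \emph{all} projections of $(\Ss^{\otimes G})^{\ast\ast}$; to close that gap you should cite the fact, recorded in the paper immediately before the lemma (Akemann's Prop.~II.4), that minimal closed projections coincide with minimal projections of the double dual.
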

	
	\begin{proof}
		We show first that, in the assumed setting, $(\mathfrak j \circ \mathfrak p^{-1})(\{p_\Lambda\}) \subset {\rm PS}(\Ss^{\otimes G})$ must consist of a single point. Indeed, suppose the contrary. Then, according to \ref{Th:FFGS}, every point of $(\mathfrak j \circ \mathfrak p^{-1})(\{p_\Lambda\})$ supplies a distinct pure state which is a frustration-free ground state for $\{p_\Lambda\}$. This contradicts uniqueness. Therefore, $(\mathfrak j \circ \mathfrak p^{-1})(\{p_\Lambda\})$ consists of one point, which automatically implies that the hereditary $C^\ast$-subalgebra corresponding to $\{p_\Lambda\}$ is maximal, hence the corresponding open projection $p=\lim p_\Lambda$ has the stated property.
	\end{proof}
	
	\begin{theorem}[Optimal LTQO]
		A frustration-free system $\{p_\Lambda\}$ of proper projections has a unique ground state if and only if there exists a state $\omega$ on $\Ss^{\otimes G}$ such that
		\begin{equation}\label{Eq:WeakLTQO}
			{\rm SOT}_{\BM(\Hh_u)}- \lim \big (p_\Lambda^\perp a p_\Lambda^\perp - \omega(a) p_\Lambda^\perp \big )=0.
		\end{equation}
		Furthermore, if \eqref{Eq:WeakLTQO} holds, then $\omega$ is the unique frustration-free ground state, which is necessarily pure.
	\end{theorem}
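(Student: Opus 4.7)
The plan is to combine the previous lemma, which identifies uniqueness of the frustration-free ground state with minimality of $p^\perp$ (where $p := \lim p_\Lambda$ and $p_\Lambda^\perp \downarrow p^\perp$ in SOT on $\Hh_u$), with a standard fact about $W^\ast$-algebras: a nonzero projection $e$ in a $W^\ast$-algebra $M$ is minimal if and only if $eMe = \CM e$, in which case the scalar $\phi(a)$ defined by $eae = \phi(a) e$ extends to a pure normal state on $M$. Linearity and $\phi(1) = 1$ are immediate; positivity follows from $e a^\ast a e = (ae)^\ast (ae) \geq 0$; purity is the one-line argument that any convex decomposition $\phi = t\phi_1 + (1-t)\phi_2$ forces $\phi_i(1-e) = 0$, and then Cauchy--Schwarz yields $\phi_i(a) = \phi_i(eae) = \phi(a)\phi_i(e) = \phi(a)$.

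Both directions of the equivalence then come from SOT-continuity of multiplication on norm-bounded nets: with $a \in \Ss^{\otimes G}$ fixed and $\|p_\Lambda^\perp\| \leq 1$, one has $p_\Lambda^\perp a p_\Lambda^\perp \to p^\perp a p^\perp$ and $\omega(a) p_\Lambda^\perp \to \omega(a) p^\perp$ in SOT. For $(\Leftarrow)$, \eqref{Eq:WeakLTQO} forces $p^\perp a p^\perp = \omega(a) p^\perp$ for all $a \in \Ss^{\otimes G}$; Kaplansky density together with separate $\sigma$-weak continuity of the product extend this scalar identity to all $a$ in the double dual, so $p^\perp$ is minimal and the previous lemma supplies uniqueness. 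For $(\Rightarrow)$, assume $p^\perp$ is minimal, let $\phi$ be the associated pure normal state and set $\omega := \phi|_{\Ss^{\otimes G}}$; the same continuity observation delivers \eqref{Eq:WeakLTQO} from the identity $p^\perp a p^\perp = \omega(a) p^\perp$.

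To identify $\omega$ as the unique frustration-free ground state, let $\eta$ be any frustration-free ground state for $\{p_\Lambda\}$. From $\eta(p_\Lambda) = 0$ one gets $\pi_\eta(p_\Lambda^\perp)\xi_\eta = \xi_\eta$ in the GNS triple; viewing $\pi_\eta$ as a subrepresentation of $\pi_u$ and pairing \eqref{Eq:WeakLTQO} with $\xi_\eta \in \Hh_\eta \subseteq \Hh_u$ yields
\[
\eta(a) - \omega(a) \;=\; \bigl\langle \xi_\eta,\; \bigl(p_\Lambda^\perp a p_\Lambda^\perp - \omega(a) p_\Lambda^\perp\bigr) \xi_\eta \bigr\rangle \;\to\; 0,
\]
so $\eta = \omega$. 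That $\omega$ itself is frustration-free is seen by testing \eqref{Eq:WeakLTQO} at $a = p_\Delta$: for $\Lambda \supseteq \Delta$, frustration-freeness makes $p_\Lambda^\perp p_\Delta p_\Lambda^\perp = 0$, so $\omega(p_\Delta) p_\Lambda^\perp \to 0$ in SOT; since properness of $\{p_\Lambda\}$ entails $p^\perp \neq 0$, this forces $\omega(p_\Delta) = 0$. Purity of $\omega$ is inherited from that of $\phi$ through the canonical bijection between states on $\Ss^{\otimes G}$ and normal states on its double dual. The main subtlety I anticipate is the $(\Leftarrow)$-direction extension of the scalar identity from $\Ss^{\otimes G}$ to $(\Ss^{\otimes G})^{\ast\ast}$; the rest is routine bookkeeping on top of the previous lemma and standard SOT/$\sigma$-weak manipulations.
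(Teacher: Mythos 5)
Your proposal is correct and follows essentially the same route as the paper: it reduces the statement to the preceding lemma identifying uniqueness with minimality of $p^\perp=\lim p_\Lambda^\perp$, translates minimality into the scalar identity $p^\perp a p^\perp=\omega(a)p^\perp$, and passes between this identity and \eqref{Eq:WeakLTQO} by SOT-continuity of multiplication on bounded nets, extending to the double dual via the canonical normal extension of $\omega$. The extra details you supply (purity via the minimal-projection argument, identification of $\omega$ as the unique frustration-free ground state by pairing with GNS vectors, and $\omega(p_\Delta)=0$ from frustration-freeness) are correct elaborations of steps the paper leaves implicit.
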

	
	\begin{proof}
		Suppose $\{p_\Lambda\}$ has a unique frustration-free ground state and let $p=\lim p_\Lambda$. Then $p^\perp=\lim p_\Lambda^\perp$ and $p^\perp$ is a minimal projection in $(\Ss^{\otimes G})^{\ast \ast}$. The latter is equivalent to the fact that $p^\perp (\Ss^{\otimes G})^{\ast \ast} p^\perp = \CM \cdot p^\perp$ and, as such, for any $a \in \Ss^{\otimes G}$ we must have $p^\perp a p^\perp= \omega(a) p^\perp$. We can check that the association $\Ss^{\otimes G} \ni a \mapsto \omega(a) \in \CM$ is linear, positive and unital, hence a state over $\Ss^{\otimes G}$. Then the direct implication follows, because $p^\perp a p^\perp= \omega(a) p^\perp$ implies \eqref{Eq:WeakLTQO}.
		
		Assume now that \eqref{Eq:WeakLTQO} holds for all $a\in \Ss^{\otimes G}$. Then $p^\perp a p^\perp = \omega(a) p^\perp$ and this relation holds over $(\Ss^{\otimes G})^{\ast \ast}$ if $\omega$ is canonically extended to a normal state. As such, $p^\perp$ is minimal, which automatically implies that $\{p_\Lambda\}$ has a unique frustration-free ground state. The latter must coincide with $\omega$.
	\end{proof}
	
	\begin{remark}
		{\rm We emphasized the strong limit in \eqref{Eq:WeakLTQO} to contrast with the operator norm limit in \eqref{Eq:LTQO}. We recall, however, that this SOT-limit is the same as the limit in the weak$^\ast$-topology of $(\Ss^{\otimes G})^{\ast \ast}$, because we are dealing with bounded sequences.} \ $\Diamond$
	\end{remark}
	
	\subsection{Boundary algebras} Under the assumption of a set of four conditions, \cite{JonesArxiv2023} introduced a boundary unital AF-algebra in a pure $C^\ast$-algebraic fashion. Boundary algebras have been also formalized in \cite{BeigiCMP2011,KitaevCMP2012,KongICMP2013}, and many other followup works, under assumed quantum symmetries. We define here an ``enveloping" boundary algebra under the sole assumption of frustration-freeness, where enveloping is in the sense that all the other specialized boundary algebras are expected to embed there.

	Henceforth, let us specialize the discussion to the case $G = \ZM^d$ and, furthermore, to that of the half lattice $\ZM^d_+:= \NM \times \ZM^{d-1}$. Correspondingly, we restrict the indexing set of $\Lambda$'s to ${\rm K}(\ZM^d_+)$ and we denote by $\partial \Lambda$ the physical boundary of $\Lambda$, that is, the intersection of $\Lambda$ with the boundary of $\ZM_+^d$.
	
	\begin{remark}
		{\rm Restriction of $\Lambda$'s to the finite sets of $\ZM^d_+$ less a strip around the boundary gives a more general setting, which covers in particular the smooth and rough boundaries (see \cite{KitaevCMP2012} for definitions), but this will not change our main conclusions.} $\Diamond$
	\end{remark}
	
	\begin{definition}[Our proposal]\label{Def:BProp}
		The boundary algebra of a frustration-free system $\{p_\Lambda\}_{\Lambda \in {\rm K}(\ZM^d_+)}$ of proper projections in $\Ss^{\otimes \ZM^d_+}$ is defined as the relative commutant $\Bb^c(\Ss^{\otimes \ZM^d_+})$ of the corresponding hereditary $C^\ast$-subalgebra $\Bb$ inside $\Ss^{\otimes \ZM^d_+}$.
	\end{definition}
	
	The heuristics behind this definition are as follows. For simplicity, let us assume that $\{p_\Lambda\}$ over the entire $\ZM^d$ displays LTQO \eqref{Eq:WeakLTQO}, such that $\Bb$ is maximal. We recall that any $b = b^\ast \in \Bb$ supplies a frustration-free model $\sum \alpha_g(b)$ for which the unique pure state associated with $\{p_\Lambda\}$ is a frustration-free ground state. The only local observable left invariant by the dynamics generated by all such frustration-free models is the unit of $\Ss^{\otimes \ZM^d}$ (see \ref{Re:Bulk}). The boundary algebra defined above identifies all the elements of the quasi-local algebra $\Ss^{\otimes \ZM^d_+}$ left invariant by the time-evolutions generated by the frustration-free models associated to the restriction of $\{p_\Lambda\}$ to half-lattice.\footnote{Hence, all elements stabilized by $\{p_\Lambda\}$} The existence of such elements, besides the unit, is entirely due to the boundary, hence the name. Furthermore, our proposed algebra accepts a $\ZM^{d-1}$-action and,  out of any of its elements, we can generate a boundary potential by translations parallel to the boundary.
	
	\begin{proposition}
		Let $\omega \in {\rm S}(\Ss^{\otimes \ZM^d_+})$ be a frustration-free ground state for $\{p_\Lambda\}_{\Lambda \in {\rm K}(\ZM^d_+)}$. If $\bm v$ is a boundary potential \footnote{{\it i.e.,} $\bm v = \sum_{g \in \ZM^{d-1}} \alpha_g(q)$ with $q$ a selfadjoint element from the boundary algebra \ref{Def:BProp}.} generating a time evolution $\alpha_t$, then $\omega \circ \alpha_t$ remains a frustration-free ground state for $\{p_\Lambda\}_{\Lambda \in {\rm K}(\ZM^d_+)}$.
	\end{proposition}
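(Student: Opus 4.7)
The plan is to show that each $p_\Lambda$ is a fixed point of the one-parameter group $\alpha_t$, after which the claim is immediate. The first step is to record that every $p_\Lambda$ lies in the hereditary $C^\ast$-subalgebra $\Bb=\mathfrak p^{-1}(\{p_\Lambda\})$ that appears in Definition~\ref{Def:BProp}; this is precisely what was used in the proposition following Theorem~\ref{Prop:HvsFf} to identify $\{p_\Lambda(\Bb)\}$ as the maximum of $\mathfrak F_\Bb$. Since by hypothesis $q$ is a self-adjoint element of $\Bb^c(\Ss^{\otimes \ZM^d_+})$, we conclude at once that $[q,p_\Lambda]=0$ for every $\Lambda\in {\rm K}(\ZM^d_+)$.

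The next step is to propagate this commutation relation to every translate $\alpha_g(q)$, $g\in\ZM^{d-1}$. The $\ZM^{d-1}$-action on $\Bb^c$ invoked immediately before Definition~\ref{Def:BProp} exists precisely because $\Bb$ is $\ZM^{d-1}$-invariant, so $\alpha_g(q)\in \Bb^c$ and therefore $[\alpha_g(q),p_\Lambda]=0$ term by term. Writing $\Delta={\rm supp}(q)$ and introducing the local truncations $v_{\Lambda'}=\sum_{g\cdot \Delta\subseteq \Lambda'}\alpha_g(q)$ in the spirit of \eqref{Eq:HLambda}, term-wise commutativity gives $[v_{\Lambda'},p_\Lambda]=0$ for all $\Lambda,\Lambda'$, so the inner-limit derivation $\delta_{\bm v}$ generating $\alpha_t$ satisfies $\delta_{\bm v}(p_\Lambda)=0$ for every $\Lambda$.

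Because $p_\Lambda$ lies in the kernel of the closed generator of the strongly continuous one-parameter group $\alpha_t$, it is a fixed point: $\alpha_t(p_\Lambda)=p_\Lambda$ for every $t\in\RM$. Consequently
\begin{equation}
(\omega\circ\alpha_t)(p_\Lambda)=\omega(\alpha_t(p_\Lambda))=\omega(p_\Lambda)=0
\end{equation}
for all $\Lambda\in {\rm K}(\ZM^d_+)$ and $t\in\RM$, which is exactly the condition that $\omega\circ\alpha_t$ be a frustration-free ground state for $\{p_\Lambda\}$.

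The step I expect to be most delicate is the second one, namely lifting $[q,p_\Lambda]=0$ to a uniform commutation relation with all translates $\alpha_g(q)$. This rests on the $\ZM^{d-1}$-invariance of $\Bb$ that is already implicit in the very construction of a boundary potential; once that ingredient is in hand, the remainder of the argument is a routine application of the fact that a self-adjoint element of the relative commutant of a $C^\ast$-subalgebra generates a one-parameter group of automorphisms that fixes that subalgebra pointwise.
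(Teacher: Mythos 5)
The paper offers no proof of this proposition (it explicitly declares the statement obvious), and your argument is precisely the intended one: $p_\Lambda\in\Bb=\mathfrak p^{-1}(\{p_\Lambda\})$ and $q\in\Bb^c(\Ss^{\otimes \ZM^d_+})$ give $[\alpha_g(q),p_\Lambda]=0$ for all $g\in\ZM^{d-1}$ (using the boundary-parallel invariance of $\Bb$ that the paper also assumes implicitly when it speaks of the $\ZM^{d-1}$-action on the boundary algebra), so the generator annihilates each $p_\Lambda$, whence $\alpha_t(p_\Lambda)=p_\Lambda$ and $(\omega\circ\alpha_t)(p_\Lambda)=\omega(p_\Lambda)=0$. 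The only cosmetic caveat is that elements of the boundary algebra need not be local observables, so ${\rm supp}(q)$ and the finite-range truncations $v_{\Lambda'}$ may not literally exist; but since the hypothesis already grants that $\bm v$ generates a time evolution, termwise commutation of the $\alpha_g(q)$ with $p_\Lambda$ applied to whatever truncations define that dynamics yields $\delta_{\bm v}(p_\Lambda)=0$ all the same.
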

	
	While the statement is obvious, it still brings another point of view, that the boundary algebra identifies those boundary potentials that preserve the manifold of ground states.  
	
	We are preparing now to establish the connection between \ref{Def:BProp} and the boundary algebra introduced in \cite{JonesArxiv2023}. We start with a calculation:
	
	\begin{proposition}\label{Prop:Comm}
		Let $\{p_\Lambda\}$ be a frustration-free system of proper projections as in \ref{Def:BProp}. If $p = \lim p_\Lambda$, then
		\begin{equation}
			\Bb^c(\Ss^{\otimes \ZM^d_+}) = \CM \cdot 1 \oplus 
			p^\bot \, \Ss^{\otimes \ZM^d_+}\, p^\bot \cap \Ss^{\otimes \ZM^d_+}.
		\end{equation}
	\end{proposition}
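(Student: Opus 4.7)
The plan is to verify each inclusion separately. For the easy direction $\CM\cdot 1 \oplus (p^\bot \Ss^{\otimes \ZM^d_+} p^\bot \cap \Ss^{\otimes \ZM^d_+}) \subseteq \Bb^c(\Ss^{\otimes \ZM^d_+})$, I would observe that every $b\in\Bb$ satisfies $pbp=b$ (since $\Bb = p\Ss^{\otimes \ZM^d_+}p\cap\Ss^{\otimes \ZM^d_+}$), so any $c\in p^\bot \Ss^{\otimes \ZM^d_+} p^\bot$ annihilates $\Bb$ on both sides via $cb = c(pbp) = 0 = (pbp)c = bc$, and hence commutes with $\Bb$. The sum is direct because $p\neq 0$: an equality $\lambda\cdot 1 = p^\bot c p^\bot$ forces $\lambda p = p(\lambda\cdot 1)p = pp^\bot c p^\bot p = 0$, hence $\lambda = 0$.

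For the reverse inclusion, I take $a\in\Bb^c(\Ss^{\otimes \ZM^d_+})$. Since each $p_\Lambda\in\Bb$, $a$ commutes with every $p_\Lambda$ and hence with $p = \lim p_\Lambda$ in $(\Ss^{\otimes \ZM^d_+})^{\ast\ast}$, yielding the decomposition $a = pap + p^\bot a p^\bot$. The statement will follow once I establish $pap = \lambda p$ for some $\lambda\in\CM$, because then $a-\lambda\cdot 1 = p^\bot a p^\bot - \lambda p^\bot = p^\bot(a-\lambda\cdot 1)p^\bot$ sits in $\Ss^{\otimes \ZM^d_+}$ and has the required form $p^\bot(\,\cdot\,)p^\bot$.

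To prove $pap = \lambda p$, I would exploit the finite tensor factorization $\Ss^{\otimes \ZM^d_+} = \Ss_\Lambda\otimes \Ss_{\Lambda^c}$ available for each $\Lambda\in {\rm K}(\ZM^d_+)$. Because $p_\Lambda\in\Bb$ and $\Bb$ is hereditary, the full matrix subalgebra $p_\Lambda \Ss_\Lambda p_\Lambda$ sits inside $\Bb$, and $a$ commutes with it. A direct commutant computation inside $\Ss_\Lambda\otimes \Ss_{\Lambda^c}$ identifies $(p_\Lambda \Ss_\Lambda p_\Lambda)' \cap \Ss^{\otimes \ZM^d_+} = (\CM p_\Lambda + p_\Lambda^\bot \Ss_\Lambda p_\Lambda^\bot)\otimes \Ss_{\Lambda^c}$, and compressing $a$ by $p_\Lambda$ kills the $p_\Lambda^\bot \Ss_\Lambda p_\Lambda^\bot$ summand, producing $p_\Lambda a p_\Lambda = p_\Lambda\otimes b_\Lambda$ for a uniquely determined $b_\Lambda\in\Ss_{\Lambda^c}$. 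For $\Lambda\subseteq\Lambda'$, frustration-freeness $p_\Lambda p_{\Lambda'} p_\Lambda = p_\Lambda$ together with compression of $p_{\Lambda'} a p_{\Lambda'} = p_{\Lambda'}\otimes b_{\Lambda'}$ by $p_\Lambda$ inside $\Ss_{\Lambda'}\otimes \Ss_{\Lambda'^c}$ identifies $b_\Lambda = b_{\Lambda'}$ via the canonical embedding $\Ss_{\Lambda'^c}\subseteq \Ss_{\Lambda^c}$. Hence $b_\Lambda$ lies in $\bigcap_{\Lambda'\supseteq\Lambda}\Ss_{\Lambda'^c}$, which equals $Z(\Ss^{\otimes \ZM^d_+}) = \CM\cdot 1$ by simplicity of the UHF algebra. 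Thus $b_\Lambda = \lambda\cdot 1$ is $\Lambda$-independent, $p_\Lambda a p_\Lambda = \lambda p_\Lambda$, and the weak$^\ast$-limit delivers $pap = \lambda p$.

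The main obstacle is the bookkeeping in the consistency step: the tensor factor $\Ss_{\Lambda^c}$ holding $b_\Lambda$ changes as $\Lambda$ grows, so one must carefully track the canonical embedding $\Ss_{\Lambda'^c}\hookrightarrow\Ss_{\Lambda^c}$ and how $p_{\Lambda'}$ decomposes inside $\Ss_\Lambda\otimes \Ss_{\Lambda'\setminus\Lambda}$ when compressing, in order for the identification $b_\Lambda = b_{\Lambda'}$ in a common algebra to be justified.
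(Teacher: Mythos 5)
Your proof is correct, but it follows a genuinely different route from the paper's. The paper's proof is essentially a citation of the Elliott--Kucerovsky relative double commutant theorem: for a \emph{full} hereditary $C^\ast$-subalgebra of a unital $C^\ast$-algebra, the relative commutant is the direct sum of the center and the annihilator; simplicity of $\Ss^{\otimes \ZM^d_+}$ gives fullness and $Z=\CM\cdot 1$, and the annihilator is identified as $\bar p^{\bot}\,\Ss^{\otimes \ZM^d_+}\,\bar p^{\bot}\cap \Ss^{\otimes \ZM^d_+}$ (with $\bar p$ the closure of $p$), which is then shown to coincide with $p^{\bot}\,\Ss^{\otimes \ZM^d_+}\,p^{\bot}\cap \Ss^{\otimes \ZM^d_+}$. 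You instead give a self-contained computation that never leaves the quasi-local setting: since $p_\Lambda\in\Bb$ and $\Bb$ is hereditary, $p_\Lambda \Ss_\Lambda p_\Lambda\subset\Bb$, so any $a\in\Bb^c$ commutes with $p$, and its corner $p_\Lambda a p_\Lambda$ lies in the relative commutant of the full matrix algebra $p_\Lambda\Ss_\Lambda p_\Lambda$ inside $p_\Lambda\Ss_\Lambda p_\Lambda\otimes\Ss_{\Lambda^c}$, forcing $p_\Lambda a p_\Lambda=p_\Lambda\otimes b_\Lambda$; frustration-freeness ($p_\Lambda p_{\Lambda'}=p_\Lambda$, and $p_\Lambda$ commutes with $\Ss_{\Lambda'^c}$) makes the $b_\Lambda$ consistent, triviality of the center of the UHF algebra forces $b_\Lambda=\lambda\cdot 1$, and the SOT limit yields $pap=\lambda p$, whence $a-\lambda\cdot 1\in p^\bot\Ss^{\otimes \ZM^d_+}p^\bot\cap\Ss^{\otimes \ZM^d_+}$; the reverse inclusion via $b=pbp$ is exactly right. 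The paper's route is shorter given the cited machinery and applies verbatim to any full hereditary subalgebra of a simple unital $C^\ast$-algebra, with no use of property (F) or the filtration, and it records the role of the closed projection $\bar p$; your route is elementary, in effect reproving the needed special case of Elliott--Kucerovsky by exploiting the tensor filtration and the fact that $\{p_\Lambda\}$ is a localized approximate unit of local projections, and it lands directly on the $p^\bot(\cdot)p^\bot$ form without passing through $\bar p$. Only cosmetic caveats: the uniqueness/consistency step should be run over the cofinal family of $\Lambda$ with $p_\Lambda\neq 0$; the scalar conclusion uses triviality of the center (a consequence of simplicity) rather than simplicity per se; and, as in the paper, the direct-sum statement implicitly assumes $\Bb\neq\{0\}$, i.e.\ $p\neq 0$.
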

	
	\begin{proof}
		According to \cite{ElliottCRMRASC2007}, the relative commutant of a full\footnote{A $C^\ast$-subalgebra is full if it is not contained in any proper two-sided closed ideal.} hereditary $C^\ast$-subalgebra is the direct sum of the center of the host algebra and the annihilator of the hereditary $C^\ast$-subalgebra. Since $\Ss^{\otimes \ZM^d_+}$ is simple, all its hereditary $C^\ast$-subalgebras are full and its center consists of $\CM \cdot 1$. Now, the annihilator mentioned in \cite{ElliottCRMRASC2007} is explicitly stated in the subsequent work \cite[Th.~1]{KucerovskyJFA2005}, and the annihilator of $\Bb$ inside $\Ss^{\otimes \ZM^d_+}$ is defined as follows. First, there are a left annihilator 
		\begin{equation}
			\Bb^\perp_L (\Ss^{\otimes \ZM^d_+}) := \{a \in \Ss^{\otimes \ZM^d_+}\, | \,  a\Bb =\{0\}\}
		\end{equation}
		and a right annihilator
		\begin{equation}
			\Bb^\perp_R(\Ss^{\otimes \ZM^d_+}) := \{a \in \Ss^{\otimes \ZM^d_+} \, | \, \Bb a =\{0\}\}.
		\end{equation}
		Then the annihilator of $\Bb$ is 
		\begin{equation}
			\Bb^{\perp} (\Ss^{\otimes \ZM^d_+})=\Bb^\perp_L (\Ss^{\otimes \ZM^d_+}) \cap \Bb^\perp_R (\Ss^{\otimes \ZM^d_+}).
		\end{equation}
		This is always a hereditary $C^\ast$-subalgebra of $\Ss^{\otimes \ZM^d_+}$, because $\Bb^\perp_L (\Ss^{\otimes \ZM^d_+})$ is a closed left ideal and $\Bb^\perp_R (\Ss^{\otimes \ZM^d_+})=\Bb^\perp_L (\Ss^{\otimes \ZM^d_+})^\ast$. As such, it must be of the form $q \Ss^{\otimes \ZM^d_+} q \cap \Ss^{\otimes \ZM^d_+}$, with $q$ the largest open projection such that $q p=0$. The latter is equivalent to $(1-q)p=p$ or $p \leq q^\bot$, hence $q^\bot$, which is a closed projection, must be the closure $\bar p$ of $p$ (see \cite[3.1]{OrtegaJFA2011} for definition). In other words, $q$ is precisely $\bar p^\bot$. On the other hand, we have $p^\perp \geq \bar p^\bot$, hence 
		\begin{equation}
			p^\bot \Ss^{\otimes \ZM^d_+} p^\bot \cap \Ss^{\otimes \ZM^d_+} \supseteq \bar p^\bot \Ss^{\otimes \ZM^d_+} \bar p^\bot  \cap \Ss^{\otimes \ZM^d_+}  = \Bb^{\perp} (\Ss^{\otimes \ZM^d_+}),
		\end{equation}
		and, since the first algebra is contained in the annihilator of $\Bb$, it must coincide with the latter. The statement then follows.
	\end{proof}
	
	\begin{remark}\label{Re:Bulk}
		{\rm If $\Bb$ is a maximal hereditary $C^\ast$-subalgebra of a unital, simple and separable $C^\ast$-algebra $\Aa$, and $p \notin \Aa$ is its associated open projection, then $p^\perp$ is minimal and, as such, $p^\perp \Aa p^\perp =\CM \cdot p^\perp$. Hence, $p^\perp \Aa p^\perp$ has zero intersection with $\Aa$ and we can conclude that the commutant of $\Bb$ is the center of $\Aa$.} \ $\Diamond$
	\end{remark}
	
	Now, using \cite[Def.~2.9]{JonesArxiv2023} as a model, we introduce the following family of finite dimensional $C^\ast$-algebras:
	
	\begin{definition}\label{Def:TL} For $\Lambda \in {\rm K}(\ZM^d_+)$ with $\partial \Lambda \neq \emptyset$, let
		\begin{equation}
			\Tt(\Lambda): = \{x p^\bot \in (\Ss^{\otimes \ZM^d_+})^{\ast \ast} \, | \, x \in p_\Lambda^\bot S_\Lambda p_\Lambda^\bot, \ x p^\bot = p^\bot x\} \subset (\Ss^{\otimes \ZM^d_+})^{\ast \ast}.
		\end{equation}
	\end{definition}
	
	It is easy to check that $\Tt(\Lambda)$ is closed under addition, multiplication and $\ast$-operation. 
	
	\begin{conjecture}\label{Conj:BAlg}
		Under the  LTO1-LTO4 assumptions from \cite[Def.~2.10]{JonesArxiv2023}, the boundary algebra introduced in \cite[Sec.~2.2]{JonesArxiv2023} coincides with 
		\begin{equation}
			\varinjlim \Tt(\Lambda)  = \overline{p^\perp \, \Ss^{\otimes \ZM^d_+} \, p^\bot}^{\|\cdot \|_{\BM(\Hh_u)}}
		\end{equation}
	\end{conjecture}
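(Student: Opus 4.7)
The plan is to establish the two equalities separately: first the internal one $\varinjlim \Tt(\Lambda) = \overline{p^\bot \, \Ss^{\otimes \ZM^d_+} \, p^\bot}^{\|\cdot\|_{\BM(\Hh_u)}}$, and then the identification with the boundary algebra from \cite{JonesArxiv2023} under LTO1--LTO4.

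For the internal equality, the forward inclusion $\varinjlim \Tt(\Lambda) \subseteq \overline{p^\bot \Ss^{\otimes \ZM^d_+} p^\bot}$ is structural. By frustration-freeness, $p^\bot = \lim p_\Lambda^\bot$ in the weak$^\ast$-topology of $(\Ss^{\otimes \ZM^d_+})^{\ast \ast}$, and $p^\bot \leq p_\Lambda^\bot$ for every $\Lambda$. Hence for any $x p^\bot \in \Tt(\Lambda)$ the commutation $x p^\bot = p^\bot x$ forces $xp^\bot = p^\bot x p^\bot$, placing it in $p^\bot \Ss^{\otimes \ZM^d_+} p^\bot$. I would also verify that the natural maps $\Tt(\Lambda) \hookrightarrow \Tt(\Lambda')$ for $\Lambda \subseteq \Lambda'$ form an inductive system of finite-dimensional $C^\ast$-algebras, using that the constraint $xp^\bot = p^\bot x$ is preserved under enlarging $\Lambda$.

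For the reverse inclusion I would approximate $p^\bot a p^\bot$, for a local observable $a \in \Ss_\Delta$, by elements of $\Tt(\Lambda)$ with $\Lambda \supseteq \Delta$ growing. The naive candidate $x_\Lambda = p_\Lambda^\bot a p_\Lambda^\bot$ need not commute with $p^\bot$, and this is precisely where the LTO hypotheses must intervene: LTO1--LTO4 give, for observables supported well inside $\Lambda$, that $p_\Lambda^\bot b p_\Lambda^\bot$ is a scalar multiple of $p_\Lambda^\bot$ up to a vanishing bulk error, sharpening the statement of \S\ref{Sec:OptLTQO}. I would use this together with the partition of $\Lambda$ into a boundary collar and a bulk interior to decompose $p_\Lambda^\bot a p_\Lambda^\bot$ into a boundary-adapted piece that lies in $\Tt(\Lambda)$ and a bulk remainder vanishing in norm on $\BM(\Hh_u)$ as $\Lambda$ exhausts $\ZM_+^d$. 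Matching with the Jones boundary algebra is then the last step: their construction in \cite[Sec.~2.2]{JonesArxiv2023} is itself an inductive limit of finite-dimensional corner algebras built from local ground-state projections on regions touching the boundary, and LTO1--LTO4 are designed to guarantee that these corners are canonically isomorphic to our $\Tt(\Lambda)$ for a suitable choice of $\Lambda$. One identifies the generating algebras level by level, checks that the connecting morphisms agree, and concludes by universality of the inductive limit.

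The main obstacle, I expect, is the middle step: reconciling the purely local datum $x \in p_\Lambda^\bot \Ss_\Lambda p_\Lambda^\bot$ with the global commutation $x p^\bot = p^\bot x$. Without LTO, even operators supported far from the boundary need not commute with $p^\bot$ in any approximate sense, and one needs a quantitative bulk-to-global estimate strong enough to produce a true element of $\Tt(\Lambda)$ after a small correction, yet delicate enough not to erase the genuinely boundary contribution. Turning LTO1--LTO4 into such an estimate, and upgrading the convergence from weak$^\ast$ in $(\Ss^{\otimes \ZM^d_+})^{\ast\ast}$ to norm in $\BM(\Hh_u)$, is where the real technical work lies.
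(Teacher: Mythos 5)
Your plan runs into exactly the obstruction that forces the paper to leave this statement as a conjecture. The paper does \emph{not} prove it under LTO1--LTO4; it proves it only under a \emph{strengthened} LTO4, namely that $x\,p^\bot=0$ with $x\in B(\Lambda\Subset\Delta)$ already implies $x=0$ (standard LTO4 only gives this with $p^\bot_\Gamma$ for finite $\Gamma$ in place of $p^\bot$). The reason this matters is that the identification you want to make ``level by level'' between the Jones corners $B(\Lambda\Subset\Delta)$ and $\Tt(\Lambda)$ is implemented by $b\mapsto b\,p^\bot$ (well defined by frustration-freeness, since $p^\bot\le p_\Delta^\bot$, and surjective by LTO2), and its \emph{injectivity} is precisely the statement $x\,p^\bot=0\Rightarrow x=0$. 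Since $p^\bot$ is only an SOT/weak$^\ast$ limit of the $p_\Gamma^\bot$ from above, $x\,p^\bot=0$ does not imply $x\,p_\Gamma^\bot=0$ for any finite $\Gamma$, so standard LTO4 cannot be invoked; the paper states explicitly that the strengthened hypothesis does not follow from the standard one and must be checked model by model. Your proposal never confronts this point, yet without it the ``identify the generating algebras level by level'' step fails, and you would at best prove what the paper proves, under the same extra hypothesis.

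Separately, the middle step you flag as the technical core is both unproven and aimed with the wrong tool. Invoking an LTQO-type estimate that $p_\Lambda^\bot b p_\Lambda^\bot$ is approximately scalar is a \emph{bulk} statement; on the half-space the whole point is that compressions of observables near the boundary are not scalar, and what the LTO axioms actually provide is that such compressions land in the boundary net $B(\Lambda\Subset\Delta)$. The paper's route to the displayed equality is different and much more direct: under LTO1--LTO2 one shows $\Tt(\Lambda)=p^\bot\,\Ss_\Lambda\,p^\bot$, and then $\varinjlim\Tt(\Lambda)=\overline{p^\bot\,\Ss^{\otimes\ZM_+^d}\,p^\bot}$ follows from norm-density of local observables, with no collar/bulk decomposition or quantitative bulk-to-global estimate needed. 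So the two genuine gaps are: (i) the injectivity of $b\mapsto b\,p^\bot$, which is exactly the strengthened LTO4 you cannot avoid; and (ii) the reverse inclusion argument, which as sketched relies on a bulk LTQO mechanism that does not apply in the boundary setting.
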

	
	Below, we give a proof of the conjecture under a strengthened version of LTO4, which in the notation from \cite{JonesArxiv2023} says that, for any triple $\Lambda \Subset \Delta \subset \Gamma$ with $\partial \Lambda\cap \partial \Delta = \partial \Lambda\cap \partial \Gamma$, $x p^\bot_{\Gamma}=0$ implies $x=0$ for all $x \in B(\Lambda \Subset \Delta)$ (see \cite[Def.~2.10]{JonesArxiv2023}).  Here, $B(\Lambda \Subset \Delta)$ is the finite-dimensional $C^\ast$-algebra defined in \cite[Def.~2.9]{JonesArxiv2023}. We will require, in addition, that $x p^\bot=0$ together with $x\in B(\Lambda \Subset \Delta)$ also implies $x=0$.

	\begin{remark}
		{\rm Since $p$ is the SOT limit of $p_\Lambda$ in $\BM(\Hh_u)$, the strengthened LTO4 does not follow automatically from the standard LTO4. Instead, it needs to be checked for the available models seems, which we leave to the future. It certainly checks for product states.} \ $\Diamond$
	\end{remark}
	
	Note that, as opposed to \cite{JonesArxiv2023} where the focus was on the smallest $\Delta$ such that $\Lambda \Subset \Delta$, we take in \ref{Def:TL} the limit $\Delta \to \ZM^d_+$. Under the strengthened conditions, the two views are equivalent:
	
	\begin{proposition}\label{Prop:BvsT}
		If the strengthened LTO1-LTO4 conditions hold, then 
		\begin{equation}
			B(\Lambda \Subset \Delta) \simeq \Tt(\Lambda).
		\end{equation}
	\end{proposition}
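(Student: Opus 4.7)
The plan is to construct the isomorphism through the cut map
\begin{equation*}
\phi \colon B(\Lambda \Subset \Delta) \to \Tt(\Lambda), \qquad \phi(x) = xp^\bot.
\end{equation*}
This is the natural candidate because $B(\Lambda \Subset \Delta)$ is defined via the cut by $p_\Delta^\bot$ whereas $\Tt(\Lambda)$ is defined via the finer cut by $p^\bot = {\rm SOT}\text{-}\lim_\Gamma p_\Gamma^\bot$, and the strengthened LTO4 is precisely the statement that this refinement does not collapse any part of $B(\Lambda \Subset \Delta)$.

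First I would verify that $\phi$ is a well-defined $\ast$-homomorphism into $\Tt(\Lambda)$. For any $x \in B(\Lambda \Subset \Delta)$, the LTO hypotheses yield $[x, p_\Gamma^\bot] = 0$ for every sufficiently large $\Gamma \supseteq \Delta$ with $\partial\Lambda \cap \partial\Delta = \partial\Lambda \cap \partial\Gamma$; passing to the SOT-limit in $\BM(\Hh_u)$ gives $xp^\bot = p^\bot x$. To conclude $\phi(x) \in \Tt(\Lambda)$ one must also exhibit $y \in p_\Lambda^\bot \Ss_\Lambda p_\Lambda^\bot$ with $xp^\bot = yp^\bot$, and here I would use the LTO2 factorization together with frustration-freeness to trade the nominal support in $\Delta$ for a support in $\Lambda$ once $p^\bot$ has been absorbed. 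The $\ast$-homomorphism property then follows from $(p^\bot)^2 = p^\bot$ and the commutation just established. Injectivity of $\phi$ is immediate from the strengthened LTO4: if $xp^\bot = 0$ with $x \in B(\Lambda \Subset \Delta)$, then $x = 0$.

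For surjectivity, given $\tau = yp^\bot \in \Tt(\Lambda)$ with $y \in p_\Lambda^\bot \Ss_\Lambda p_\Lambda^\bot \subseteq p_\Lambda^\bot \Ss_\Delta p_\Lambda^\bot$ and $[y, p^\bot] = 0$, my approach is to take the candidate $x := p_\Delta^\bot y p_\Delta^\bot$ and verify, using LTO1--LTO3, that it lies in $B(\Lambda \Subset \Delta)$ and satisfies $xp^\bot = yp^\bot$; the identity $p_\Delta^\bot p^\bot = p^\bot$ (which comes from $p \leq p_\Delta$ by frustration-freeness) handles the equality after cutting. The main obstacle I expect is precisely this translation between the finite-level cut by $p_\Delta^\bot$ that defines $B(\Lambda \Subset \Delta)$ and the double-dual cut by $p^\bot$ that defines $\Tt(\Lambda)$: while injectivity is handed to us by the strengthened LTO4, showing that the natural representative $p_\Delta^\bot y p_\Delta^\bot$ actually belongs to $B(\Lambda \Subset \Delta)$ requires careful use of the LTO axioms to promote a limiting commutation to a finite-level one, with the strengthened LTO4 doing the decisive bookkeeping.
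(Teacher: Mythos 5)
Your proposal is correct and follows essentially the same route as the paper: the same cut map $b \mapsto b\,p^\bot$, well-definedness from the frustration-free absorption $p_\Delta^\bot p^\bot = p^\bot$ together with passing the finite-volume commutation $[x,p_\Gamma^\bot]=0$ to the SOT-limit in $\BM(\Hh_u)$, and injectivity directly from the strengthened LTO4. The one step you flag as the main obstacle in surjectivity — membership of the candidate preimage in $B(\Lambda \Subset \Delta)$ — is supplied verbatim by LTO2 (which identifies $B(\Lambda \Subset \Delta)$ with the full compressed corner, i.e.\ lets $x$ range over all of $p_\Lambda^\bot \Ss_\Lambda p_\Lambda^\bot$), exactly as in the paper's proof; no promotion of a limiting commutation to finite level is needed, and the strengthened LTO4 plays no role in that step.
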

	
	\begin{proof}
		We define the map
		\begin{equation}\label{Eq:BMap}
			B(\Lambda \Subset \Delta) \ni b \mapsto b \, p^\bot \in (\Ss^{\otimes \ZM^d_+})^{\ast \ast}.
		\end{equation}  
		From the defining properties of $B(\Lambda \Subset \Delta)$, $b = x p_\Delta^\bot$ for some $x \in p_\Lambda^\bot S_\Lambda p_\Lambda^\bot$, and $x p_\Gamma^\bot = p_\Gamma^\bot x$ for all $\Gamma \Supset \Lambda$. Using the frustration-freenes, we have
		\begin{equation}
			b p^\bot = (x p_\Delta^\bot) p^\bot = x p^\bot = p^\bot x ,
		\end{equation}
		where the last equality follows from a limit argument, because multiplication  by a fixed element is continuous in the strong topology of $\BM(\Hh_u)$. The above assures us that the map lands in $\Tt(\Lambda)$, hence it is well defined. Furthermore, LTO2 assumption assures us that $x$ from above can be any element from $p_\Lambda \Ss_\Lambda p_\Lambda$, hence same must be true for $x$ appearing in \eqref{Def:TL} and, as a consequence, \eqref{Eq:BMap} is automatically surjective. The map is also injective, because $b\, p^\bot =0$ implies $b=0$ by the strengthened LTO4. 
	\end{proof}
	
	\begin{corollary} If LTO1 and LTO2 hold, then  $\Tt(\Lambda) = p^\bot \Ss_\Lambda p^\bot$.
	\end{corollary}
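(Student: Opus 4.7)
The plan is to unfold \ref{Def:TL} and recognize that, once Proposition~\ref{Prop:BvsT} is available, both defining conditions on $x$ can be simplified. Specifically, the proof of \ref{Prop:BvsT} already extracts from LTO2 that every $x\in p_\Lambda^\bot\Ss_\Lambda p_\Lambda^\bot$ automatically satisfies $xp_\Gamma^\bot=p_\Gamma^\bot x$ for all $\Gamma\Supset\Lambda$; since $p_\Gamma^\bot\to p^\bot$ in the SOT of $\BM(\Hh_u)$ and multiplication by a fixed bounded operator is SOT-continuous on the bounded net $\{p_\Gamma^\bot\}$, one can pass to the limit to obtain the commutation $xp^\bot=p^\bot x$ automatically. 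Hence the side condition in the definition of $\Tt(\Lambda)$ is vacuous and
\begin{equation*}
\Tt(\Lambda)=\{xp^\bot:x\in p_\Lambda^\bot\Ss_\Lambda p_\Lambda^\bot\}.
\end{equation*}

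Next, frustration-freeness (LTO1) gives $p_\Lambda\leq p$ for every $\Lambda$, hence $p_\Lambda^\bot p^\bot=p^\bot p_\Lambda^\bot=p^\bot$. Combining with the automatic commutation, for any $s\in\Ss_\Lambda$ and $x:=p_\Lambda^\bot s p_\Lambda^\bot$, multiplying $xp^\bot=p^\bot x$ on the right by $p^\bot$ yields
\begin{equation*}
xp^\bot=p^\bot x p^\bot=p^\bot p_\Lambda^\bot s p_\Lambda^\bot p^\bot=p^\bot s p^\bot.
\end{equation*}
Read forward, this shows $\Tt(\Lambda)\subseteq p^\bot\Ss_\Lambda p^\bot$; read backward, it shows that every $p^\bot s p^\bot$ with $s\in\Ss_\Lambda$ is realized as $xp^\bot$ with $x=p_\Lambda^\bot s p_\Lambda^\bot\in p_\Lambda^\bot\Ss_\Lambda p_\Lambda^\bot$, giving the reverse inclusion.

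The only non-routine ingredient is the passage from $p_\Gamma^\bot$-commutation to $p^\bot$-commutation via a SOT limit, but this was already carried out inside the proof of Proposition~\ref{Prop:BvsT}. Once that fact is in hand, the corollary collapses to the algebraic identity $p_\Lambda^\bot p^\bot=p^\bot$ provided by frustration-freeness, and no further input is needed.
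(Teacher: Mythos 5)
Your inclusion $\Tt(\Lambda)\subseteq p^\bot\,\Ss_\Lambda\,p^\bot$ is correct (it only uses the commutation relation that is \emph{part of} Definition~\ref{Def:TL} together with $p^\bot p_\Lambda^\bot=p^\bot$ from frustration-freeness), but the reverse inclusion rests on a claim that is not established anywhere and is false in general: that every $x\in p_\Lambda^\bot\Ss_\Lambda p_\Lambda^\bot$ automatically satisfies $x\,p_\Gamma^\bot=p_\Gamma^\bot x$, so that the side condition in Definition~\ref{Def:TL} is vacuous. In the proof of Proposition~\ref{Prop:BvsT} the relation $x\,p_\Gamma^\bot=p_\Gamma^\bot x$ is quoted as one of the \emph{defining} properties of the particular representative $x$ of an element $b\in B(\Lambda\Subset\Delta)$; it is not a consequence of LTO2 valid for all $x\in p_\Lambda^\bot\Ss_\Lambda p_\Lambda^\bot$. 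What LTO2 provides is surjectivity: for every $s\in\Ss_\Lambda$ the compression $p_\Delta^\bot s\,p_\Delta^\bot$ lies in $B(\Lambda\Subset\Delta)$, i.e.\ it equals $x\,p_\Delta^\bot$ for \emph{some} $x$ obeying the commutation relations, and that $x$ is in general different from $p_\Lambda^\bot s\,p_\Lambda^\bot$. Were the commutation automatic, the corollary would hold for every frustration-free system with no LTO input at all, and the injectivity issue addressed by the strengthened LTO4 would be moot. A concrete failure occurs already in commuting-projector models such as the toric code: choose a single Pauli $s$ supported on an edge of $\Lambda$ touching no stabilizer contained in $\Lambda$ but anticommuting with a stabilizer contained in some larger $\Gamma$; then $x=p_\Lambda^\bot s\,p_\Lambda^\bot=s\,p_\Lambda^\bot$, so $x\,p^\bot=s\,p^\bot\neq 0$, while $p^\bot x\,p^\bot=p^\bot s\,p^\bot=0$, hence $x\,p^\bot\neq p^\bot x$.

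The repair follows the route implicit in Proposition~\ref{Prop:BvsT}: given $s\in\Ss_\Lambda$, use LTO2 to pick $x\in p_\Lambda^\bot\Ss_\Lambda p_\Lambda^\bot$ with $x\,p_\Gamma^\bot=p_\Gamma^\bot x$ for all admissible $\Gamma$ and $x\,p_\Delta^\bot=p_\Delta^\bot s\,p_\Delta^\bot$. The SOT limit argument already used in that proof gives $x\,p^\bot=p^\bot x$, and multiplying the identity $x\,p_\Delta^\bot=p_\Delta^\bot s\,p_\Delta^\bot$ on both sides by $p^\bot$, together with $p^\bot p_\Delta^\bot=p_\Delta^\bot p^\bot=p^\bot$, yields $x\,p^\bot=p^\bot x\,p^\bot=p^\bot s\,p^\bot$, so $p^\bot\Ss_\Lambda p^\bot\subseteq\Tt(\Lambda)$. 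Combined with your correct forward inclusion, this is exactly how the corollary follows from Proposition~\ref{Prop:BvsT}; note also that the ordering $p_\Lambda\leq p$ you attribute to LTO1 is just frustration-freeness of the system, not an LTO axiom.
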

	
	\begin{theorem}
		If the strengthened LTO1-LTO4 hold, then \ref{Conj:BAlg} checks.
	\end{theorem}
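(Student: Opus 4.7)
The plan is to realize both algebras as $C^\ast$-inductive limits indexed by $\Lambda \in {\rm K}(\ZM_+^d)$ with $\partial \Lambda \neq \emptyset$, using the isomorphism $B(\Lambda \Subset \Delta) \simeq \Tt(\Lambda)$ from \ref{Prop:BvsT} together with the concrete identification $\Tt(\Lambda) = p^\bot \Ss_\Lambda p^\bot$ established in the preceding Corollary. First, for $\Lambda \subseteq \Lambda'$ the inclusion $\Ss_\Lambda \subseteq \Ss_{\Lambda'}$ induces $\Tt(\Lambda) \subseteq \Tt(\Lambda')$ as $\ast$-subalgebras of $\BM(\Hh_u)$, and hence
\begin{equation}
\varinjlim \Tt(\Lambda) \;=\; \overline{\textstyle\bigcup_{\Lambda} p^\bot \Ss_\Lambda p^\bot}^{\|\cdot\|_{\BM(\Hh_u)}} \;=\; \overline{p^\bot \Ss^{\otimes \ZM_+^d} p^\bot}^{\|\cdot\|_{\BM(\Hh_u)}},
\end{equation}
where the last equality uses the norm-density of $\bigcup_\Lambda \Ss_\Lambda$ in $\Ss^{\otimes \ZM_+^d}$, the isometricity of the universal representation, and the norm-continuity of two-sided multiplication by the bounded operator $p^\bot$.

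Second, I would transport the directed system $\{B(\Lambda \Subset \Delta)\}$ defining the boundary algebra of \cite{JonesArxiv2023} through the isomorphism of \ref{Prop:BvsT}, which under the strengthened LTO4 is independent of $\Delta$ and hence descends to a system indexed by $\Lambda$ alone. A typical connecting morphism $B(\Lambda \Subset \Delta) \hookrightarrow B(\Lambda' \Subset \Delta')$ in \cite{JonesArxiv2023} comes from the inclusion $p_\Lambda^\bot \Ss_\Lambda p_\Lambda^\bot \subseteq p_{\Lambda'}^\bot \Ss_{\Lambda'} p_{\Lambda'}^\bot$, and a direct computation shows that on an element $b = x p_\Delta^\bot$ the composition of this inclusion with the isomorphism $b \mapsto bp^\bot$ yields the map $xp^\bot \in \Tt(\Lambda) \mapsto xp^\bot \in \Tt(\Lambda')$ already identified in the first step. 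Thus the two directed systems are canonically isomorphic, so their inductive limits agree, which gives the asserted identity.

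The principal technical point is the faithful identification of the connecting maps: one has to check carefully that the embeddings used in \cite{JonesArxiv2023} to assemble the boundary algebra really correspond, after applying $\cdot\, p^\bot$, to the set-theoretic inclusions $p^\bot \Ss_\Lambda p^\bot \hookrightarrow p^\bot \Ss_{\Lambda'} p^\bot$. Injectivity of each floor-wise map $b \mapsto bp^\bot$ is precisely where the strengthened LTO4 enters; without it one would, at best, obtain a surjection from the boundary algebra of \cite{JonesArxiv2023} onto the right-hand side of the claimed identity. The remaining ingredients, namely naturality of the inclusions and the density/continuity argument in the first step, are routine.
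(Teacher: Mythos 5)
Your proposal is correct and follows essentially the same route as the paper: identify the Jones--Naaijkens--Penneys--Wallick boundary algebra as the inductive limit of the $B(\Lambda \Subset \Delta_\Lambda)$'s, transport this tower through the isomorphism of \ref{Prop:BvsT} (with $\Tt(\Lambda)=p^\bot \Ss_\Lambda p^\bot$ from the Corollary) onto the tower of corners with the obvious inclusions as structure maps, and identify $\varinjlim \Tt(\Lambda)$ with $\overline{p^\bot\, \Ss^{\otimes \ZM^d_+}\, p^\bot}$ by density of local observables and norm-continuity of compression by $p^\bot$; in fact you supply more detail on the structure maps than the paper's two-sentence argument. The only cosmetic slip is the phrase that the connecting morphism ``comes from the inclusion $p_\Lambda^\bot \Ss_\Lambda p_\Lambda^\bot \subseteq p_{\Lambda'}^\bot \Ss_{\Lambda'} p_{\Lambda'}^\bot$'' (these corners are not literally nested since the projections differ); what you actually use, the operator inclusion $B(\Lambda \Subset \Delta)\subseteq B(\Lambda'\Subset\Delta')$ followed by $b\mapsto bp^\bot$, is the correct and intended statement.
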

	
	\begin{proof}
		The boundary $C^\ast$-algebra defined in \cite[Sec.~2.2]{JonesArxiv2023} is the direct limit of finite-dimensional $C^\ast$-algebras $B(\Lambda \Subset \Delta_\Lambda)$, with $\Delta_\Lambda$ being the smallest among the subsets $\Delta$ of $\ZM^d_+$ such that $\Lambda \Subset \Delta$. This direct tower is isomorphic to that of the finite-dimensional $C^\ast$-algebras $\Tt(\Lambda) \simeq p^\bot \Ss_\Lambda p^\bot$, where the structure maps are the obvious inclusions.
	\end{proof}
	
	In conclusion, if the strengthened assumption LTO4 holds, then our definition \ref{Def:BProp} of the boundary algebra coincides with the one defined in \cite{JonesArxiv2023} if we insist on selecting only the elements from the quasi-local $C^\ast$-algebra ({\it i.e.} we take the intersection with $\Ss^{\otimes \ZM^d_+}$). Since the norm closure of $p^\perp \Ss^{\otimes \ZM^d_+} p^\perp$ is not a $C^\ast$-algebra without LTO1-LTO4 assumptions, we believe that \ref{Def:BProp} is the correct definition of the boundary algebra under the sole frustration-freeness assumption. It automatically has the following property, very much desired for the holographic principle \cite{BeigiCMP2011,KitaevCMP2012,KongICMP2013}:
	
	\begin{proposition}
		The proposed boundary algebra and the hereditary $C^\ast$-subalgebra of $\{p_\Lambda\}$ as well as the bulk algebra $\Ss^{\otimes \ZM^d}$ are $C^\ast$-Morita equivalent, whenever the first mentioned algebra is not empty.
	\end{proposition}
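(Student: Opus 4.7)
The plan is to deduce the three-way Morita equivalence from two standard tools: Brown's theorem on hereditary $C^\ast$-subalgebras and the simplicity of the UHF algebra $\Ss^{\otimes \ZM^d_+}$. First, I would invoke \ref{Prop:Comm} to decompose the relative commutant as $\Bb^c(\Ss^{\otimes \ZM^d_+}) = \CM \cdot 1 \oplus \Bb^\perp$, where $\Bb^\perp := p^\bot \Ss^{\otimes \ZM^d_+} p^\bot \cap \Ss^{\otimes \ZM^d_+}$ is the annihilator of $\Bb$. The clause ``whenever the first mentioned algebra is not empty'' is to be read as $\Bb^\perp \neq \{0\}$, since the scalar summand $\CM \cdot 1$ is always present and carries no boundary content.

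Next, I would exploit simplicity. Since $\Ss^{\otimes \ZM^d_+}$ is a UHF algebra and therefore simple, every nonzero hereditary $C^\ast$-subalgebra of it is automatically full, i.e., not contained in any proper closed two-sided ideal. Applied to $\Bb$, which is nonzero because $\{p_\Lambda\}$ consists of proper projections, and to $\Bb^\perp$, nonzero by our standing assumption, this shows that both sit inside $\Ss^{\otimes \ZM^d_+}$ as full hereditary $C^\ast$-subalgebras of a separable unital $C^\ast$-algebra.

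The final step is to apply Brown's theorem (L.G. Brown, Pacific J. Math. 71, 1977), which asserts that a full hereditary $C^\ast$-subalgebra of a $\sigma$-unital $C^\ast$-algebra is strongly Morita equivalent to the ambient algebra; equivalently, by Brown--Green--Rieffel, stably isomorphic to it after tensoring with the compacts. Applied separately to $\Bb$ and to $\Bb^\perp$, this gives that each is strongly Morita equivalent to $\Ss^{\otimes \ZM^d_+}$, and transitivity of the strong Morita equivalence relation closes the proof for the pair $(\Bb, \Bb^\perp)$ as well.

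The only real obstacle, in my view, is interpretational rather than analytical: the $C^\ast$-algebra $\CM \oplus \Bb^\perp$ (the full relative commutant) is not, strictly speaking, Morita equivalent to the simple algebra $\Ss^{\otimes \ZM^d_+}$, since Morita equivalence preserves the lattice of closed two-sided ideals. I would therefore phrase the conclusion as a Morita equivalence of the essential (non-central) component of the boundary algebra with both $\Bb$ and the bulk, making explicit that the scalar summand $\CM \cdot 1$ in \ref{Prop:Comm} is the image of the unit of the ambient algebra and does not contribute boundary degrees of freedom. Once this is made precise, the entire mathematical content has already been packaged in \ref{Prop:Comm}, and the present proposition becomes a one-line consequence of Brown's theorem together with transitivity.
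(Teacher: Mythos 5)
Your proposal is correct and follows essentially the same route as the paper: simplicity of $\Ss^{\otimes \ZM^d_+}$ makes both $\Bb$ and the annihilator part of the commutant full hereditary $C^\ast$-subalgebras, and Brown's theorem then yields stable isomorphism, hence Morita equivalence, with the ambient algebra. The only link you leave implicit is the final identification $\Ss^{\otimes \ZM^d_+}\cong\Ss^{\otimes \ZM^d}$ (both are $d^\infty$ UHF algebras), which the paper invokes to reach the bulk algebra; your caveat about the central summand $\CM\cdot 1$ is well taken and matches the paper's implicit reading of the boundary algebra as (essentially) the hereditary part $p^\bot\,\Ss^{\otimes \ZM^d_+}\,p^\bot\cap\Ss^{\otimes \ZM^d_+}$.
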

	
	\begin{proof}
		Both the boundary $C^\ast$-algebra and the hereditary $C^\ast$-subalgebra are full hereditary $C^\ast$-subalgebra of $\Ss^{\otimes \ZM^d_+}$, hence a result by Brown \cite{BrownPJM1977} assures us that it is stably isomorphic with $\Ss^{\otimes \ZM^d_+}$,\footnote{Meaning their tensorings with the algebra of compact operators over a separable infinite dimensional Hilbert space are isomorphic.} which at its turn is isomorphic to $\Ss^{\otimes \ZM^d}$.
	\end{proof}
	
	We believe that a direct connection with the conclusions of \cite{BeigiCMP2011,KitaevCMP2012,KongICMP2013} can be achieved if equivarience w.r.t. a quantum symmetry is imposed.
	
	\subsection{Images in the Cuntz semigroup} The modern formulation of the Cuntz semigroup \cite{CowardJRAM2008}, originally introduced in \cite{CuntzMA1978}, was shown to also classify the open projections relative to a natural comparison relation \cite{OrtegaJFA2011}. If $\Ss = M_2(\CM)$, the Cuntz semigroup of the quasi-local algebra is \cite[Example 4.12]{GardellaEMS2025}
	\begin{equation}
		{\rm  Cu}(\Ss^{\otimes G})= \NM[\tfrac{1}{2}] \sqcup (0,\infty].
	\end{equation}
	We want to calculate the image of $\Pp_o^F((\Ss^{\otimes G})^{\ast \ast})$ under the $\rm Cu$ functor. The following gives a partial answer:
	
	\begin{proposition}
		Let $p = \lim p_\Lambda$ where $\{p_\Lambda\}$ is the frustration free system from \ref{Ex:VBS}. Then $[p]_{\rm Cu} = [1]_{\rm Cu}$.
	\end{proposition}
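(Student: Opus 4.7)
The plan is to identify $[p]_{\rm Cu}$ via the unique tracial state on the UHF algebra $\Ss^{\otimes G}$. Since $\Ss = M_2(\CM)$ and $G$ is infinite and amenable, $\Ss^{\otimes G}$ is the UHF algebra of type $2^\infty$, carrying a unique normalized tracial state $\tau$. Under the identification ${\rm Cu}(\Ss^{\otimes G}) = \NM[\tfrac{1}{2}] \sqcup (0,\infty]$ cited from \cite[Example~4.12]{GardellaEMS2025}, the Cuntz class of a positive element is encoded by the value of the dimension function $d_\tau$, so the proof reduces to verifying $d_\tau(p) = 1 = d_\tau(1)$.

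First I would extend $\tau$ to a normal state $\widetilde\tau$ on $(\Ss^{\otimes G})^{\ast\ast}$. Because $\{p_\Lambda\}$ is a monotone increasing net of projections with $p_\Lambda \uparrow p$ in the strong operator topology, normality gives $d_\tau(p) = \widetilde\tau(p) = \sup_\Lambda \tau(p_\Lambda)$. Each $p_\Lambda$ sits in $\Ss_\Lambda \simeq M_{2^{|\Lambda|}}(\CM)$, where $\tau$ restricts to the normalized matrix trace, so
\[
\tau(p_\Lambda) \;=\; 1 - \frac{\dim \Vv_\Lambda}{2^{|\Lambda|}}.
\]
The defining span of $\Vv_\Lambda$ in \ref{Ex:VBS} then yields $\dim \Vv_\Lambda \leq |\Cc((\Lambda \times T)\setminus \Lambda_B)| \leq J_{\max}^{|T|(|\Lambda|-|\Lambda_T|)}$, with $J_{\max} = \max_{t \in T}|J(t)|$, where the exponent comes from the identity $|\Lambda_B| = |T|\cdot|\Lambda_T|$ (injectivity of $(g,t) \mapsto (tg,t)$ on $\Lambda_T \times T$).

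Next, amenability of $G$ supplies a Folner sequence $\{\Lambda_n\}$ with $|\Lambda_n \setminus \Lambda_{n,T}|/|\Lambda_n| \to 0$, along which
\[
\frac{\dim \Vv_{\Lambda_n}}{2^{|\Lambda_n|}} \;\leq\; \exp\!\big(|T|(\log J_{\max})(|\Lambda_n|-|\Lambda_{n,T}|) - (\log 2)\,|\Lambda_n|\big)\;\longrightarrow\;0.
\]
Hence $\tau(p_{\Lambda_n}) \to 1$, and monotonicity over all $\Lambda \in {\rm K}(G)$ forces $\widetilde\tau(p) = 1$, completing the dimension calculation.

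The main subtlety lies in the interpretation of the asserted equality inside the hybrid semigroup $\NM[\tfrac{1}{2}] \sqcup (0,\infty]$: generically $p \notin \Ss^{\otimes G}$, so $[p]_{\rm Cu}$ lands in the non-compact part $(0,\infty]$ whereas $[1]_{\rm Cu}$ is the compact element $1 \in \NM[\tfrac{1}{2}]$, and the identification is to be read via the common value of $d_\tau$. A secondary technical point is making the combinatorial estimate on $\dim \Vv_\Lambda$ interact cleanly with the Folner control on $|\Lambda \setminus \Lambda_T|$ uniformly in the fixed generating set $T$, which is what drives the subexponential smallness of the defect $\dim \Vv_\Lambda/2^{|\Lambda|}$.
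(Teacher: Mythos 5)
Your proof follows essentially the same route as the paper's: both reduce the claim to showing that the normal extension $\tilde\tau$ of the unique trace satisfies $\tilde\tau(p)=\lim_\Lambda \tau(p_\Lambda)=1$ (the paper invoking \cite{OrtegaJFA2011} to convert equality of traces of open projections into the Cuntz-semigroup identification), and both obtain this by showing that ${\rm Tr}(p_\Lambda^\bot)=\dim \Vv_\Lambda$ is negligible compared with $\dim\Hh_\Lambda$. Your write-up is in fact more explicit than the paper's one-line bound ${\rm Tr}(p_\Lambda^\bot)\leq \dim\Ss_{\Lambda_B}$ --- the count $\dim\Vv_\Lambda\leq J_{\max}^{|T|(|\Lambda|-|\Lambda_T|)}$ via $|\Lambda_B|=|T||\Lambda_T|$, combined with F{\o}lner control of $|\Lambda\setminus\Lambda_T|$ and monotonicity of $\tau(p_\Lambda)$, is exactly what makes the limit rigorous --- and the caveat you raise (that $p$ is a non-compact open projection, so the asserted equality must be read through the trace/dimension-function comparison of open projections) is precisely the interpretive step the paper delegates to the cited result of \cite{OrtegaJFA2011}.
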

	
	\begin{proof}
		The space $T(\Ss^{\otimes G})$ of semifinite traces reduces to the unique normalized trace $\tau$ on $\Ss^{\otimes G}$. Let $\tilde \tau$ be its unique extension to a normal tracial state on $(\Ss^{\otimes G})^{\ast \ast}$. Then, according to \cite{OrtegaJFA2011}, $p \sim_{\rm Cu} p'$ if and only if $\tilde \tau(p)=\tilde \tau(p')$. We have
		\begin{equation}
			\tilde \tau (p)= \lim \tau(p_\Lambda)= \lim \frac{{\rm Tr}(p_\Lambda)}{{\rm dim} \, \Ss_\Lambda}= 1- \lim \frac{{\rm Tr}(p_\Lambda^\bot)}{{\rm dim} \, \Ss_\Lambda}.
		\end{equation}
		The statement follows because ${\rm Tr}(p_\Lambda^\bot) \leq {\rm dim} \, \Ss_{\Lambda_B}$.
	\end{proof}
	
	\begin{remark}
		{\rm According to the above, the Cuntz semigroup cannot distinguish the frustration-free open projections. However, a $G$-equivariant version of the Cuntz group could be an effective tool for classifying $G$-systems of projections, {\it e.g.} based on the conclusions from \ref{Re:GInv}. Equivariant versions of the Cuntz semigroup already exist for compact groups \cite{GardellaPLMS2017}.} \ $\Diamond$
	\end{remark}

\end{document}